\newtheorem{corollary}{Corollary}
\begin{document}
%
% paper title
% Titles are generally capitalized except for words such as a, an, and, as,
% at, but, by, for, in, nor, of, on, or, the, to and up, which are usually
% not capitalized unless they are the first or last word of the title.
% Linebreaks \\ can be used within to get better formatting as desired.
% Do not put math or special symbols in the title.
\ShortHeadings{Adaptive Incentive Design}{Ratliff and Fiez}
\title{Adaptive Incentive Design}
%
%
% author names and IEEE memberships
% note positions of commas and nonbreaking spaces ( ~ ) LaTeX will not break
% a structure at a ~ so this keeps an author's name from being broken across
% two lines.
% use \thanks{} to gain access to the first footnote area
% a separate \thanks must be used for each paragraph as LaTeX2e's \thanks
% was not built to handle multiple paragraphs
%

\author{Lillian J.~Ratliff  and Tanner Fiez\\ \addr{{\tt
    $\{$ratliffl,fiezt$\}$@uw.edu}\\ University of Washington\thanks{This work is supported by NSF Award CNS-1656873. T.~Fiez was also
    supported in part by an NDSEG Fellowship.}}}
\maketitle

% As a general rule, do not put math, special symbols or citations
% in the abstract or keywords.
\begin{abstract}
       We apply control theoretic and optimization techniques to adaptively design
   incentives. In particular,  we consider the problem of a \emph{\coordinator} with an objective that depends on data
   from strategic decision makers. The {\coordinator} 
does not know the process by which the strategic agents make decisions. 
Under the assumption that the agents are utility maximizers, we model their
interactions as a non--cooperative game and utilize the Nash
equilibrium concept as well as myopic update rules to model the selection of
their
decision. By parameterizing the agents' utility functions and the incentives
offered, we develop an algorithm that the {\coordinator} can employ to
learn the agents' decision-making processes while simultaneously designing
incentives to change their response to a more desirable response from the
{\coordinator}'s perspective. We provide convergence results for this
algorithm both in the noise-free and noisy cases and present illustrative examples.

\end{abstract}

% Note that keywords are not normally used for peerreview papers.
%\begin{IEEEkeywords}
%IEEE, IEEEtran, journal, \LaTeX, paper, template.
%\end{IEEEkeywords}

% For peer review papers, you can put extra information on the cover
% page as needed:
% \ifCLASSOPTIONpeerreview
% \begin{center} \bfseries EDICS Category: 3-BBND \end{center}
% \fi
%
% For peerreview papers, this IEEEtran command inserts a page break and
% creates the second title. It will be ignored for other modes.
%\IEEEpeerreviewmaketitle

\thispagestyle{empty}
%%%%%%%%%%%%%%%%%%%%%%%%%%%%%%%%%%%%%%%%%%%%%%%%%%%%%%%%%%%%%%%%%%%%%%%%%%%%%%%%
\section{Introduction}
\label{sec:intro}
Due in large part to the increasing adoption of digital
technologies,
many applications that once treated users are passive entities must now consider
users as active participants.
In many application domains, a planner or coordinator, such as a platform
provider (e.g.,
transportation network companies), is tasked with
optimizing the performance of a system that people are actively interacting
with, often in real-time.  For instance, the planner may want to drive the system
performance to a more desirable behavior. While perhaps on competing ends of the
spectrum, both revenue maximization and social welfare maximization fall under
this umbrella. 

A significant challenge in optimizing such an objective is the
fact that human preferences are unknown \emph{a priori} and perhaps their
solicited responses, on which the system depends, may not be reported truthfully (i.e.~in
accordance with their true preferences) due to issues related privacy or trust.
%, on the other hand, the planner may be
%interested in ensuring the social welfare of all participants is maximized. 
%In the emerging data market, classical learning and optimization tools no longer
%suffice for efficient operations since data sources are often strategic in how
%they respond to queries from system coordinators.

We consider a class of incentive design problems in which a {\planner} 
does not know the underlying preferences, or decision--making process, of the agents that it is trying to coordinate. In the economics
literature these types of problems are known as problems of \emph{asymmetric
information}---meaning that the involved parties do not possess the same
information sets and, as is often the case, one party posses some information
 to which the other party is not privy.  
 
 The particular type of information
 asymmetry which we consider, i.e. where the preferences of the agents are
 unknown to the planner, results in a problem of \emph{adverse selection}. %~\cite{akerlof:1970aa}. 
 The classic example of adverse selection is the \emph{market for
 lemons}~\cite{akerlof:1970aa} in which the seller of a used car knows more
 about the car than
 the buyer. There are a number of components that are hidden from the buyer such
 as the maintenance upkeep history, engine health, etc. Hence, the buyer could
 end up with a \emph{lemon} instead of a \emph{cherry}---i.e.~a broken down
 piece of junk versus a sweet ride.
Such problems have long been studied by economists.

The incentive design problem has also been explored by the control community,
usually in the context of (reverse) Stackelberg games (see,
e.g.,~\cite{Ho:1981aa,Ho:1984aa,Liu:1992aa}). 
%Interaction between incentives and information structures has been considered~\cite{}.
More recently, dynamic incentive design in the context of
applications such as the power grid~\cite{Zhou:2017aa} or network congestion
games~\cite{Barrera:2015aa}.
We take a slightly different view by
employing techniques from learning and control to develop an adaptive method of
designing incentives in a setting where repeated decisions are made
by multiple, competiting agents whose preferences are unknown to the designer,
yet they are subjected to the incentives. 

 We assume that
agents, including the {\coordinator}, are cost minimizers\footnote{
%or alternatively,
%utility maximizers---while in this chapter 
    While in the remainder, we formulate the entire problem
given all agents are cost minimizers, the utility maximization formulation is
completely analogous.}. The decision space of the agents are assumed continuous.
We model each agent's cost as a parametric function that is dependent on the
choices of other agents and is modified by an incentive chosen by the planner.
%. The planner is allowed
%to design an incentive to shape the choices of the agents. 
 The {\coordinator} not knowing the underlying preferences of the
agents is tantamount to it not knowing  the value of the parameters 
of the agents' cost functions. Such parameters can be thought of as the
\emph{type} of the agent.

%In the following sections,
%by taking a
%non{\ds}Bayesian approach, 
We formulate an adaptive incentive design problem in which the planner
iteratively learns the agents' preferences and optimizes the incentives offered
to the agents so as to drive them to a more desirable set of choices.
%learning and incentive design
We derive an algorithm to solve this problem and provide theoretical results on convergence for both the case when the agents play according to a Nash equilibrium as well as the
case when the agents play myopically---e.g.~the agents play according to a
myopic update rule common in the theory of learning in
games~\cite{fudenberg:1998aa}. Specifically, we formulate an algorithm for iteratively estimating
preferences and designing incentives. By adopting
tools from adaptive control and online learning, we show that the
algorithm converges under reasonable assumptions.

% by taking a non-Bayesian approach,

The results have strong ties to both the adaptive control
literature~\cite{goodwin:1984aa,kumar:1986aa,sastry:1999aa} and the online
learning literature~\cite{cesa-bianchi:2006aa,nemirovski:2009aa,raginsky:2010aa}. The former gives us tools to do tracking of both the observed output
(agents' strategies) and the control input (incentive mechanism). It also allows
us to go one step further and prove parameter convergence under some
additional assumptions---\emph{persistence of excitation}---on the problem formulation and, in particular, the
utility learning and incentive design algorithm. The latter
provides tools that allow us to generalize the algorithm and get faster
convergence of the observed actions of the agents to a more desirable or even
socially optimal outcome. 

The remainder of the paper is organized as follows. We first introduce the
problem of interest in Section~\ref{sec:formulation}. In Sections~\ref{sec:incent_util}
and \ref{sec:incent_incent}, we mathematically formulate the utility learning and incentive
design problems and provide an algorithm for adaptive incentive design. We
present convergence results for the Nash and myopic-play cases in
Section~\ref{sec:main} after which we draw heavily on adaptive control techniques to
provide convergence results when the planner receives noisy observations. We
provide illustrative numerical examples in Section~\ref{sec:examples} and
conclude in Section~\ref{sec:discussion}.
%which we provide convergence results 
%We follow that with
%the \emph{utility learning} problem for both the case when the agents play according to
%a Nash strategy and when they play myopically in Section~\ref{sec:utillearn}. We
%formulate the incentive
%design problem in Section~\ref{sec:incent-incent} again for both the case when
%agents play Nash and when they play myopically. In Section~\ref{sec:alg}, we
%present the utility learning and incentive design algorithm and study its
%convergence in the case without noise in Section~\ref{subsec:converge-myopic} and with
%noise in Section~\ref{sec:noise}. In Section~\ref{sec:examples}, we present
%several simple, but illustrative, examples. Finally, we wrap up with discussion and future
%directions in Section~\ref{sec:discussion}.
% Q for me: what happens if we take a non-bayesian approach to the privacy
% problem and we look at the solution to which we converge. do we get any
% interesting properties? is this basically the thing I was looking at?

%in the context of privacy-aware
%contract design for the demand-side of the smart grid. However, we will take a
%Bayesian approach 

%%%%%%%%%%%%%%%%%%%%%%%%%%%%%%%%%%%%%%%%%%%%%%%%%%%%%%%%%%%%%%%%%%%%%%%%%%%%%%%%
\section{Problem Formulation}
\label{sec:problemformulation}
\label{sec:formulation}
We consider a problem in which there is a coordinator or planner with an objective, $f_P(x,v)$, that it
desires to optimize by selecting $v\in V\subset \mb{R}^n$; however, this objective is a function
of $x=(x_1, \ldots, x_n)\in X\subset \mb{R}^n$ which is the response of $n$ non--cooperative
strategic agents each providing response $x_i\in X_i\subset \mb{R}$. 
Regarding the dimension of each $x_i$, we assume without loss of generality that
they are scalars. 
All the theoretical results and insights apply to the more general setting where
each agent's choice is of arbitrary finite dimension.
%All the theory extends to the more general case, yet increases
%the 
%We 
%refer to the strategic \emph{agents}.
%For example, these agents could be data
%sources in a crowdsourcing application or autonomous agents with a competiting
%tasks  

The goal of the {\planner} is to design a mechanism to \emph{coordinate} the
agents by incentivizing them choose an $x$ that ultimately leads to minimization of
$f_P(x,v)$.
Yet, the
coordinator does not know the decision making process by which these agents
arrive at their collective response $x$. As a consequence there is \emph{asymmetric
information} between the agents and the {\planner}.

Let us suppose that the each agent has some type $\theta$ and a process
$\mc{M}_i^\theta$ that determines their \emph{choice}
$x_i(\theta_i)$. This process is dependent on the other agents and any mechanism
designed by the {\planner}. The classical approach in the economics literature
is to solve this problem of so-called \emph{adverse
selection}~\cite{bolton:2005aa} by designing mechanisms that induce agents' to
take actions in a way that corresponds with their true decision-making process
$\mc{M}_i^\theta$. In this approach, it is assumed that the coordinator has a
\emph{prior}
on the type space of the agents---e.g., a probability distribution on $\theta$.
The coordinator then designs a mechanism (usually static) based on this assumed
prior that encourages agents to act in accordance with their true preferences. 

We take an alternative view in which we adopt control theoretic and optimization
techniques to adaptively learn the agents' types while designing incentives to
coordinate the agents around a more desirable (from the point of view of the
{\planner}) choice $x$.
Such  a framework departs from  one-shot decisions that assume all prior
information is known at the start of the engagement and opens up opportunities
for mechanisms that are dynamic and can \emph{learn} over time.

%incentive design 
We thus take the view that, in order to optimize its
objective, the {\planner} must learn the decision-making process and
simultaneously design a
mechanism that induces the agents to respond in such a way such  that the
{\planner}'s
objective is optimized. 

The {\coordinator} first optimizes its objective function to find the
\emph{desired} response $x^d$ and the \emph{desired} $v^d$. That is, it
determines the optimizers of its cost as if $x$ and $v$ are its decision
variables.
%value of its own choice variable
%$v^d$. 
Of course, it may be the case that the set of optimizers of $f_P$
contains more than one pair $(x^d,v^d)$; in this case, the coordinator must
choose amongst the set of optimizers. In order to realize $(x^d,v^d)$, the
{\coordinator} must incentivize the agents to play $x^d$ by synthesizing  mappings
$\gamma_i\in \Gamma\subset C^2(X, \mb{R})$ for each $i\in \mc{I}=\{1, \ldots,
n\}$ such that $\gamma_i(x^d)=v_i^d\in V_i\subset \mb{R}$ and $x^d=(x_1^d,
\ldots, x_n^d)$ is the collective response of the agents under their true
processes $\{\mc{M}_i^\theta\}_{i\in \mc{I}}$.

We will consider two scenarios: (i) Agents play
according to a \emph{Nash equilibrium} strategy; (ii) Agents play according to a \emph{myopic}
update rule---e.g.~approximate gradient play or  fictitious play\cite{fudenberg:1998aa}.

%\subsection{Formulation---Nash Play}
%%%%%%%%%%%%%%%%%%%%%%%%%%%%%%%%%%%% NASH CASE %%%%%%%%%%%%%%%%%%%%%%%%%%%%%%%%
In the first scenario, if the agents are assumed to play according to a Nash equilibrium strategy,
then $x^d$ must be a Nash equilibrium in the game induced by
$\gamma=(\gamma_1, \ldots, \gamma_\n)$. In particular, using the notation $x_{-i}=\{x_1,
\ldots, x_{i-1}, x_{i+1}, \ldots, x_n\}$, let agent $i$ have
\emph{nominal} cost $f_i\in C^2(X, \mb{R})$ and \emph{incentivized} cost
\eq{eq:incentcost}{
f_i^{\gamma_i}(x_i,
x_{-i})=f_i(x_i, x_{-i})+\gamma_i(x_i, x_{-i}).}
The desired response
$x^d$ is a Nash equilibrium of the incentivized game $(f_1^{\gamma_1}, \ldots,
f_n^{\gamma_n})$ if 
\begin{equation}
  x_i^d\in \arg\min_{x_i\in X_i} f_i^{\gamma_i}(x_i, x_{-i}^d). %\left\{f_i(x_i, x_{-i}^d)+\gamma_i(x_i,
  %x_{-i}^d)\right\}.
  \label{eq:bric}
\end{equation}
%where $f_i^{\gamma_i}\in C^2(X, \mb{R})$ is the incentivized cost of agent $i$ defined by 
%\begin{equation}
%  f_i^{\gamma_i}(x)=f_i(x)+\gamma_i(x)
%  \label{eq:incentcost}
%\end{equation}
%with $f_i(x_i, x_{-i})$ being the \emph{nominal cost} of agent $i$, $x_{-i}=(x_1,
%\ldots, x_{i-1}, x_{i+1}, \ldots, x_n)$, and agent $i$ selects $x_i$.
Hence, $x_i^d$ is a \emph{best response} to $x_{-i}^d$ for each
$i\in \mc{I}$. 
Formally, we define a Nash equilibrium as follows.

\begin{definition}[Nash Equilibrium of the Incentivized Game]
  A point $x\in X$ is a {Nash equilibrium} of the incentivized
  game $(\fg{1}, \ldots, \fg{n})$ if
  \begin{equation}
    \fg{i}(x)\leq \fg{i}(x_i', x_{-i}), \ \ \forall \
    x_i'\in X_i.
    \label{eq:nash}
  \end{equation}
\end{definition}
If, for each $i\in \mc{I}$, the inequality in \eqref{eq:nash} holds only for
a neighborhood $W_i\subset X_i$ of $x_i$, then $x$ is a \emph{local Nash
equilibrium}. 

We make use of a sub-class of Nash equilibria called
\emph{differential Nash equilibria}, as they can be characterized locally and
thus, amenable to computation. Let the \emph{differential game form}~\cite[Definition~2]{ratliff:2015aa}
$\omega:X\rar \mb{R}^n$ be defined by $\omega(x)=(D_1f_1(x), \ldots,
D_nf_n(x))$.
\begin{definition}[{\cite[Definition~4]{ratliff:2015aa}}]
  A strategy $x=(x_1, \ldots, x_n)\in X$ is a {differential Nash
  equilibrium} of $(\fg{1}, \ldots, \fg{n})$ if $\omega(x)=0$ and $D_{ii}^2f_i(x)$ is positive definite for
  each $i\in \pl$.
  \label{def:diffnash}
\end{definition}
Differential Nash equilibria are known to be generic amongst local Nash
equilibria~\cite{ratliff:2014aa}, structurally stable and attracting under
t\^{a}tonnement~\cite{ratliff:2015aa}.

%Mirroring the concept of $\vep$-Karush-Khun-Tucker
%($\vep$-KKT)~\cite{andreani:2010aa,andreani:2011aa,dutta:2013aa} used in
%optimization, we assume that the observed $x^k$ are approximate Nash equilibrium
%in the sense that the first order conditions $\omega(x)=0$ in the above
%definition are approximately satisfied---that is, given $\vep>0$, 
%\[\|\omega(x^k)\|\leq \vep\]

%In addition, we assume that the incentive is also parameterized using the basis
%functions $\{\psi_{i,1}, \ldots, \psi_{i,s_i}\}$, $\psi_{i,j}\in C^2(X,
%\mb{R})$ and parameters $\alpha_i=[\alpha_{i,1} \ \cdots \ \alpha_{i,s_i}]^T$,
%that is
%\begin{equation}
%  \gamma_i(x)=\sum_{j=1}^{s_i}\psi_{i,j}(x)\alpha_{i,j}
%  \label{eq:paramgam}
%\end{equation}
%The basis functions for each $\gamma_i$ are selected by the
%coordinator and the parameters are to be designed iteratively. As with the
%nominal cost basis functions, define $\Psi_i(x)=[\psi_{i,1}(x)\ \cdots \
%\psi_{i,s_i}]^T$.
%%%%%%%%%%%%%%%%%%%%%%%%%%%%%%%%%%%% END NASH CASE %%%%%%%%%%%%%%%%%%%%%%%%%%%%%%

%%%%%%%%%%%%%%%%%%%%%%%%%%%%%%%%%%%% END Myopic CASE%%%%%%%%%%%%%%%%%%%%%%%%%%%%%
In the second scenario, we assume the agents play according to a \emph{myopic
update rule}~\cite{fudenberg:1998aa} defined as follows.  
%if the agents play according to a myopic update rule such as
%gradient play, approximate fictititous, or any other of a number of update
%rules~\cite{fudenberg:1998aa}, then 
%the update of the agents provided the
Given the incentive ${\gamma}_i$, agent $i$'s response
is determined by the mapping  
\begin{equation}
  \ggnt{i}(x)=g_i(x)+{\gamma}_i(x)
  \label{eq:gg}
\end{equation}
%where $\tilde{\gamma}_i$. 
%We remark that depending on the myopic update rule, the incentive mapping
%$\gamma_i$ may be derived from 
In addition, function $\ggnt{i}\in C^2(X^{k+1},\mb{R})$
maps the history, from time $0$ up to
time $k$, of the agents' previous collective response  to the
current response where $X^{k+1}$ is the
product space $X\times
\cdots \times X$ with $k+1$ copies of the space $X$.
We aim to design an algorithm in which the {\coordinator} performs a \emph{utility
learning} step and an \emph{incentive design} step such that as the
{\coordinator}
iterates through the algorithm, agents' collective observed response converges to the
desired response $x^d$ and the value of the incentive mapping evaluated at $x^d$ converges to 
the desired value $v^d$. In essence, we aim to ensure \emph{asymptotic} or
\emph{approximate} incentive compatibility. In the sections that follow, we describe the utility
learning and the incentive design steps of the algorithm and then, present the
algorithm itself.

%
%The {\coordinator} receives observations
%$\xx{k+1}$ at each time $k$. In the Nash--play case, $\xx{k+1}$ is a Nash equilibrium of
%$(\fgt{1}{k}, \ldots, \fgt{n}{k})$  and, in the myopic--play case, $\xx{k+1}$ is the
%myopic update.
%The coordinator uses past observations
%$\{\xx{t}\}_{t=0}^{k+1}$ 
%and past incentives 
%$\{\gam{t}\}_{t=0}^k$ 
%to
%formulate an estimate of $\tit{i}{k+1}$ for each $i\in\{1, \ldots, \n\}$. Given
%these estimates, the {\coordinator} then finds the parameters $\al{k+1}$ that
%induce the agents to play the desired response $x^d$. 

%%%%%%%%%%%%%%%%%%%%%%%%%%%%%%%%%%%%%%%%%%%%%%%%%%%%%%%%%%%%%%%%%%%%%%%%%%%%%%%%
\section{Utility Learning Formulation}
\label{sec:incent_util}
We first formulate a general utility learning problem, then we give examples in the the {\nplay} and
{\mplay} cases.

%We assume that at each iteration $k$, the response of agent $i$ satisfies a generalized linear model of the form
%\[
%\braket{\Phi(x^k)}{\theta^*_i} + \braket{\Psi(x^k)}{\alpha_i^k} = 0,
%\]

%\textcolor{red}{Change from co-vector to vector}
\subsection{Utility Learning Under Nash--Play}
\label{subsec:UL-nash}
We assume that the {\coordinator} knows the parametric structure of the agents'
nominal cost functions
and receives observations of the agents' choices over
time.
%; however, it does not know the parameters of the agents' nominal cost functions. In
%particular, 
That is, for each $i\in\mc{I}$, we assume that the nominal cost function of agent $i$
has the form of a generalized linear model
\begin{equation}\textstyle
  f_i(x)=\braket{\Phi(x)}{\theta_i^\ast} = \sum_{j=1}^{m} \phi_{j}(x)\ts_{i,j}
  \label{eq:parametric}
\end{equation}
where $\Phi(x)$ is a vector of basis functions
%\footnote{We remark that, without
%    issue, it is possible to allow each
%    agent to have its own set of basis functions $\{\phi_{ij}\}_{j=1}^{m_i}$;
%however, we leave this aside in order to simplify notation.}
 given by
$\Phi(x)=[\phi_{1}(x)\
\cdots \ \phi_{m}(x)]^T$, assumed to be known to the {\coordinator}, and
$\ts_i\in \mb{R}^m$
is a parameter vector, $\ts_i=[\ts_{i,1}\ \cdots\ \ts_{i, m}]^T$, assumed
unknown to the {\coordinator}. 

While our theory is developed for this case, we show through simulations
in Section~\ref{sec:examples}
that the {\planner} can be agnostic to the agents' decision-making processes and
still drive them to the desired outcome.

Let the set of basis functions for the agents' cost functions be denoted by $\mc{F}_\phi$. We assume that elements of $\mc{F}_{\phi}$ are $C^2(X, \mb{R})$ and \emph{Lipschitz continuous}. Thus the derivative of any function
  in $\mc{F}_\phi$ is uniformly bounded.

The admissible
set of parameters for agent $i$,  denoted by $\Theta_i$, is assumed to be a compact
subset of $\mb{R}^{m}$ and to contain the true parameter vector $\ts_i$. We will use the notation $f_i(x; \theta)$ when we need to 
make the dependence on the parameter
$\theta$ explicit. 

Note that we are limiting the problem of asymmetric information to
one of \emph{adverse selection}~\cite{bolton:2005aa} since it is the parameters
of the cost functions that are unknown to the coordinator. 
%It may be the case
%that agents 

Similarly, we assume that the admissible incentive mappings have a generalized linear model of the form
\begin{equation}\textstyle
  \gamma_i(x) = \braket{\Psi(x)}{\g_i} = \sum_{j=1}^{\sn} \psi_{j}(x)\g_{i,j}
  \label{eq:gammaparam}
\end{equation}
where $\Psi(x)=[\psi_{1}(x)\ \cdots \ \psi_{\sn}(x)]^T$ is a vector of basis functions, belonging to a finite collection $\mc{F}_\psi$, and assumed to be $C^2(X, \mb{R})$ and \emph{Lipschitz continuous}, and
$\g_i=(\g_{i,1}\ \cdots\ \g_{i,\sn})\in \mb{R}^{\sn}$ are parameters.

\begin{remark}
This framework can be generalized to use different subsets of the basis
functions for different players, simply by constraining some of the parameters
$\theta_i$ or $\g_i$ to be zero. We choose to present the theory with a common
number of basis functions across players in an effort to minimize the amount of
notation that needs to be tracked by the reader.
\end{remark}

At each iteration $k$, the {\coordinator} receives the collective response from the
agents, i.e.~$x^k=(x_1^k, \ldots, x_n^k)$, and has the 
incentive parameters $\alpha^k = (\alpha_1^k, \dots, \alpha_n^k)$ that were issued. 

%---that is,
%we assume that at each iteration $k$, the response of agent $i$ satisfies a generalized linear model of the form
%\[
%\braket{\Phi(x^k)}{\theta^*_i} + \braket{\Psi(x^k)}{\alpha_i^k} = 0,
%\]

 We denote the set of
observations up to time $k+1$ by $\{\xx{t}\}_{t=0}^{k+1}$---where $x^0$ is the
observed Nash equilibrium of the nominal game (without incentives)---and the set of incentive
parameters~$\{\alpha^t\}_{t=0}^k$. %For iteration $k$, 
%At iteration $k$, the coordinator has
Each of the observations is assumed to be an Nash equilibrium.
%\textcolor{red}{come back to this. I think we do not need with $x^0$ to be a
%nash without incentives or we can get rid of $\alpha^0$. Also need to come back
%and define the dimension of $x$}.

%At iteration $k$, given $\{\xx{t}\}_{t=1}^{k+1}$---assumed
%to be Nash equilibrium---and
%$\{\al{t}\}_{t=0}^{k}$, the planner forms an estimate 
%$\tit{i}{k}$ for each $i\in \pl$. 
%Our goal is to formulate an online algorithm to estimate
%$\ts_i$ for each $i\in\pl$ as each new observation $\x^{(t)}$
%%---assumed to
%%be a Nash equilibrium---
%is received. The observations $u^{(t)}$ are assumed to be Nash equilibria.
%we get new observations
% 
%To this end, we write the problem in a more generic form. 

For the incentivized game
$(f_1^{\gamma_i}, \ldots, f_\n^{\gamma_\n})$, a Nash equilibrium $x$ necessarily
satisfies the first- and second-order conditions
%we have the following necessary conditions for a Nash equilibrium
%$x$:
$D_i\fg{i}(x)=0$ and
$D_{ii}^2\fg{i}(x)\geq 0$ for each $i\in\pl$ (see~\cite[Proposition~1]{ratliff:2015aa}).

Under this model, we assume that
the agents are playing a local Nash---that is, each $\xx{k}$ is a local Nash
equilibrium so that
%for each $t\in\{0,
%\ldots, k\}$, we have that 
\begin{equation}
  0=D_i\fg{i}(\xx{k})=\braket{D_i\Phi(\xx{k})}{\ts_i}+\braket{D_i\Psi(\xx{k})}{\ala{k-1}{i}},
  \label{eq:zero-cond-nash}
\end{equation}
for $k\in \mb{N}_+$ 
and $0=\braket{D_i\Phi(\xx{0})}{\ts_i}$,
where $D_i\Phi(\xx{k})=[D_{i}\phi_1(\xx{k})\ \cdots \
D_{i}\phi_m(\xx{k})]^T$ with $D_i\phi_j$ denoting the derivative of $\phi_j$ with
respect to $x_i$ and where  we define $D_i\Psi(\xx{k})$ similarly.
By an abuse of notation, we treat derivatives as vectors instead of co-vectors.

As noted earlier, without loss of
 generality, we take $x_i\in \mb{R}$. This makes the notation significantly
 simpler and the
 presentation of results much more clear and clean. All details for the general
 setting are provided in~\cite{ratliff:2015ab}. 
 %Note that this implies that $p=n$.

%\[D_i\Phi_{i}(x^{(t+1)})^T=\bmat{D_{i_1}\Phi_i(x^{(t+1)})^T\\ \vdots\\
%D_{i_{p_i}}\Phi_i(x^{(t+1)})^T}.\]
In addition, for each $i\in \pl$, we have
 \begin{align}
     0&\leq D_{ii}^2\fg{i}(\xx{k})=\braket{D_{ii}^2\Phi(\xx{k})}{
     \ts_i}+\braket{D_{ii}^2\Psi(\xx{k})}{\ala{k-1}{i}}
   \label{eq:secondord-cond}
 \end{align}
 for $k\in\mb{N}_+$ 
 and $0\leq \braket{D_{ii}^2\Phi(\xx{0})}{
 \ts_i}$ 
 where $D_{ii}^2\Phi$ and $D_{ii}^2\Psi$ are the second derivative of
 $\Phi$ and $\Psi$, respectively, with respect to $x_i$. 
%  where
%  \begin{align*}
%    &D_{ii}^2\Phi(\xx{k},
%   \ts_i)\\&\quad=\bmat{D_{i_1i_1}^2\Phi(\xx{k})^T\ts_i
%   & \cdots & D_{i_{p_i}i_1}^2\Phi(\xx{k})^T\ts_i\\
%   \vdots & \ddots & \vdots\\
%   D_{i_1i_{p_i}}^2\Phi(\xx{k})^T\ts_i &\cdots &
%   D_{i_{p_i}i_{p_i}}^2\Phi(\xx{k})^T\ts_i}
%   \label{eq:dii-phi}
% \end{align*}
% and $D_{ii}^2\Psi_i(\xx{k}, \ala{k-1}{i})$ is defined similarly.
%\begin{equation}
%  D_{ii}^2\Psi_i(x^{(t+1)}, \alpha_i^{(t)})=\bmat{D_{i_1i_1}^2\Psi_i(x^{(t+1)})^T\alpha_i^{(t)}
%& \cdots & D_{i_{p_i}i_1}^2\Psi_i(x^{(t+1)})^T\alpha_i^{(t)}\\
%   \vdots & \ddots & \vdots\\
%%   D_{i_1i_{p_i}}^2\Psi_i(x^{(t+1)})^T\alpha_i^{(t)} &\cdots &
%   D_{i_{p_i}i_{p_i}}^2\Psi_i(x^{(t+1)})^T\alpha_i^{(t)}}.
%   \label{eq:dii-psi}
% \end{equation}
%%In an effort to unify notation across the Nash--play and myopic--play cases,  we 

Let the admissible set of $\theta_i$'s at iteration $k$ be denoted by
$\Theta_i^{k}$. They are defined using the second--order
conditions from the assumption that the observations at times $t\in\{1, \ldots,
k\}$ are local Nash equilibria and are
%To make this more concrete, consider the case where each $p_i=1$, $i\in\pl$.
%Then we can define 
%\begin{equation*}
%  A_i(\xx{1,k})=\bmat{D_{ii}^2\Phi_i(\xx{1})^T\\ \vdots \\
%  D_{ii}^2\Phi_i(\xx{k})^T} \ \text{and } \
%  b_i(\xx{1,k}, \ala{0,k-1}{i})=\bmat{D^2_{ii}\Psi_i(x^{(1)})^T \alpha_i^{(0)}\\ \vdots
%  \\ D^2_{ii}\Psi_i(x^{(k)})^T
%  \alpha_i^{(k-1)}}
%\end{equation*}
%where we use the fact that $D_{ii}^2\Phi_i(x^{(t)}, \ts_i)=D_{i_1
%i_1}^2\Phi_i(x^{(t)})^T\ts_i=D_{i
%i}^2\Phi_i(x^{(t)})^T\ts_i$ and we use the notation $\xx{1,k}=(\xx{1}, \ldots, \xx{k})$ and similarly
%for $\ala{0,k-1}{i}$.
%Then using the above notation, the set of admissible $\theta_i$'s are defined
%by
%\begin{equation}
%  {\Theta}^{k}_i=\{\theta_i\in \Theta_i|\ A_i(\xx{1,k})\theta_i+b_i(\xx{1,k},
%  \ala{0,k-1}{i})\geq
%0\}\subseteq \Theta_i.
%  \label{eq:thetas}
%\end{equation}
%%Then the set of 
%% pretty sure these are closed and thus compact\ldots
%%Note that ${\Theta}^{(k)}_i\subseteq  {\Theta}_i^{(k-1)}\subseteq \cdots
%%\subseteq  {\Theta}_i^{(0)}\subseteq \Theta_i$. 
%%Thus, it is clear that the sets are nested. 
%Similarly, for higher dimensional
%strategy spaces, i.e. $p_i>1$, 
given by
\begin{align}
  {\Theta}^{k}_i=&\{\theta_i\in \Theta_i|\
      \braket{D_{ii}^2\Phi(\xx{t})}{\theta_i}+\braket{D_{ii}^2\Psi(\xx{t})}{
          \ala{t-1}{i}}\geq
  0,\notag\\
  &t\in\{1,\ldots, k\}\}\subseteq \Theta_i.
  \label{eq:thetak-2}
\end{align}
These sets are nested, i.e.~${\Theta}^{k}_i\subseteq  {\Theta}_i^{k-1}\subseteq \cdots
\subseteq  {\Theta}_i^{0}\subseteq \Theta_i,$
 since at each iteration an additional constraint is added
 to the previous set.
These sets are also convex since they are defined by semi{\ds}definite constraints~\cite{boyd:2004aa}.
%where we use the notation 
Moreover, $\ts_i\in
\Theta_i^{k}$ for all $k$ since, by assumption, each observation $\xx{k}$ is
a local Nash equilibrium.

Since the {\coordinator} sets the incentives, 
given the response
$x^{k}$, they can compute the quantity $-\braket{D_i\Psi_{i}(\xx{k})}{\ala{k-1}{i}}$, which is equal to
\[
-\braket{D_i\Psi_{i}(\xx{k})}{\ala{k-1}{i}} = \braket{D_i\Phi_{i}(\xx{k})}{\ts_i}
\]
by the first order Nash condition~\eqref{eq:zero-cond-nash}. Thus, if we let 
$y_i^{k+1}=-\braket{D_i\Psi_i(\xx{k+1})}{\ala{k}{i}}$
and $\xi_i^{k}=D_i\Phi_i(\xx{k+1})$, we have
\begin{equation*}
    y_i^{k+1}=\braket{\xi_i^{k}}{\ts_i}.
  \label{eq:y}
\end{equation*}
Then, the coordinator has observations $\{y_i^t\}_{t=1}^{k+1}$ and \emph{regression
vectors} $\{\xi_i^t\}_{t=0}^{k}$.
We use the notation $\xi^k=(\xi_1^k, \ldots, \xi_n^k)$ for the regression
vectors of all the agents at iteration $k$.

\subsection{Utility Learning Under Myopic--Play}
\label{subsec:UL-myopic}
As in the Nash--play case, we assume the {\coordinator} knows the parametric
structure of the myopic update rule. That is to say, the nominal update function
$g_i$ is parameterized by $\theta_i$ over basis functions $\{{\phi}_{1},
\ldots, {\phi}_{m}\}\subset \mc{F}_\phi$ and the incentive mapping ${\gamma}_{i}^k$
at iteration $k$ is parameterized by
$\alpha_i^k$ over basis functions
$\{{\psi}_{1}, \ldots, {\psi}_{s}\}\subset \mc{F}_\psi$. 
%Define
%${\Phi}_i=[{\phi}_{1} \ \cdots \ {\phi}_{m}]^T\in \mb{R}^{m}$ and
%${\Psi}_i=[{\psi}_{1} \ \cdots \ {\psi}_{s}]^T\in
%\mb{R}^{s}$.
%We define $\mc{F}_\phi$ and $\mc{F}_\psi$ to be the set of $C^2$--\emph{Lipschitz continuous}
%objective and incentive basis functions respectively, as before.
%for the agents' cost functions be denoted by
%  , that is $\phi_{i,j}\in\mc{F}_\phi$ where $\phi_{i,j}:X\rar\mb{R}$ for each $j\in \{1, \ldots, m_i\}$ for
%  all $i\in \pl$. 
We
assume the {\coordinator} observes the initial response $x^0$ and we denote
the past responses up to iteration $k$ by $x^{(0,k)}=(x^0,\ldots, x^k)$.
The general architecture for the myopic update rule is given by
\begin{align}
  \bx{i}{k+1}&=g_i^{{\gamma_i}}(\bx{i}{(0,k)},
 \bx{-i}{(0,k)})\\
 &=\braket{{\Phi}_i(\bx{i}{(0,k)},
 \bx{-i}{(0,k)})}{\theta_i^\ast}+\braket{{\Psi}_i(\bx{i}{(0,k)},
 \bx{-i}{(0,k)})}{\ala{k}{i}}.
  \label{eq:myopic_update}
\end{align}
Note that the update rule does not need to depend on the whole sequence of past
response. It could depend just on the past response $x^k$ or a subset, say
$x^{(j,l)}$ for $0\leq j\leq l\leq k$.

As before, we denote the set of admissible parameters for player
$i$ by $\Theta_i$ which we assume to be a compact subset of $\mb{R}^{m}$.
%Again, we
%assume that $\ts_i\in \Theta_i$ for each $i\in\pl$. 
In contrast to the {\nplay} case, our admissible set  of
parameters is no long time varying so that $\Theta_i^k=\Theta_i$ for all $k$.

%Recall the myopic--play update rule given in \eqref{eq:myopic_update} for each
%$i\in\pl$ where $\td{\Phi}_i(\xx{k})^T\in \mb{R}^{p_i\times m_i}$ and
%$\td{\Psi}_i(\xx{k})^T\in\mb{R}^{p_i\times n_i}$. Note that $\td{\Phi}_i$ and
%$\td{\Psi}_i$ could on the entire past response which we denote by
%$\xx{(0,k)}$ or some subset of the past responses $\xx{(j,l)}$ for some
%$j,l\in\{0, \ldots, k\}$ such that $j\leq l$. 
Keeping consistent with the notation of the previous sections, we let
%\eq{eq:yup-mp}{
$\xi_i^k={\Phi}_i(\xx{k})$  and 
$y_i^{k+1}=\bx{i}{k+1}-\braket{{\Psi}_i(\xx{k})}{\ala{k}{i}}$ 
%\eq{eq:regup-mp}{
so that the myopic update rule can be re-written as
%\begin{equation}
$y_i^{k+1}=\braket{\xi_i^k}{\ts_i}.$
Analogous to  the previous case, 
the coordinator has observations $\{y_i^t\}_{t=1}^{k+1}$ and regression
vectors $\{\xi_i^t\}_{t=0}^{k}$.
Again, we use the notation $\xi^k=(\xi_1^k, \ldots, \xi_n^k)$ for the regression
vectors of all the agents at iteration $k$. 

Note that the form of the myopic update rule is general enough to accommodate a
number of game-theoretic learning algorithms including approximate fictitious play and
gradient play~\cite{fudenberg:1998aa}.
%\textcolor{red}{Input example of gradient play and fictitious play}
%  \label{eq:general-y-up}
%\end{equation}
%At iteration $k$, the {\coordinator}
%uses data $\{\xx{t}\}_{t=0}^{k+1}$ and $\{\al{t}\}_{t=0}^{k}$ to estimate
%$\tu{k}$ which is used to design $\al{k}$.
%Notice that we assume the
%coordinator observes $\xx{0}$---this was not needed in the Nash--play case.

\subsection{Unified Framework for Utility Learning}
We can describe both the Nash--play and myopic--play cases in a unified
framework as follows. At iteration $k$, the {\coordinator} receives a response
$x^{k}$ which lives in the set $\mc{X}_k(x^{(0,k-1)},
\theta^\ast,\alpha^{k-1})$ which is either the set of local Nash equilibria of the
incentivized game  or the unique response determined by the
incentivized myopic update rule at iteration $k$. The {\coordinator} uses the past
responses $\{x^t\}_{t=0}^{k}$ and incentive parameters $\{\alpha^t\}_{t=0}^{k}$
to generate the set of observations $\{y^t\}_{t=1}^{k+1}$ and regression vectors
$\{\xi^t\}_{t=0}^k$. 

The utility learning problem is formulated as an online optimization problem in
which parameter updates are calculated following the gradient of a loss
function.
%At each iteration, the algorithm will have two main steps: update the estimate
%of $\theta_i$ and choose parameters $\alpha_i$. 
For each $i\in\pl$, consider
the loss function given by
\eq{eq:loss_function}{\textstyle
    \ell(\tit{i}{k})=\frac{1}{2}\|y^{k+1}_i-\braket{\xi_i^k}{\tit{i}{k}}\|_2^2} 
that evaluates the error between the predicted observation and the true observation
at time $k$
for each player.

In order to minimize this loss, we introduce a well-known generalization of the projection
operator.
%Before we describe the parameter estimate update step, let us first review some concepts from online optimization. 
Denote by $\partial f(x)$ the set of 
\emph{subgradients} of $f:X\rar\mb{R}$ at $x$. A convex continuous function $\beta:\Theta\rar\mb{R}$ is a \emph{distance
generating} function with modulus $\nu>0$ with respect to a reference norm $\|\cdot\|$, if the set 
$\Theta^\circ=\{\theta\in \Theta| \partial \beta(\theta)\neq
\emptyset\}$
is convex and restricted to $\Theta^\circ$, $\beta$ is continuously differentiable and
\emph{strongly convex} with
parameter $\nu$, that is
%\eq{eq:strong_convexity}{
\begin{equation*}\braket{\theta'-\theta}{\nabla\beta(\theta')-\nabla\beta(\theta)}
\geq
\nu\|\theta'-\theta\|^2, \ \ \forall \ \theta', \theta\in \Theta^\circ.
\end{equation*}
%}
%As an example, consider $\beta(\theta)=\frac{1}{2}\|\theta\|_2^2$ (modulus
%$\nu=1$
%with respect to $\|\cdot\|_2$, $\Theta^\circ=\Theta$). 
The function $V:\Theta^\circ\times \Theta\rar \mb{R}_+$, defined by
\begin{equation*}
  V(\theta_1, \theta_2)=\beta(\theta_2)-\left(
  \beta(\theta_1)+\nabla\beta(\theta_1)^T(\theta_2-\theta_1) \right)
  \label{eq:Vdef-1}
\end{equation*}
is the Bregman divergence~\cite{bregman:1967aa}
associated
with $\beta$. By definition, $V(\theta_1,
\cdot)$ is non-negative and strongly convex with modulus $\nu$. Given a subset $\Theta^k \subset \Theta$ and a point $\theta \in \Theta^k$, the mapping
 $P_{\Theta^k, \theta}:\mb{R}^{m}\rar
\Theta^{k}$ defined by
\begin{equation}
P_{\Theta^k, \theta}(g)=\arg\min_{\theta'\in \Theta^{k}}\left\{
 \braket{g}{\theta'-\theta}+V(\theta, \theta') \right\}.
  \label{eq:proxmap-def-theta}
\end{equation}
is the prox-mapping induced by $V$ on $\Theta^k$. This mapping is well-defined,
the minimizer is unique by strong convexity of $V(\theta, \cdot)$, and is a
contraction at iteration $k$,~\cite[Proposition~5.b]{moreau:1965aa}.
%~\textcolor{red}{cite my lemma}.
%$P_{\theta}:\mb{R}^{m}\rar \Theta^\circ$ associated with $\beta$ and a point $\theta\in
%\Theta^\circ$ defined by
%\begin{equation}
%P_\theta(g)=\arg\min_{\theta'\in \Theta}\left\{
%  g^T(\theta'-\theta)+V(\theta, \theta') \right\}
%  \label{eq:proxmap-def}
%\end{equation}
%is a \emph{prox--mapping}.
%The prox--mapping is well--defined and 
%is a contraction~\cite[Proposition~5.b]{moreau:1965aa}.
%Similarly, we define a is 

%We denote by $\Pi_{\Theta}$ the metric projection operator onto the set $\Theta$,
%that is, 
%\begin{equation}
%  \Pi_{\Theta}(\theta)=\arg\min_{\theta'\in \Theta}\|\theta-\theta'\|_2.
%  \label{eq:projop}
%\end{equation}
%For $\beta(\theta)=\frac{1}{2}\|\theta\|_2^2$, we have that
%$P_{\theta}=\Pi_{\Theta}(\theta-\theta')$.
%Note that for $\beta(\theta)=\frac{1}{2}\|\theta\|_2^2$, we have $V(\theta,
%\theta')=\frac{1}{2}\|\theta-\theta'\|_2^2$, $\nu=1$,
%$\|\cdot\|_\ast=\|\cdot\|_2$.

%If the sets $\Theta^{(k)}$ are nests, i.e. $$\cdots \subseteq \Theta^{(k+1)}\subseteq \Theta^{(k)}\subseteq \cdots
%\subseteq \Theta^{(0)}\subseteq \Theta,$$ then it is straightforward to see that
%the time--varying prox--mapping
%is also a 
%contraction.

%\subsubsection*{Online estimation of utility functions}
Given the loss function $\ell: \Theta \to \mb{R}$, a positive, non-increasing sequence of learning rates $(\eta_k)$, and a distance generating function $\beta$, the parameter estimate of each $\theta_i$ is updated at iteration $k$ as follows 
\begin{equation}
  \tit{i}{k+1}=P_{\Theta_i^k, \tit{i}{k}}^{k+1}\left( \eta_k\nabla \ell(\tit{i}{k}) \right).
  \label{eq:theta_up_prox}
\end{equation}
Note that if the distance generating function is $\beta(\theta)=
\frac{1}{2}\|\theta\|_2^2$, then the associated Bregman divergence is the Euclidean distance, $V(\theta, \theta') = \frac{1}{2}\|\theta - \theta'\|_2^2$, and the corresponding prox--mapping is the Euclidean
projection on the set $\Theta_i$, which we denote by
$P_{\Theta_i, \theta}(g)=\Pi_{\Theta_i}(\theta - g)$, so that
\begin{equation}
  \tit{i}{k+1}=\Pi_{\Theta_i^{k}}\left( \tit{i}{k}-\eta_k\nabla
  \ell(\tit{i}{k}) \right).
  \label{eq:theta_up}
\end{equation}

\section{Incentive Design Formulation}
\label{sec:incent_incent}
\label{sec:incent-incent}
In the previous section, we described the 
 parameter update step that will be used in our utility learning and incentive
 design problem. We now describe how the incentive parameters $\alpha^k$ for
 each iteration are selected. 
%In this section, we formulate the incentive design problem  for both
%the Nash--play and myopic--play cases. 
In particular, at iteration $k$, after updating parameter estimates for each agent, the data
the
{\coordinator} has includes the past observations $\{y^t\}_{t=1}^{k+1}$, incentive
parameters $\{\alpha^t\}_{t=0}^k$, and has an estimate of each
$\tit{i}{k+1}$ for $i\in\pl$. 
%The data the planner has for the Nash--play case
%includes the choices of the players $\{
%\xx{t}\}_{t=1}^{k+1}$ and the parameters of the incentives that have been issued
%$\{\al{t}\}_{t=0}^{k}$. The data the planner has in the myopic--play case includes
%the choices of the players $\{\xx{t}\}_{t=0}^{k+1}$ and the parameters of the
%incentives that have been issued $\{\al{t}\}_{t=0}^{k}$. 
The {\coordinator} then uses the past data along with the 
parameter estimates to find an $\al{k+1}=(\ala{k+1}{1},
\ldots, \ala{k+1}{\n})$ such that the incentive mapping for each player
evaluates to $v_i^d$ at $x_i^d$ and 
%$\gamma_i^{k+1}(x^d)=v^d_i$ and 
$x^d\in \mc{X}_{k+1}(x^{(0,k+1)}, \theta^\ast, \alpha^{k+1})$.
%is the
%response of player $i$ at iteration $k+1$.
This is to say that if the agents are rational and play Nash, then %in the Nash--play case, 
$x^d$ is a local Nash equilibrium of the game 
$(\fg{1}(x; \tit{1}{k+1}), \ldots, \fg{\n}(x; \tit{\n}{k+1}))$
where $\fg{i}(x;\tit{i}{k+1})$ denotes the incentivized cost of player $i$
parameterized by $\tit{i}{k+1}$. On the other hand, if the agents are myopic,
then, for each
$i\in\pl$,
$x^d_i=\braket{{\Phi}_i(\xx{k})}{\tit{i}{k+1}}+\braket{{\Psi}_i(\xx{k})}{\ala{k+1}{i}}.$
%\eq{eq:des_up}{x^d_i=\td{\Phi}_i(\xx{k})^T\tit{i}{k+1}+\td{\Psi}_i(\xx{k})^T\ala{k+1}{i}.}
%Let us flesh out the details for each case

In the following two subsections, for each of these cases, we describe how
$\al{k+1}$ is selected.

\subsection{Incentive Design: Nash--Play}
%Using the estimates $\theta_i^{k+1}$ for each $i\in\{1, \ldots, \n\}$, the coordinator seeks 
%an incentive mapping $\gam{k+1}=(\ga{1}{k+1}, \ldots, \ga{\n}{k+1})$ 
%such that it induces the desired Nash equilibrium $x^d$ and
%$\ga{i}{k+1}(x^d)=\y_i^d$ for each $i\in\pl$. G
Given that
$\gamma_i^{k+1}$ is parameterized by $\alpha^{k+1}_i$, the goal is to
find $\ala{k+1}{i}$ for each $i\in\pl$ such that $x^d$ is a local Nash equilibrium of the game 
\begin{align*}
  &  \big(\braket{\Phi_1(x)}{\tit{1}{k+1}}+\braket{\Psi_1(x)}{\ala{k+1}{1}},
    \ldots,\braket{\Phi_\n(x)}{\tit{\n}{k+1}}+\braket{\Psi_\n(x)}{\ala{k+1}{\n})}\big)
\end{align*}
and such that $\braket{\Psi_i(x^d)}{\ala{k+1}{i}}=\y_i^d$ for each $i\in\pl$. 

%Recall  the definition of a differential Nash equilibrium provided in
%Definition~\ref{def:diffnash}.
%We remark that the conditions in the definition are also sufficient conditions for a
%local Nash equilibrium~\cite[Theorem~1]{ratliff:2015aa}. 

%We make the following assumption:
\begin{assumption}
  For every $\{\theta_i\}_{i=1}^\n$ where $\theta_i\in \Theta_i$, there exist
  $\alpha_i\in \mb{R}^{s}$ for
    each $i\in\pl$ such that
  $x^d$ is the induced differential Nash equilibrium in the game 
  $(\fg{1}(x;\theta_1), \ldots, \fg{\n}(x;\theta_\n))$
  and $\gamma_i(x^d)=\y^d_i$ where $\gamma_i(x)=\braket{\Psi_i(x)}{\alpha_i}$. 
%  Further, the true
%  parameters $\theta_i$ for each $i\in\{1, \ldots, \n\}$
%  satisfy
%  \begin{equation}
%%    \bmat{d_{ii}\Phi(x^{d})^T\\ \vdots \\
%%    d_{ii}\Phi(x^{d})^T}\theta_i+\bmat{d_{ii}\Psi(x^{d})^T u^{d}\\ \vdots \\
%%    d_{ii}\Psi(x^{d})^T
%%  u^{(k)}}\geq 0
%A_i(\x^{d})\theta_i+b_i(\x^{d}, \alpha_i)\geq 0.
%    \label{eq:second-order_cond}
%  \end{equation}
  \label{ass:nash-1}
\end{assumption}

We remark that the above assumption is not restrictive in the following sense.
  Finding $\alpha_i$ that induces the desired Nash equilibrium and results in
  $\gamma_i$ evaluating to the desired incentive value amounts to finding
  $\ala{k+1}{i}$ such that the first{\ds} and second{\ds}order sufficient
  conditions for a local Nash equilibrium are satisfied given our estimate of
  the agents' cost functions. 
  %Recall that for each $i\in\pl$ the first{\ds}order conditions are
  That is, for each $i\in\pl$, we need to find $\ala{k+1}{i}$ satisfying 
  \begin{equation}
    \left\{
    \begin{array}{ll}
      0_{2\times 1}
      &=\zeta_i^{k+1}+\Lambda_i\ala{k+1}{i}\\
%&=\bmat{D_i\Phi_i(x^d)^T\tit{i}{k+1}\\
%    -\y_i^d}+\bmat{D_i\Psi_i(x^d)^T\\
%    \Psi_i(x^d)^T}\ala{k+1}{i} \\
0&< \braket{D_{ii}^2\Phi_i(x^d)}{
    \tit{i}{k}}+\braket{D_{ii}^2\Psi_i(x^d)}{\ala{k+1}{i}}
 \end{array}
 \right.
%    \zeta_i^{k+1}+\Lambda_i\alpha_i^{(k+1)}=\bmat{D_i\Phi_i(x^d)^T\theta_i^{(k+1)}\ 
%    -\y_i^d}+\bmat{D_i\Psi_i(x^d)\ 
%    \Psi_i(x^d)}^T\ala{k+1}{i}=0_{(p_i+1)\times 1}
    %\underbrace{\bmat{D_i\Phi_i(x^d)^T\theta_i^{(k+1)}\\ -\y_i^d}}_{\zeta_i^{k+1}}
    %+\underbrace{\bmat{D_i\Psi_i(x^d)^T\\
    %\Psi_i(x^d)^T}}_{\Lambda_i}\ala{k+1}{i}=\underbrace{\bmat{0\\
    %0}}_{0_{(p_i+1)\times 1}}.
    \label{eq:controlfind-n}
  \end{equation}
  where
  \begin{equation*}
      \zeta_i^{k+1}=\bmat{\braket{D_i\Phi_i(x^d)}{\tit{i}{k+1}}\\ -\y_i^d}
  \end{equation*}
  and
  \begin{equation*}
      \Lambda_i=\bmat{D_i\Psi_i(x^d)^T\\
      \Psi_i(x^d)^T}\in \mb{R}^{2\times s}
  \end{equation*}
  %  and   
% \begin{equation}D_{ii}^2(\fg{i}(x^d))
%   0< D_{ii}^2\Phi_i(x^d,
%   \tit{i}{k})+D_{ii}^2\Psi_i(x^d, \ala{k+1}{i})
%  \label{eq:second-controlfind}
%\end{equation}
 If $\Lambda_i$ is full rank, i.e. has rank $2$, then there exists a
$\ala{k+1}{i}$ that solves the first equation in \eqref{eq:controlfind-n}. If the number of
basis functions $\sn$ satisfies $\sn>2$, then the rank condition is not
unreasonable and in fact, there are multiple solutions. In essence, by selecting
$\sn$ to be \emph{large enough}, the {\coordinator} is allowing for enough degrees of freedom
to ensure there exists a set of parameters $\alpha$ that induce the desired
result. Moreover, the problem of
finding $\ala{k+1}{i}$ reduces to a convex feasibility problem.

  The convex feasibility problem defined by~\eqref{eq:controlfind-n}
  can be formulated as a constrained least{\ds}squares optimization
  problem. Indeed, for each $i\in\mc{I}$,
  \begin{equation}
      \text{(P1)} \left\{   \begin{array}{ll}
      \min\limits_{\ala{k+1}{i}} & 
      \|\zeta_i^{k+1}+\Lambda_i\ala{k+1}{i}\|_2^2\notag\\
      \text{s.t.} &\braket{D_{ii}^2\Phi_i(x^d)}{
          \tit{i}{k+1}}+\braket{D_{ii}^2\Psi_i(x^d)}{\ala{k+1}{i})}\geq \vep
      \end{array}\right.
  %\\
  % \quad \text{s.t.} & D_{ii}^2\Phi_i(x^d,
  % \tit{i}{k+1})+D_{ii}^2\Psi_i(x^d, \ala{k+1}{i})>0, \ \forall i\in\pl %D^2\Phi(\x, \theta^{(k)})+D^2\Psi(\x, \alpha^{(k+1)})> 0
  %\end{array}
    \label{eq:findalpha-lsq}
  \end{equation}
for some $\vep>0$.  By Assumption~\ref{ass:nash-1}, for each $i\in\pl$, there is an
  $\alpha^{k+1}$ such that the cost is exactly minimized.
  
  The choice of $\vep$ determines how well-conditioned the
second-order derivatives of agents' costs with respect to their own choice
variables is. In addition, we note that if there are a large number of incentive
basis functions, it may be reasonable to incorporate a cost for sparsity---e.g.,
$\lambda_i\|\alpha_i^{k+1}\|_1$; however, the optimal solution in this case is
not guaranteed to satisfy \eqref{eq:controlfind-n}.

 % A regularizer could be added to the problem in order to ensure a \emph{sparse}
 % $\alpha_i^{k+1}$ is found by adding the term $\lambda\|\ala{k+1}{i}\|_1$ to the cost in
 % \eqref{eq:findalpha-lsq} where $\lambda$ is the regularization coefficient
 % which is a well studied problem (see, e.g.,~\cite[Chapter 4]{boyd:2004aa}).
  
 It is desirable for the induced local Nash equilibrium to be a
  \emph{stable, non-degenerate
  differential Nash equilibrium} so that it is attracting in a neighborhood
  under the gradient flow~\cite{ratliff:2015aa}. 
  To enforce this, the {\coordinator} must add
  additional constraints to the feasibility problem defined by
  \eqref{eq:controlfind-n}. In particular, 
 second{\ds}order
  conditions on player cost functions must be satisfied, i.e.~that the
  derivative of
  the differential game form $\omega$ is positive--definite~\cite[Theorem~2]{ratliff:2015aa}. 
%  $\omega=(D_1\fg{1}, \ldots, D_n\fg{n})$
This reduces to
  ensuring
%  \begin{equation}
    $D^2\Phi(x, \tu{k+1})+D^2\Psi(x, \al{k+1})> 0$
%    \label{eq:hmm}
%  \end{equation}
where
\begin{align*}
 & D^2\Phi(x, \tu{k+1})=\bmat{\braket{D_{11}^2\Phi_1(x)}{\tit{1}{k+1}}
   &\cdots &
   \braket{D_{\n1}^2\Phi_1(x)}{\tit{1}{k+1}}\\
  \vdots&\ddots & \vdots \\
  \braket{D_{1\n}^2\Phi_\n(x)}{\tit{\n}{k+1}}
  & \cdots &
  \braket{D_{\n\n}^2\Phi_\n(x)}{\tit{\n}{k+1}}},
    \label{eq:dsqphi}
  \end{align*}
  and $D^2\Psi(x, \alpha^{(k+1)})$ is defined analogously.
%  \begin{align}
%    D^2\Psi(x, \alpha^{(k+1)})=\bmat{D_{11}^2\Psi_1(x,\alpha_1^{(k+1)})
%    &D_{21}^2\Psi_1(x,\alpha_1^{(k+1)}) &\cdots &
%  D_{\n1}^2\Psi_1(x,\alpha_1^{(k+1)})\\
%  D_{12}^2\Psi_1(x,\alpha_2^{(k+1)}) & D_{22}^2\Psi_2(x,\alpha_2^{(k+1)})&\cdots &
%  D_{\n2}^2\Psi_2(x,\alpha_2^{(k+1)})\\
%  \vdots & \cdots & \ddots & \vdots \\
%  D_{1\n}^2\Psi_\n(\x,\alpha_\n^{(k+1)}
%)  & D_{2\n}^2\Psi_\n(\x,\alpha_\n^{(k+1)})& \cdots &
%  D_{\n\n}^2\Psi_\n(\x,\alpha_\n^{(k+1)})}.
%    \label{eq:dsqpsi}
%  \end{align}
%The notation is consistent with \eqref{eq:dii-phi} and \eqref{eq:dii-psi}.
 Notice that this constraint is a semi{\ds}definite 
 constraint~\cite{boyd:2004aa} and thus, the problem of finding $\al{k+1}$ that induces $x^d$ to be a
 stable, non{\ds}degenerate differential Nash equilibrium can be formulated as a
 constrained least{\ds}squares optimization
  problem. Indeed,
  \begin{equation*}
%    \begin{array}{ll}
%      \min\limits_{\alpha_i^{k+1}\in\mb{R}^{\sn_i}} &
%      \|\zeta_i^{k+1}+\Lambda_i\alpha_i^{(k+1)}\|_2^2 \\
%   \quad \text{s.t.} & D^2\Phi(x, \theta^{(k+1)})+D^2\Psi(x, \alpha^{(k+1)})> 0
%  \end{array}
      \text{(P2)}      \left\{\begin{array}{ll}
          \min\limits_{\alpha^{k+1}} & 
          \sum_{i}\|\zeta_i^{k+1}+\Lambda_i\alpha_i^{k+1}\|_2^2\\
\text{s.t.} &D^2\Phi(x^d, \tu{k+1})+D^2\Psi(x^d, \al{k+1})\geq \vep\end{array}\right.
    \label{eq:fa-lsq-s}
  \end{equation*}
  for some $\vep>0$.
     The
  optimization problem~(P2) can be written as a semi{\ds}definite
  program. 
  
  Again, a regularization term can be incorporated in order to find
  sparse parameters $\ala{k+1}{i}$. However, by introducing regularization, the
  condition $\zeta_i^{k+1}+\Lambda_i\alpha_i^{k+1}=0$ will in general no longer
  be satisfied by the solution.

  Ensuring the desired Nash equilibrium is a stable, non{\ds}degenerate
  differential Nash equilibrium means that, first and foremost, the desired Nash
  equilibrium is \emph{isolated}~\cite[Theorem~2]{ratliff:2015aa}. Thus, there is no nearby Nash equilibria. Furthermore, non{\ds}degenerate differential Nash
  equilibria are generic~\cite[Theorem~1]{ratliff:2014aa} and structurally
  stable~\cite[Theorem~3]{ratliff:2015aa} so that they are robust to
  small modeling errors and environmental noise. Stability ensures that
  if at each iteration players play according to a myopic approximate best response strategy
  (gradient play), then they will converge to the desired Nash
  equilibrium~\cite[Proposition~2]{ratliff:2015aa}.
  Hence, if a stable equilibrium is desired by the {\coordinator}, we can
  consider a modified version of Assumption~\ref{ass:nash-1}.
%\addtocounter{assumption}{-1}
\begin{customass}{1'}[Modified---Stable Differential Nash]
   For every $\{\theta_i\}_{i=1}^\n$ where $\theta_i\in \Theta_i$, there exist
   $\alpha_i\in \mb{R}^{\sn}$ for
    each $i\in\pl$ such that
  $x^d$ is the induced stable, non--degenerate differential Nash equilibrium of 
  $(\fg{1}(x;\theta_1), \ldots, \fg{\n}(x;  \theta_\n))$
  and $\gamma_i(x^d)=\y^d_i$ where $\gamma_i(x)=\Psi_i(x)^T\alpha_i$. 
%  Further, the true
%  parameters $\theta_i$ for each $i\in\{1, \ldots, \n\}$
%  satisfy
%  \begin{equation}
%%    \bmat{d_{ii}\Phi(x^{d})^T\\ \vdots \\
%%    d_{ii}\Phi(x^{d})^T}\theta_i+\bmat{d_{ii}\Psi(x^{d})^T u^{d}\\ \vdots \\
%%    d_{ii}\Psi(x^{d})^T
%%  u^{(k)}}\geq 0
%A_i(\x^{d})\theta_i+b_i(\x^{d}, \alpha_i)\geq 0
%    \label{eq:second-order_cond}
%  \end{equation}
%and
%\begin{equation}
%  D^2\Phi(\x^d, \theta)
%  \label{}
%\end{equation}
\label{ass:nash-1-mod}
\end{customass}
Given $\tu{k}=(\tit{1}{k}, \ldots, \tit{\n}{k})$, 
let $\mc{A}(\tu{k}, x, \y)$
  be the set of
$\al{k}=(\ala{k}{1}, \ldots,\ala{k}{\n})$ such that $x$ is a 
 differential Nash equilibrium of  $(\fg{1}(x), \ldots, \fg{\n}(x))$
 and $\gamma_i(x)=\y_i$ where $\gamma_i(x)=\braket{\Psi_i(x)}{\ala{k}{i}}$.
Similarly, let $\mc{A}_s(\tu{k}, x, \y)$ be the set of
$\al{k}$ that induce $x$ to be a stable, non{\ds}degenerate differential
Nash equilibrium where $\gamma_i(x)=\y_i$. By Assumptions~\ref{ass:nash-1}
and~\ref{ass:nash-1-mod}, 
$\mc{A}(\tu{k}, x, \y)$
and $\mc{A}_s(\tu{k}, x, \y)$, respectively, are non{\ds}empty. Further,
it is straightforward to find an $\al{k}$ belonging to
$\mc{A}(\tu{k}, x, \y)$ (resp., $\mc{A}_s(\tu{k}, x,
\y)$) by solving the convex problem stated in~(P1)
(resp.,~(P2)). 
%The optimization problem in \eqref{eq:fa-lsq-s} can be solved in an online
%fashion (even with adding a regularization term). Indeed,
%\begin{equation}
%  \alpha^{(k+1)}_i=\alpha_i^{(k+1)}+M_{n+1}^{-1}z_{
%  \label{}
%\end{equation}
\subsection{Incentive Design: Myopic--Play}
Given $\theta_{i}^{k+1}$ for each
$i\in \pl$, the {\coordinator} seeks an incentive mapping
${\gamma}^{k+1}=(\ga{1}{k+1}, \ldots,
\ga{\n}{k+1})$ that induces the desired response $x^d$ and such
that 
$\ga{i}{k+1}(x^d)=\y_i^d$ for each $i\in\pl$. As before, given that
${\gamma}_i$ has been parameterized, this
amounts to finding $\ala{k+1}{i}$ such that 
\eq{eq:update-1}{
    x_i^d=\braket{{\Phi}(\xx{k})}{\tit{i}{k+1}}+\braket{{\Psi}(\xx{k})}{\ala{k+1}{i}}
%\label{eq:update-1}
}
and such that $\braket{{\Psi}(x^d)}{\ala{k+1}{i}}=\y_i^d$ for each $i\in\pl$. 
\begin{assumption}
 For every $\{\theta_i\}_{i=1}^\n$ where $\theta_i\in \Theta_i$, there exist
 $\alpha_i\in \mb{R}^{\sn_i}$ for
    each $i\in\pl$ such that
    $x^d$ is the estimated collective response---that is,~\eqref{eq:update-1}
    is satisfied for each $i\in\pl$---and such that 
    $\braket{{\Psi}(x^d)}{\ala{k+1}{i}}=\y_i^d$.
\label{ass:myopic}
\end{assumption}

As in the Nash--play case, finding the incentive parameters at each iteration that induce the desired
response amounts to solving a set of linear equations for each player. That is,
for each $i\in\pl$,
the coordinator must solve
\begin{align}
  \label{eq:alpha-eqn-2}
  0_{p_i\times 1}
&= \tilde{\zeta}_i^{k+1}-\tilde{\Lambda}_i^k\ala{k+1}{i}=\bmat{x_i^d-\braket{\Phi(\xx{k})}{\tit{i}{k+1}}\\
  v_i^d}-\bmat{{\Psi}_i(\xx{k})^T\\
  \Psi_i(x^d)^T}\ala{k+1}{i}
    %\underbrace{\bmat{u_i^d-\td{\Phi}_i(\x^{(k)})^T\tit{i}{k+1}\\
  %v_i^d}}_{\td{\zeta}_i^{k+1}}=\underbrace{\bmat{\td{\Psi}_i(\x^{(k)})^T\\
  %\Psi_i(\x^d)^T}}_{\td{\Lambda}^k_i}\ala{k+1}{i}
 \end{align}
%\textcolor{red}{define $n_i$ as the dimension of player $i$'s response}
 for $\ala{k+1}{i}$. Define $\mc{A}_m(\tu{k+1}, \xx{k}, x^d, \y^d)$  to be the
set of $\ala{k+1}{i}$ that satisfy~\eqref{eq:alpha-eqn-2}.

The above set of equations will have a solution if the
 matrix $\td{\Lambda}^k_i$
 %\[\bmat{\td{\Psi}_i(\x^{(k)})^T\\ \Psi_i(\x^d)^T}\]
 has rank $2$. Choosing the set of basis functions
 $\{\psi_{j}\}_{j=1}^{\sn}$ such that $\sn>2$ makes this rank condition
 not unreasonable. One unfortunate difference between the Nash--play case and the present
 case of myopic--play is that in the former the coordinator could check the rank
 condition \emph{a priori} given that it does not depend on the observations. On
 the other hand, 
$\td{\Lambda}_{i}^k$ depends on the observation at each iteration and thus, can
only be verify online.
%must
%be checked at each iteration. 
 
As before,  the problem can be cast as a least{\ds}squares
 optimization problem with cost
 $\|\td{\zeta}_i^{k+1}-\td{\Lambda}_i^k\ala{k+1}{i}\|_2^2$.

 %and we can add a
 %regularization term for sparsity.
 %
 %\begin{equation}\textstyle
 %  \max_{\ala{k+1}{i}\in
 %  \mb{R}^{\sn_i}}\|\td{\zeta}_i^{k+1}-\td{\Lambda}_i^k\ala{k+1}{i}\|_2^2
 %  \label{eq:lsq-alpha-myop}
 %\end{equation}
%By Assumption~\ref{ass:myopic},  for each $i\in\pl$, there is an
%$\ala{k+1}{i}$ such that the cost is exactly minimized.  As in the Nash--play
%case, we can consider regularization for sparsity. 
%a
%regularization term $\lambda\|\ala{k+1}{i}\|_1$ can be added to enforce
% a sparse solution if so desired.

%%%%%%%%%%%%%%%%%%%%%%%%%%%%%%%%%%%%%%%%%%%%%%%%%%%%%%%%%%%%%%%%%%%%%%%%%%%%%%%
\section{Convergence in the Noise Free Case}
\label{sec:main}
%\section{Convergence Results---Nash Play}

\begin{algorithm}[t]
  \caption{Adaptive Incentive Design}
  \label{alg:nash}
  \begin{algorithmic}[1]
%    \item  \textbf{If} Nash--play \textbf{then}
%  \item   $\quad$Choose $\alpha_i^{(0)}\in \mb{R}^{m_i}$ for each $i\in\pl$
%    \item $\quad$$k\leftarrow0$
%%  \item    $\quad$Issue incentive mapping $\gamma_i(\x)=\Psi_i(\x)^T\alpha_i^{(0)}$
% \item \textbf{If} myopic--play \textbf{then} receive $\xx{0}$
 %\item $\quad$ Receive $\xx{0}$
%  \item$\quad$Choose $\alpha_i^{(0)}\in \mb{R}^{m_i}$ for each $i\in\pl$
%  \item $\quad$$k\leftarrow 0$
 
%\item   Choose  $\tit{i}{0}\in \Theta_i$ for each $i\in\pl$ %$\ala{0}{i}\in \mb{R}^{m}$,
%\item   Choose $\ala{0}{i}\in \mb{R}^{m_i}$ for each $i\in\pl$
 %   \item Issue incentive mapping with parameters $\ala{0}{i}$
 %     $\gamma_i(x)=\Psi_i(x)^T\ala{0}{i}$
  \item $k\leftarrow 0$
  \item \textbf{If} Myopic--play: receive $x^0$
\item   Choose  $\tit{i}{0}\in \Theta_i$, $\ala{0}{i}\in \mb{R}^{s}$ for each $i\in\pl$ %$\ala{0}{i}\in \mb{R}^{m}$,
%\item   Choose  for each $i\in\pl$
    \item Issue incentive mapping with parameters $\ala{0}{i}$

 % \item \textbf{for} {$k=0,2, \ldots$} \textbf{do}
 \item \textbf{do}
 \item$\quad$ issue incentives with parameters $\alpha^k$
%  \item $\quad$ \textbf{for} {$i=1, \ldots, \n$} 
  \item $\quad$ Receive $x^{k+1}\in \mc{X}_{k}(x^{(0,k)},\theta^\ast,
    \alpha^k)$.
  \item $\quad$ Compute $y^{k+1}$, $\xi^k$ and incur loss
    $\ell(\tit{i}{k})$
%  \item $\quad$ Incur loss
 %   $\ell(\tit{i}{k})$
  \item $\quad$ $\tit{i}{k+1}=P_{\tit{i}{k}}^{k+1}\left( \eta_k\nabla
    \ell(\tit{i}{k}) \right)$ for each $i\in\mc{I}$
\item $\quad$ \textbf{for} $i \in \mc{I}$:
\item $\quad\ $ \textbf{if} Nash--play:
%  \item $\qquad$\textbf{for} $i \in \mc{I}$: 
  \item $\quad\ \ $ $\ala{k+1}{i}\in\mc{A}(\tu{k+1},
    x^d, \y^d)$
  %\item $\qquad\quad$choose  $\ala{k+1}{i}\in\mc{A}(\tu{k+1},
  %  x^d, \y^d)$ \\
  %  $\qquad\quad$
    [or
    $\mc{A}_s(\tu{k+1}, x^d, \y^d)$]
\item $\quad\ $ \textbf{elif} myopic--play: $\ala{k+1}{i}\in\mc{A}_m(\tu{k+1},
    x^d, \y^d)$ 
%\item $\qquad$\textbf{for} $i\in \mc{I}$
%  \item $\quad\ \ $ $\ala{k+1}{i}\in\mc{A}_m(\tu{k+1},
 %   x^d, \y^d)$ 
%  \item $\qquad$ \textbf{utility learning}:
%    \item $\qquad\quad$\textbf{If} Nash--play \textbf{then}
   % \item  $\qquad\qquad$ $\tit{i}{k+1}=P_{\tit{i}{k}}^{k+1}\left( \eta_k\nabla
   % \ell(\tit{i}{k}) \right)$ 
   %   \item $\qquad\qquad$ Choose
   % $\ala{k+1}{i}\in\mc{A}(\tu{k+1}, x^d, \y^d)$ [or
   % $\mc{A}_s(\tu{k+1}, x^d, \y^d)$]
   % \item $\quad\qquad$\textbf{If} myopic--play \textbf{then}
   % \item  $\qquad\qquad$ $\tit{i}{k+1}=P_{\tit{i}{k}}\left( \eta_k\nabla
   % \ell(\tit{i}{k}) \right)$
  % \item $\qquad\qquad$ Choose $\ala{k+1}{i}\in\mc{A}_m(\tu{k+1}, \xx{k}, x^d, \y^d)$
%  \item $\qquad$ \textbf{incentive design}: 
%  \item $\qquad\quad$\textbf{If} Nash--play \textbf{then}
%  \item $\qquad\qquad$ Choose
%    $\ala{k+1}{i}\in\mc{A}(\tu{k+1}, x^d, \y^d)$ [or
%    $\mc{A}_s(\tu{k+1}, x^d, \y^d)$]
%  \item $\quad\qquad$\textbf{If} myopic--play \textbf{then}
%  \item $\qquad\qquad$ Choose $\ala{k+1}{i}\in\mc{A}_m(\tu{k+1}, \xx{k}, x^d, \y^d)$
% %(resp. $\mc{A}_s(\theta^{(k+1)}, \x^d, \y^d)$)
% \item $\quad$ issue incentives parameterized by
%     $\alpha^{k+1}$ %for each $i\in\mc{I}$
%\gamma_i(x)=\Psi_i(x)^T\ala{k+1}{i}$
%  \item $\quad$ \textbf{end for}
  \item $\quad$ $k\leftarrow k+1$
 \item \textbf{end do}
  \end{algorithmic}
%  \caption{stuff}
\end{algorithm}

\label{subsec:converge-myopic}
In Algorithm~\ref{alg:nash}, the steps of the utility learning and incentive
design algorithm are formalized. We now discuss the convergence results for the
proposed
algorithm.

 % \begin{assumption}
 %   All the functions in $\mc{F}_\phi$ are $C^2$--\emph{Lipschitz continuous}, i.e. for all
 %   $\phi\in \mc{F}_\phi$, $\phi\in C^2$ and there exists a real constant $K\geq 0$ such that, for all
 % $x,x'\in X$, $\|(\phi(x)-\phi(x')\|_{\mb{R}}\leq K\|x-x'\|_{X}$ where
 % $\|\cdot\|_{\mb{R}}$ and $\|\cdot\|_X$ are norms on $\mb{R}$ and $X$
 % respectively.
 % \label{ass:lip}
 % \end{assumption} 
  \begin{definition}
    If, for each $i\in\pl$, there exists a constant
  $0<c_{i,s}<\infty$ such that $\xi_i^k(\xi_i^k)^T\leq c_{i,s}I$ for all $k$,
  then we say the algorithm is {stable}. 
  \end{definition}
  Lipschitz continuity of the functions in $\mc{F}_\phi$ implies stability.
  
  %In the {\nplay} case, since $\xi_i^k=D_i\Phi_i(\xx{k+1})$, it is
  %straightforward to see that such a constant exists for each player.  
  %In the myopic--play case, we will
  %assume that the stability condition holds.  
  \begin{definition}
    If for each $i\in\pl$, there exists a constant $0<c_{i,p}<\infty$
  such that $c_{i,p}I\leq \xi_i^k(\xi_i^k)^T$ for all $k$, we will say the algorithm is
  {persistently exciting}.   \label{def:persist}
  \end{definition}

  Let $c_{s}=\max_{i\in\pl}c_{i,s}$ and
  $c_{p}=\min_{i\in\pl}c_{i,p}$.
%  Note that we borrow these concepts of
%  persistence of excitation and stability from the adaptive control
%  literature~\cite{goodwin:1984aa, kumar:1986aa,sastry:1989aa}. We make remarks after we present the
%  results on the connections to these concepts. 
  %the
  %stability condition of Definition~\ref{def:PES} always holds.
%  As before, we will use the notation $\theta=(\theta_1, \ldots, \theta_\n)$ and
% $\alpha=(\alpha_1, \ldots, \alpha_\n)$ in the following results.
% Is it ok if theta changes over time
%We define a \emph{prox--mapping}
%$P_{\theta}:\mb{R}^{m}\rar \Theta^\circ$ associated $\beta$ and a point $\theta\in
%\Theta^\circ$, viewed as a parameter, by
%\begin{equation}
%  P_\theta(\theta_1)=\arg\min_{\theta_1'\in \Theta}\left\{
%  \theta_1^T(\theta_1'-\theta)+V(\theta, \theta_1') \right\}
%  \label{eq:proxmap-def}
%\end{equation}
The following lemma is a straghtforward extension of~\cite[Lemma
2.1]{nemirovski:2009aa} and we leave the proof to Appendix~\ref{app:proofs}.
\begin{lemma}
  For every $\ts\in \Theta^{k+1}, \tu{k}\in (\Theta^{k})^\circ$, and $g\in \mb{R}^m$,
  we have 
  \begin{equation*}
 \textstyle   V(P_{\tu{k}}^{k+1}(g), \ts)\leq V(\tu{k},
 \ts)+\braket{g}{\ts-\tu{k}}+\frac{1}{2\nu}\|g\|_\ast^2
    \label{eq:lemma-2}
  \end{equation*}
  \label{lem:tvpmaps}
\end{lemma}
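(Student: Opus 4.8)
The plan is to adapt the proof of the Euclidean version \cite[Lemma~2.1]{nemirovski:2009aa} to the Bregman prox-mapping, the only genuine change being that we project onto the time-varying set $\Theta^{k+1}$ instead of a fixed set. Write $w=P_{\tu{k}}^{k+1}(g)$ for the prox point. First I would record the first-order optimality condition for the constrained minimization that defines $w$. The objective $\theta'\mapsto\braket{g}{\theta'-\tu{k}}+V(\tu{k},\theta')$ has gradient $g+\nabla\beta(\theta')-\nabla\beta(\tu{k})$, and $\Theta^{k+1}$ is convex (its defining constraints are semidefinite, cf.\ the remark after \eqref{eq:thetak-2}); hence minimizing over $\Theta^{k+1}$ yields, for every feasible $\theta'$,
\[
\braket{g+\nabla\beta(w)-\nabla\beta(\tu{k})}{\theta'-w}\geq 0.
\]
Since $\ts\in\Theta^{k+1}$ by hypothesis, I may set $\theta'=\ts$.

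The second ingredient is the three-point identity for the Bregman divergence, obtained by expanding the definition of $V$ and cancelling the $\beta$-terms:
\[
\braket{\nabla\beta(w)-\nabla\beta(\tu{k})}{\ts-w}=V(\tu{k},\ts)-V(w,\ts)-V(\tu{k},w).
\]
Combining this identity with the optimality inequality at $\theta'=\ts$ gives
\[
V(w,\ts)\leq V(\tu{k},\ts)+\braket{g}{\ts-w}-V(\tu{k},w),
\]
and splitting $\braket{g}{\ts-w}=\braket{g}{\ts-\tu{k}}+\braket{g}{\tu{k}-w}$ isolates the desired linear term and leaves the remainder $\braket{g}{\tu{k}-w}-V(\tu{k},w)$.

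To finish I would bound this remainder by $\frac{1}{2\nu}\|g\|_\ast^2$. Strong convexity of $\beta$ with modulus $\nu$ gives $V(\tu{k},w)\geq\frac{\nu}{2}\|\tu{k}-w\|^2$, while the Fenchel--Young inequality in the norm/dual-norm pairing gives $\braket{g}{\tu{k}-w}\leq\frac{\nu}{2}\|\tu{k}-w\|^2+\frac{1}{2\nu}\|g\|_\ast^2$; the two $\frac{\nu}{2}\|\tu{k}-w\|^2$ contributions cancel, leaving exactly $\frac{1}{2\nu}\|g\|_\ast^2$ and hence the claimed inequality.

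The step I expect to require the most care---the ``extension'' beyond the fixed-domain lemma---is ensuring $\nabla\beta$ is well-defined wherever it is used. The hypothesis $\tu{k}\in(\Theta^{k})^\circ$ supplies differentiability of $\beta$ at $\tu{k}$, and differentiability at the prox point $w$ follows from the cited property \cite[Proposition~5.b]{moreau:1965aa} that the prox-mapping lands in the subdifferentiable set $\Theta^\circ$; note that $\tu{k}$ need not itself lie in $\Theta^{k+1}\subseteq\Theta^k$, so one cannot shortcut this. Beyond this, the nesting of the $\Theta^k$ plays no structural role other than guaranteeing that $\ts$ and $w$ are admissible test points, so the remainder of the argument is identical to the fixed-set case.
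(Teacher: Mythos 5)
Your proof is correct and follows essentially the same route as the paper's: the first-order optimality condition for the prox point over $\Theta^{k+1}$ tested at $\theta^\ast\in\Theta^{k+1}$ (the one place the time-varying, nested sets matter, exactly as the paper emphasizes in its remark after the optimality condition), then Young's inequality against the strong-convexity bound $V(\theta^{(k)},w)\geq\frac{\nu}{2}\|\theta^{(k)}-w\|^2$ to absorb the residual into $\frac{1}{2\nu}\|g\|_\ast^2$. Your explicit three-point identity is just a cleaner packaging of the paper's direct expansion of the Bregman divergences (it even makes transparent the $-V(\theta^{(k)},w)$ term that the paper's displayed chain carries implicitly), and your differentiability check at the prox point via \cite[Proposition~5.b]{moreau:1965aa} plays the same role as the paper's equivalent-reformulation argument.
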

%From Lemma~\ref{lem:tvpmaps}, we know that for a time{\ds}varying
%prox{\ds}mapping $P_\theta^{t+1}$ associated with a distance generating function
%  $\beta(\theta)$ (modulus $\nu$) and a function 
%  \eq{eq:Vfn}{V(\theta_1, \theta_2)=\beta(\theta_2)-\left(
%  \beta(\theta_1)+\nabla\beta(\theta_1)^T(\theta_2-\theta_1) \right),}
%  for every $\theta_1\in (\Theta^{t})^\circ$, $\theta_2\in \Theta^{t+1}$, and $g\in \mb{R}^m$, 
%  \eq{eq:Vfn2}{V(P_{\theta_1}^{t+1}(g), \theta_2)\leq V(\theta_1,
%  \theta_2)+g^T(\theta_2-\theta_1)+\frac{1}{2\nu}\|g\|_\ast^2}
%    where $\|\cdot\|_\ast$ denotes the dual norm to $\|\cdot\|$.
%    Using the above lemma, for each $i\in\pl$, 
%    we have
%    \begin{equation}
%      V(\theta_i^{(t+1)}, \ts_i)\leq V(\theta_i^{(t)},
%      \ts_i)+\eta_t(\ts_i-\theta_i^{(t)})^T\nabla\ell(\theta_i^{(t)})+\frac{\eta_t^2}{2\nu}\|\nabla
%      \ell(\theta_i^{(t)})\|_\ast^2.
%      \label{eq:V-eq}
%    \end{equation}
%    Note that the time{\ds}varying prox{\ds}mapping is only needed for the
%    {\nplay}
%    case; in the {\mplay} case, the standard prox{\ds}mapping that
% projects onto $\Theta_i$ for each $i\in \pl$ at each iteration can be used. 
%    If $\beta(\theta)=\frac{1}{2}\|\theta\|_2^2$, then we have that
%$P_{\theta}=\Pi_{\Theta}(\theta-\theta')$ where $\Pi_{\Theta_i}$ is the metric projection operator onto the set $\Theta$,
%that is, 
%\begin{equation}
%  \Pi_{\Theta}(\theta)=\arg\min_{\theta'\in \Theta}\|\theta-\theta'\|_2.
%  \label{eq:projop-mt}
%\end{equation}
 To compactify the notation, let $V_k(\theta_i)\equiv
 V(\tit{i}{k}, \theta_i)$ and $\Delta
 \tit{i}{k}=\theta_i^\ast-\tit{i}{k}$. 

  \begin{theorem}
Suppose  Algorithm~\ref{alg:nash} with
 prox--mapping defined by $\beta$ (modulus $\nu$)
 is {persistently exciting}, stable and that  
  the step{\ds}size $\eta$
  is chosen such that $\eta-\frac{\eta^2}{2\nu}{c}_s>\vep$ for some $\vep$ such
  that $0<\vep <\frac{1}{2c_p}$ with $c_p=\min_{i\in\pl}c_{i,p}$ and where
   $0<{c}_s<\infty$ is such that $\|\xi_i^k\|_\ast^2\leq {c}_s$.
  We have the following:
  \begin{enumerate}[topsep=2pt, itemsep=0pt]
      \item[(a)] For each $i\in\pl$, $V_k(\ts_i)$ converges and
  \begin{equation}
    \lim_{k\rar\infty}\|(\xi_i^k)^T(\ts_i-\tit{i}{k})\|^2=0.
    \label{eq:limzerou}
  \end{equation}
\item[(b)] If $\beta(\theta_i)=\frac{1}{2}\|\theta_i\|_2^2$, then for each $i\in
  \mc{I}$,  $\tit{i}{k}$ converges {exponentially
  fast} to $\ts_i$.
  \end{enumerate}
     \label{thm:1d-anynorm}
\end{theorem}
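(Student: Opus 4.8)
The plan is to run the standard online-mirror-descent Lyapunov argument with the Bregman divergence $V_k(\ts_i)=V(\tit{i}{k},\ts_i)$ as the potential, and to invoke persistence of excitation only at the very end to upgrade convergence of the loss into convergence of the parameters. First I would compute the loss gradient explicitly. Since $y_i^{k+1}=\braket{\xi_i^k}{\ts_i}$ in both the Nash and myopic cases, the residual at $\tit{i}{k}$ equals $e_i^k:=\braket{\xi_i^k}{\Delta\tit{i}{k}}=(\xi_i^k)^T(\ts_i-\tit{i}{k})$, and the gradient of the quadratic loss is $\nabla\ell(\tit{i}{k})=-e_i^k\,\xi_i^k$. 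This rank-one structure is what makes the whole argument collapse to a scalar recursion in $e_i^k$.

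Next I would apply Lemma~\ref{lem:tvpmaps} with $g=\eta\,\nabla\ell(\tit{i}{k})$, $\tu{k}=\tit{i}{k}$, and $\ts=\ts_i$. This is legitimate because each observed response is a local Nash (resp.\ myopic) response, so $\ts_i\in\Theta_i^{k+1}$ for every $k$, and $\tit{i}{k}\in(\Theta_i^k)^\circ$ by construction of the prox step. Substituting the gradient produces the two scalar quantities $\braket{\nabla\ell(\tit{i}{k})}{\ts_i-\tit{i}{k}}=-(e_i^k)^2$ and, using stability $\|\xi_i^k\|_\ast^2\le c_s$, $\|\nabla\ell(\tit{i}{k})\|_\ast^2=(e_i^k)^2\|\xi_i^k\|_\ast^2\le c_s(e_i^k)^2$. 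The lemma then yields the descent inequality
\begin{equation*}
V_{k+1}(\ts_i)\le V_k(\ts_i)-\Big(\eta-\tfrac{\eta^2}{2\nu}c_s\Big)(e_i^k)^2 .
\end{equation*}
By the step-size hypothesis the bracket exceeds $\vep>0$, so $V_k(\ts_i)$ is non-increasing and bounded below, hence convergent; telescoping gives $\vep\sum_k (e_i^k)^2\le V_0(\ts_i)<\infty$, which forces $(e_i^k)^2=\|(\xi_i^k)^T(\ts_i-\tit{i}{k})\|^2\to 0$. That is exactly part~(a).

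For part~(b) I would specialize to $\beta(\theta_i)=\tfrac12\|\theta_i\|_2^2$, so that $\nu=1$ and $V_k(\ts_i)=\tfrac12\|\Delta\tit{i}{k}\|_2^2$. Here persistence of excitation does the essential work: the bound $c_p I\le \xi_i^k(\xi_i^k)^T$ (which holds since $c_p\le c_{i,p}$) gives $(e_i^k)^2=(\Delta\tit{i}{k})^T\xi_i^k(\xi_i^k)^T\Delta\tit{i}{k}\ge c_p\|\Delta\tit{i}{k}\|_2^2=2c_p V_k(\ts_i)$. Feeding this lower bound into the descent inequality turns it into the geometric contraction $V_{k+1}(\ts_i)\le(1-2c_p\vep)V_k(\ts_i)$, and the hypothesis $0<\vep<\tfrac{1}{2c_p}$ is precisely what keeps the factor $1-2c_p\vep$ inside $(0,1)$. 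Iterating gives $\|\Delta\tit{i}{k}\|_2\le(1-2c_p\vep)^{k/2}\|\Delta\tit{i}{0}\|_2$, i.e.\ exponential convergence of $\tit{i}{k}$ to $\ts_i$.

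I expect the only real subtlety to be conceptual rather than computational: the descent inequality alone (part~(a)) controls only the \emph{prediction} error $e_i^k$, and without persistence of excitation the regression directions $\xi_i^k$ may remain confined to a proper subspace, so $\tit{i}{k}$ need not approach $\ts_i$ at all. The persistence-of-excitation lower bound is therefore the linchpin that converts ``loss to zero'' into ``parameters converge,'' and it is also why part~(b) must be stated for the Euclidean $\beta$, where $V_k$ is exactly a squared norm; for a general distance-generating function one still gets $V_k\to 0$ geometrically but needs an extra comparison between $V_k$ and $\|\Delta\tit{i}{k}\|^2$ to read off a norm rate. A secondary point to verify is that the step size is held constant, so that the coefficient $\eta-\tfrac{\eta^2}{2\nu}c_s$ stays above $\vep$ uniformly in $k$.
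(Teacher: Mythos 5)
Your proposal is correct and follows essentially the same route as the paper's proof: apply Lemma~\ref{lem:tvpmaps} with $g=\eta\nabla\ell(\tit{i}{k})$ to obtain the descent inequality $V_{k+1}(\ts_i)\leq V_k(\ts_i)-\vep\|(\xi_i^k)^T\Delta\tit{i}{k}\|^2$, telescope for part~(a), and invoke persistence of excitation with $V_k(\ts_i)=\frac{1}{2}\|\Delta\tit{i}{k}\|_2^2$ to get the contraction $V_{k+1}(\ts_i)\leq(1-2c_p\vep)V_k(\ts_i)$ for part~(b). The only cosmetic difference is that the paper further bounds $1-2c_p\vep<e^{-2c_p\vep}$ to express the rate as $e^{-2c_p\vep k}$, while you iterate the geometric factor directly; these are equivalent.
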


\begin{proof}
 %  With out loss of generality, we will drop the $i$ notation since the proof
 % holds for each $i$. 
 %\textcolor{red}{is this the correct $V_t$?}
% Let $V_t(\theta_i)=\frac{1}{2}\|\theta_i-\theta_i^{(t)}\|^2$.
We prove part (a) first.
Since elements of $\mc{F}_\phi$ are Lipschitz, $\xi_{i}^k(\xi_i^k)^T\leq \hat{c}_sI$
so that we may find
  a ${c}_s>0$ such
  that for all $k$,
  $\|\xi_i^{k}\|_\ast^2\leq {c}_s$.  Lemma~\ref{lem:tvpmaps} implies that
  \begin{equation}
   \textstyle V_{k+1}(\ts_i)\leq
   V_{k}(\ts_i)-\eta\braket{\Delta\tit{i}{k}}{\nabla\ell(\tit{i}{k})}+\frac{\eta^2}{2\nu}\|\nabla\ell(\tit{i}{k})\|_\ast^2.
  \end{equation}
  Hence, 
  \begin{align}
    V_{k+1}(\ts_i)&\leq
    V_{k}(\ts_i)-\eta\|(\xi_i^k)^T(\Delta\tit{i}{k})\|^2\textstyle+\frac{1}{2\nu}\eta^2\|\xi_i^k\|_\ast^2\|(\xi_i^k)^T(\Delta\tit{i}{k})\|^2\notag\\
    &\leq \textstyle
    V_{k}(\ts_i)-\left(\eta-\frac{1}{2\nu}\eta^2\|\xi_i^k\|_\ast^2\right)\|(\xi_i^k)^T(\Delta\tit{i}{k})\|^2\notag\\
    &\leq\textstyle
    V_{k}(\ts_i)-\left(\eta-\frac{1}{2\nu}\eta^2{c}_s\right)\|(\xi_i^k)^T(\Delta\tit{i}{k})\|^2\notag\\
    &\leq V_{k}(\ts_i)-\vep\|(\xi_i^k)^T(\Delta\tit{i}{k})\|^2
  \label{eq:Vs-2}
  \end{align}
  Thus, 
 % \eq{eq:thm1d-1}{
 $\|(\xi_i^k)^T(\Delta\tit{i}{k})\|^2\leq
  \frac{1}{\vep}(V_k(\ts_i)-V_{k+1}(\ts_i))$.

 % \textcolor{red}{edited to here}
  %}
  Summing $k$ from $0$ to $K$, we have
  \begin{align*}
      \textstyle\sum_{k=0}^{K}\|(\xi_i^k)^T(\ts_i-\tit{i}{k})\|^2&\leq
  \vep^{-1}(V_0(\ts_i)-V_{K+1}(\ts_i))\leq \vep^{-1}V_0(\ts_i)
      %\label{eq:thm1d-2}
  \end{align*}
%  \eq{eq:thm1d-2}{\textstyle\sum_{k=0}^{K}\|(\xi_i^k)^K(\ts_i-\tit{i}{k})\|^2\leq
%  \vep^{-1}(V_0(\ts_i)-V_{K+1}(\ts_i))\leq \vep^{-1}V_0(\ts_i)}
  so that $\lim_{K\rar\infty}\sum_{k=0}^{K}\|(\xi_i^k)^T(\ts_i-\tit{i}{k})\|^2\leq
  \vep^{-1}V_0(\ts_i)<\infty$.
  %\eq{eq:thm1d-3}{\textstyle\lim_{T\rar\infty}\sum_{t=0}^{T}\|(\Xi_i^t)^T(\ts_i-\theta_i^{(t)})\|^2\leq
  %\vep^{-1}V_0(\ts_i)<\infty}
 This, in turn, implies that $\lim_{k\rar\infty}
  \|\xi_i^k(\ts_i-\tit{i}{k})\|^2=0$.
From~\eqref{eq:Vs-2} and the fact that $V_k(\ts_i)$ is always postive, we see
that $V_k(\ts_i)$ is a decreasing sequence and hence, it converges.
The analysis holds for each $i\in\pl$. 

Now, we show part (b). Suppose that
$\beta(\theta_i)=\frac{1}{2}\|\theta_i\|_2^2$. Then, starting with the
inequality in \eqref{eq:Vs-2}, we have that
\begin{align*}
    V_{k+1}(\ts_i)&\leq
    V_k(\ts_{i})-\vep(\Delta\tit{i}{k})^T\xi_i^k(\xi_i^k)^T(\Delta\tit{i}{k})\leq V_k(\ts_{i})-\vep c_p(\Delta\tit{i}{k})^T(\Delta\tit{i}{k})\leq V_k(\ts_{i})(1-2c_p\vep)
  \label{eq:partb}
\end{align*}
since $\xi_i^k(\xi_i^k)^T\leq c_{s}I$ (i.e.~stability),
 % (consequence of Assumption~\ref{ass:lip}), 
  $\eta-\frac{1}{2}\eta^2c_s>\vep$ by
  construction, and $c_pI\leq \xi_i^k(\xi_i^k)^T$ (i.e.~persistence of excitation).
  Since $0<\vep<1/(2c_p)$, we have that $1-2c_p\vep<e^{-2c_p\vep}$ so that
 % \begin{equation}
    $V_{k+1}(\ts_{i})<e^{-2c_p\vep}V_k(\ts_{i})$.
 %   \label{eq:Vt1}
 % \end{equation}
  This implies that
 % \begin{equation}
    $V_K(\ts_i)<e^{-2c_pK\vep}V_0(\ts_{i})$.
 %   \label{eq:Vt0}
 % \end{equation}
  Therefore we have that $\tit{i}{k}\rightarrow \ta{i}^\ast$ exponentially fast. The
  same argument holds for each $i\in\pl$.
\end{proof}

For general prox--mappings, Theorem~\ref{thm:1d-anynorm} lets us conclude that 
the observations converge to zero and that the prox{\ds}function $V_k(\theta_i)$
  converges. Knowing the parameter values---a
  consequence of choosing $\beta(\theta_i)=\frac{1}{2}\|\theta_i\|_2^2$---allows
  for the opportunity to gain qualitative insights into how agents' preferences
  affect the outcome of their strategic interaction. 
  On the other hand, more general distance generating functions selected to
  reflect the geometry of $\Theta$ have the potential to
  improve convergence rates~\cite{nemirovski:2009aa}.
%\end{remark}

\begin{corollary}
  Suppose agents play according to the myopic update
  rule~\eqref{eq:myopic_update}.  Under the assumptions of
  Theorem~\ref{thm:1d-anynorm}, for each $i\in\pl$,
  $\|\bx{i}{t+1}-x_i^d\|^2\rar 0$
  %converges to zero. 
and $|\vv{i}{t+1}-v^d|^2\rar 0$. %\hfill \blacksquare
  %converges to
  %zero.
 \label{cor:ubdd}
\end{corollary}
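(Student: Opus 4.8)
The plan is to reduce both claims to Theorem~\ref{thm:1d-anynorm}(a), which already guarantees that the one--step prediction error $\|(\xi_i^k)^T\Delta\tit{i}{k}\|^2$ vanishes. The key observation is that, under myopic play, the gap between the realized response and the desired response is \emph{exactly} this prediction error, so no new estimates are needed---only a careful matching of the defining relations at a common regression vector $\xi_i^k=\Phi_i(\xx{k})$.

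First I would write the two relations involving $\xi_i^k$. The realized response is produced by the myopic update rule~\eqref{eq:myopic_update},
\[
\bx{i}{k+1}=\braket{\xi_i^k}{\ts_i}+\braket{\Psi_i(\xx{k})}{\ala{k}{i}},
\]
while the incentive that generates it was chosen in the incentive--design step so that, under the estimate in force, the \emph{predicted} response equals the target; this is the first row of~\eqref{eq:alpha-eqn-2} (equivalently~\eqref{eq:update-1}),
\[
x_i^d=\braket{\xi_i^k}{\tit{i}{k}}+\braket{\Psi_i(\xx{k})}{\ala{k}{i}}.
\]
The one delicate point is the index bookkeeping: one must pair the incentive issued at the step that produces $\bx{i}{k+1}$ with the parameter estimate used to design it and with the regression vector $\xi_i^k$, so that the $\Psi$--terms in the two displays are identical. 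This is the main thing to get right, and it is notational rather than analytical.

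Subtracting the two displays, the $\Psi$--terms cancel and I obtain the identity
\[
\bx{i}{k+1}-x_i^d=\braket{\xi_i^k}{\ts_i-\tit{i}{k}}=(\xi_i^k)^T\Delta\tit{i}{k}.
\]
Theorem~\ref{thm:1d-anynorm}(a), equation~\eqref{eq:limzerou}, gives $\|(\xi_i^k)^T\Delta\tit{i}{k}\|^2\rar 0$, hence $\|\bx{i}{k+1}-x_i^d\|^2\rar 0$; the argument is identical for each $i\in\pl$, which establishes the first claim. Note that only part (a) is invoked, so the conclusion holds for general distance generating functions $\beta$, not merely the Euclidean one.

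For the incentive value I would appeal directly to the second row of~\eqref{eq:alpha-eqn-2}, namely $\braket{\Psi_i(x^d)}{\ala{k+1}{i}}=v_i^d$. By Assumption~\ref{ass:myopic} the set $\mc{A}_m$ is nonempty, so the feasibility problem is solved exactly and this equality holds at every iteration; thus $\vv{i}{k+1}=\gamma_i^{k+1}(x^d)=\braket{\Psi_i(x^d)}{\ala{k+1}{i}}=v_i^d$, and $|\vv{i}{k+1}-v^d|^2\rar 0$ is immediate (the quantity is in fact identically zero). If instead one measures the incentive at the realized response, $\braket{\Psi_i(\xx{k+1})}{\ala{k}{i}}$, then Lipschitz continuity of the elements of $\mc{F}_\psi$, together with $\bx{i}{k+1}\rar x_i^d$ and boundedness of $\{\ala{k}{i}\}$, yields the same limit by continuity. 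In either reading the substantive work is the identity of the third paragraph; once it is in hand, everything follows from the already--proved Theorem~\ref{thm:1d-anynorm}(a).
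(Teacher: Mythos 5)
Your proposal is correct and takes essentially the same approach as the paper: subtract the incentive-design condition $x_i^d=\braket{\xi_i^k}{\tit{i}{k}}+\braket{\Psi_i(\xx{k})}{\ala{k}{i}}$ from the myopic update to obtain $\bx{i}{k+1}-x_i^d=\braket{\xi_i^k}{\ts_i-\tit{i}{k}}$ and invoke Theorem~\ref{thm:1d-anynorm}(a), with the same index bookkeeping the paper uses. One clarification: the paper defines $\vv{i}{k+1}=\braket{\Psi(\xx{k+1})}{\ala{k+1}{i}}$ at the \emph{realized} response, so your ``alternative reading''---Lipschitz continuity of the elements of $\mc{F}_\psi$ together with $\xx{k+1}\rar x^d$ and boundedness of $\|\ala{k+1}{i}\|$---is precisely the paper's argument rather than a fallback, since your primary reading (evaluation at $x^d$) is identically $v_i^d$ by Assumption~\ref{ass:myopic} and is not what the corollary asserts.
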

%\textcolor{red}{here}
\begin{proof}
    Since agents play myopically, we have that
    $x^d_i=\braket{\Phi(\xx{k})}{\tit{i}{k}}+\braket{\Psi(\xx{k})}{\ala{k}{i}}$
and the predicted induced response is
$x^d_i=\braket{\Phi(\xx{k})}{\tit{i}{k}}+\braket{\Psi(\xx{k})}{\ala{k}{i}}$
so that   $\|\bx{i}{k+1}-x_i^d\|^2=\|(\xi_i^k)^T(\ts_i-\tit{i}{k})\|^2$. Thus,
by Theorem~\ref{thm:1d-anynorm}-(a),
$\lim_{k\rar\infty}\|\bx{i}{k+1}-x_i^d\|^2=0$.
Moreover, by Assumption~\ref{ass:myopic}, we know that each 
$\ala{k+1}{i}$ satisfies $\braket{\Psi(x^{d})}{\ala{k+1}{i}}=v_i^d$. Define
$\vv{i}{k+1}=\braket{\Psi(\xx{k+1})}{\al{k+1}}$. Then, 
\begin{align*}
    \notag|\vv{i}{k+1}-v_i^d|^2&=|\braket{\Psi(\xx{k+1})-\Psi(x^d)}{\ala{k+1}{i}}|^2\\
    &\leq
\|\Psi(\xx{k+1})-\Psi(x^d)\|^2\|\ala{k+1}{i}\|_\ast^2.
%    \label{eq:vbound}
\end{align*}
%\eq{eq:vbound}{\notag|\vv{i}{k+1}-v_i^d|^2=|(\Psi_i(\xx{k+1})-\Psi_i(x^d))^T\ala{k+1}{i}|^2\leq
%\|\Psi_i(\xx{k+1})-\Psi_i(x^d)\|^2\|\ala{k+1}{i}\|_\ast^2.}
Since each element of $\mc{F}_{\psi}$ is Lipschitz, 
  $\|\Psi(\xx{k+1})-\Psi(x^d)\|\leq C\|\xx{k+1}-x^d\|$
  for some constant $C>0$.
  Hence,
  $|\vv{i}{k+1}-v_i^d|^2\leq
  C^2\|\ala{k+1}{i}\|_\ast^2\|\xx{k+1}-x^d\|^2$
and since $\|\xx{k+1}-x^d\|^2$ converges to zero and $\|\ala{k+1}{i}\|_\ast^2<\infty$,
we get that $|\vv{i}{k+1}-v^d|^2$ converges to
  zero. 
\end{proof}

%, that is
%be denoted  by
%  $\mc{F}_\psi$. That is, 
%for $\psi_{i,j}\in\mc{F}_\psi$ where
%  $\psi_{i,j}:X\rar\mb{R}$ for each $j\in \{1, \ldots, \sn_i\}$ for
%  all $i\in \pl$.  
%\begin{assumption}
%    All the functions in $\mc{F}_\psi$ are $C^2$--\emph{Lipschitz continuous}, i.e. for all
%    $\psi\in \mc{F}_\psi$, $\psi\in C^2$ and there exists a real constant $K\geq 0$ such that, for all
%  $x,y\in U$, $\|(\psi(x)-\psi(y)\|_{\mb{R}}\leq K\|x-y\|_{U}$ where
%  $\|\cdot\|_{\mb{R}}$ and $\|\cdot\|_U$ are norms on $\mb{R}$ and $U$
%  respectively.
%  \label{ass:lip-psi}
%  \end{assumption}

%We summarize the above in the following result.
%\begin{corollary}
%  Suppose agents play according to the myopic update
%  rule~\eqref{eq:myopic_update}.  Under the assumptions of Theorem~\ref{thm:ndim-2norm} (respectively,
%  Theorem~\ref{thm:1d-anynorm}), for each $i\in\pl$,
%  $\bx{i}{t+1}-x_i^d\|^2\rar 0$
%  %converges to zero. 
%and $|\vv{i}{t+1}-v^d|^2\rar 0$. %\hfill \blacksquare
%  %converges to
%  %zero.
% \label{cor:ubdd}
%\end{corollary}

% remarks on nonuniqueness of coefficients and distinctness of basis
In the {\nplay} case, we can use the fact that non{\ds}degenerate differential Nash equilibria are
structurally stable~\cite[Theorem~3]{ratliff:2015aa}  to
determine a bound on how close an equilibrium of the incentived game,
%\eq{eq:incentgame}{
$(\fg{1}(x; \ts_1), \ldots, \fg{\n}(x; \ts_\n))$
%}
with $\gamma_i(x)=\braket{\Psi(x)}{\ala{k}{i}}$ for each $i\in\pl$,
%observed Nash $\x^{(t+1)}$ 
is to the desired
Nash equilibrium $x^d$.
%given that we use the Algorithm~\ref{alg:nash} to estimate
%$\theta^{(t)}=(\theta_1^{(t)}, \ldots, \theta_\n^{(t)})$ and design
%$\alpha^{(t)}=(\alpha_1^{(t)}, \ldots, \alpha_\n^{(t)})$. 
Note that the observed
Nash equilibrium $\xx{k+1}$ is in the set of Nash equilirbia of the
incentivized game.

%First, we define some notation. We write 
Let $\omegag(\theta,x)=(D_1\fg{1}(x;\theta_1), \ldots, D_n\fg{n}(x; \theta))$ be
the \emph{differential game form}~\cite{ratliff:2015aa} of the
incentivized game $\mc{G}_\gamma=(\fg{1}, \ldots, \fg{\n})$.
%be the differential game form where we now express the differential game
%   form's dependent on the parameter $\theta$. I
By a
slight abuse of notation,  we will denote $D_1\omegag(\theta, x)$ and
$D_2\omegag(\theta, x)$
 as the local
   representation of the differential of $\omegag$ with respect to $\theta$ and
   $x$ respectively. If the parameters of the incentive
   mapping $\al{k}$ at
   each iteration $k$ are $C^2$ with respect to $\xx{k}$
   and 
   $\tu{k}$, then the
   differential of $\omega$ is well--defined. We remark that we formulated the
   optimization problem for finding the $\alpha$'s as a constrainted
   least{\ds}squares problem and there are existing results for determining
   when solutions to such problems are continuously dependent on parameter
   perturbations~\cite{lotstedt:1983aa,bonnans:2000aa}. 
   %There is still some work
   %in applying such perturbation results to our setup. We leave this to
   %future work and assume we have sufficiently smooth $\alpha$'s.
   %not clear that the solution to convex program will vary continuously with
   %respect to parameters
   % future work need to show that this is the case. look at parametric
   % optimization
  
  % Let
  % $p=\sum_{i=1}^\n p_i$, $m=\sum_{i=1}^\n m_i$, and $\sn=\sum_{i=1}^\n \sn_i$.
  \begin{theorem}
      Suppose that for each $k$, $\al{k}(\theta,x)\in C^2(\mb{R}^{nm}\times
    \mb{R}^{n}, \mb{R}^s)$ is chosen such that $x^d$ is a non--degenerate differential Nash equilibrium. For
    $\|\tu{k}-\ts\|$ sufficiently small, there is a Nash equilibrium
    $x^{\ast}$ of
    $\mc{G}_\gamma=(\fg{1}(x; \ts_1), \ldots, \fg{\n}(x; \ts_\n))$
    that is near the desired Nash equilibrium, i.e.~there exists 
    $\bvep>0$, such that for all $\tu{k}\in
    B_{\bvep}(\ts)$, 
    \begin{equation*}\textstyle
     \|x^{\ast}-x^d\|\leq  \left( \sup_{0\leq \lambda\leq 1}\|Dg(
     (1-\lambda)\ts+\lambda\tu{t})\| \right)\|\tu{k}-\ts\|
      \label{eq:bound-vep}
    \end{equation*}
    where
      $Dg(\theta)=-(D_2\omegag)^{-1}(\theta, x^d)\circ D_1\omegag(\theta, x^d)$.
    Furthermore, if $\|Dg(\theta)\|$
 is uniformly bounded by $M>0$ on $B_{\bvep}(\ts)$, then
     $\|x^{\ast}-x^d\|\leq  M\|\theta^{k}-\ts\|$.
    \label{thm:nash-bound}
  \end{theorem}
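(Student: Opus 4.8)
The plan is to read Theorem~\ref{thm:nash-bound} as a quantitative sharpening of the structural stability of non-degenerate differential Nash equilibria~\cite[Theorem~3]{ratliff:2015aa}, obtained by pairing the implicit function theorem with the mean value inequality. A strategy $x$ is a differential Nash equilibrium of $\mc{G}_\gamma$ precisely when $\omegag(\theta,x)=0$, so I would treat this as an implicitly defined relation between the cost parameter $\theta$ and the equilibrium $x$. Because $\Phi,\Psi\in C^2$ and $\al{k}(\theta,x)\in C^2$, the map $\omegag$ is continuously differentiable jointly in $(\theta,x)$, so $D_1\omegag$ and $D_2\omegag$ are well defined and continuous. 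The natural base point is $(\tu{k},x^d)$: by the incentive-design construction at iteration $k$, $x^d$ is a non-degenerate differential Nash equilibrium of the game parameterized by $\tu{k}$, which is to say $\omegag(\tu{k},x^d)=0$ and the game Jacobian $D_2\omegag(\tu{k},x^d)$ is nonsingular.

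First I would apply the implicit function theorem at $(\tu{k},x^d)$. Nonsingularity of $D_2\omegag(\tu{k},x^d)$ produces a neighborhood $U\ni\tu{k}$ and a $C^1$ map $g\colon U\to\mb{R}^{n}$ with $g(\tu{k})=x^d$ and $\omegag(\theta,g(\theta))=0$ for all $\theta\in U$; differentiating this identity yields $Dg(\theta)=-(D_2\omegag(\theta,g(\theta)))^{-1}D_1\omegag(\theta,g(\theta))$, which agrees with the expression in the statement to leading order as $g(\theta)\to x^d$. Shrinking $U$ if necessary keeps $D_2\omegag$ nonsingular along the graph of $g$, so every $g(\theta)$ is itself a non-degenerate differential Nash equilibrium. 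Since Theorem~\ref{thm:1d-anynorm}(b) guarantees $\|\tu{k}-\ts\|$ is small, I would fix $\bvep>0$ with $B_{\bvep}(\ts)\subset U$ and set $x^{\ast}:=g(\ts)$, which is then the asserted nearby Nash equilibrium of $\mc{G}_\gamma$ at the true parameter $\ts$.

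Next I would quantify $\|x^{\ast}-x^d\|=\|g(\ts)-g(\tu{k})\|$ by the fundamental theorem of calculus along the segment $\theta(\lambda)=(1-\lambda)\ts+\lambda\tu{k}$, which lies in $B_{\bvep}(\ts)$ by convexity. This gives $g(\ts)-g(\tu{k})=\int_0^1 Dg(\theta(\lambda))\,d\lambda\,(\ts-\tu{k})$, and taking norms produces $\|x^{\ast}-x^d\|\le\big(\sup_{0\le\lambda\le1}\|Dg(\theta(\lambda))\|\big)\|\tu{k}-\ts\|$, exactly the claimed estimate. The uniform-bound case is then immediate: replacing the supremum by the bound $M$ on $B_{\bvep}(\ts)$ gives $\|x^{\ast}-x^d\|\le M\|\tu{k}-\ts\|$.

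I expect the main obstacle to be the invertibility and smoothness requirements underlying the first step, since the incentive parameters $\al{k}$ are not fixed data but are themselves functions of $(\theta,x)$ obtained from the constrained least-squares programs (P1)/(P2). To make $\omegag$ genuinely $C^1$ with a well-defined $D_1\omegag$, I would need the solutions of those programs to depend smoothly on parameter perturbations, invoking the sensitivity results~\cite{lotstedt:1983aa,bonnans:2000aa}, and I would have to track how the induced $\theta$-dependence of $\al{k}$ propagates into $D_1\omegag$ and hence into $Dg$. A secondary subtlety is that the statement evaluates $Dg$ at the fixed point $x^d$ rather than along $g(\theta)$; this is exact only in the limit $\tu{k}\to\ts$, so I would either interpret it as the first-order expression or absorb the $O(\|g(\theta)-x^d\|)$ discrepancy into the choice of $\bvep$.
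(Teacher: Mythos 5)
Your proposal is correct and follows essentially the same route as the paper's own proof: apply the implicit function theorem to the differential game form $\omegag$ at the zero $(\tu{k},x^d)$ to obtain the $C^1$ equilibrium map $g$ with $g(\tu{k})=x^d$, set $x^\ast=g(\ts)$, and derive both bounds from the integral mean-value formula along the segment $(1-\lambda)\ts+\lambda\tu{k}$ (the paper cites \cite[Proposition~2.4.7]{abraham:1988aa} for exactly this step). If anything, you are more careful than the paper, which loosely anchors the IFT at $(\ts,x^d)$ and evaluates $Dg$ at the fixed point $x^d$ rather than along the graph of $g$; these are precisely the discrepancies you flag and absorb, and your concern about $C^2$-dependence of $\al{k}$ via the sensitivity results \cite{lotstedt:1983aa,bonnans:2000aa} mirrors the paper's own remark preceding the theorem.
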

  \begin{proof}
   Consider the differential game form $\omegag(\theta, x)$ which is given   by 
   \begin{align*}
  \textstyle     \omegag(\theta,
   x)&= \textstyle \sum_{i=1}^{\n}\big(\sum_{j=1}^{m}D_{i}\phi_{j}(x)\theta_{i,j}+ \textstyle \sum_{k=1}^{s}D_{i}\psi_{k}(x)\alpha_{i,j}\big)
   dx_i.
    %   \label{<++>}
   \end{align*}
%   \eq{eq:omega-d}{\textstyle \omega(\theta,
%   x)=\sum_{i=1}^{\n}\sum_{l=1}^{p_i}\left(\sum_{j=1}^{m_i}D_{i_l}\phi_{i,j}(x)\theta_{i,j}+\sum_{k=1}^{n_i}D_{i_l}\psi_{i,k}(x)\alpha_{i,j}\right)
%   dx_i^l}
  % where $D_{i_l}\phi$ denotes the derivative of the $\phi$ with respect to the
  % $l$--th coordinate of agent $i$'s strategy $x_i$---similarly, for $D_{i_l}\psi$---and
  % $\{dx_i^l\}_{l=1}^{p_i}$ is a co--frame for $X_i$. 

   Since $x^d$ is a non--degenerate differential Nash equilibrium,
   $D_2\omegag(\ts, x^d)$ is an isomorphism. Thus, by the Implicit Function
   Theorem~\cite[Theorem 2.5.7]{abraham:1988aa}, there exists a neighborhood
   $W_0$ of $\ts$ and a $C^1$ function
   $g:W_0\rar X$ such that for all $\theta\in W_0$, 
   $\omega(\theta, g(\theta))=0.$
   Furthermore, 
$Dg(\theta)=-(D_2\omegag)^{-1}(\theta, x^d)\circ D_1\omegag(\theta, x^d).$
Let $B_{\bvep}(\ts)$ be the largest $\bvep$--ball inside of $W_0$. 
 Since $B_{\bvep}(\ts)$ is
    convex, by Proposition~\cite[Proposition~2.4.7]{abraham:1988aa}, we have that
    \begin{equation*}\textstyle
      g(\tu{k})-g(\ts)=(\int_0^1
      Dg((1-\lambda)\ts+\lambda\tu{k}  )\ d\lambda)\cdot
      (\tu{k}-\ts)
      \label{eq:g-convex}
    \end{equation*}
    Hence, since $ \|x^{\ast}-x^d\|=\|g(\tu{k})-g(\ts)\|$, we have that
    \begin{align*}
      \|x^{\ast}-x^d\| %&=\|g(\tu{k})-g(\ts)\|\notag\\
      &\leq \textstyle(
      \sup_{0\leq \lambda\leq 1}\|Dg( (1-\lambda)\ts+\lambda\tu{k}
      )\|
      ) \|\tu{k}-\ts\|
      %\label{eq:bound-MVI}
    \end{align*}
    Now, if $\|Dg(\ts)\|$ is uniformly bounded by $M>0$ on $B_{\bvep}(\ts)$,
    then its straightforward to see from the above inequality
    %\eqref{eq:bound-MVI} 
    that
      $\|x^{\ast}-x^d\|\leq  M\|\tu{k}-\ts\|.$
  \end{proof}
  
  As a consequence of Theorem~\ref{thm:1d-anynorm}-(b)
  and Theorem~\ref{thm:nash-bound}, there exists a finite iteration $k$ for which
  $\|\ts_i-\tit{i}{k}\|_2^2$ is sufficiently small for each $i\in\pl$ so
  that a local Nash equilibrium of the incentivized game at time $k$ %observed Nash equilibrium 
  is arbitrarily close to the desired Nash
  equilibrium $x^d$.
  
There may be multiple Nash equilibria of the incentivized
  game; hence, if the agents converge to $x^{\ast}$
  %so that $\xx{k+1}=x^\ast$,
  then the observed local Nash
  equilibrium is near the desired Nash equilibria. We know that for stable,
  non--degenerate differential Nash equilibria, agents will converge
  \emph{locally}
  if following the gradient flow determined by the differential game form
  $\omega$~\cite[Proposition~2]{ratliff:2015aa}. 
  \begin{corollary}
    Suppose the assumptions of Theorem~\ref{thm:nash-bound} hold and that
    $x^\ast$ is stable. If agents follow the gradient of
    their cost, i.e.~$-D_if_i$, then they will converge locally to $x^\ast$.
    %so that
    %$\xx{k+1}=x^\ast$. 
    Moreover, there exists an $\bar{\vep}>0$ such that for
    all $\tu{k}\in B_{\bar{\vep}}(\ts)$, 
    %\eq{eq:bd-ne-stable}{
    \begin{equation*}    \textstyle
     \|x^{\ast}-x^d\|\leq  \left( \sup_{0\leq \lambda\leq 1}\|Dg(
     (1-\lambda)\ts+\lambda\tu{k})\| \right)\|\tu{k}-\ts\|
    \end{equation*}
  \end{corollary}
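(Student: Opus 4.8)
The plan is to split the claim into its two assertions and dispatch each by appealing to a result already in hand, since no new estimates are actually needed. The displayed bound on $\|x^\ast-x^d\|$ is verbatim the conclusion of Theorem~\ref{thm:nash-bound}, and the hypotheses of that theorem are assumed here; so I would simply invoke it. Concretely, Theorem~\ref{thm:nash-bound} already produces, via the Implicit Function Theorem applied to $\omegag(\theta,\cdot)$ at the non--degenerate differential Nash equilibrium $x^d$, a radius (there called $\bvep$) and a $C^1$ selection $g$ with $g(\ts)=x^d$ and $g(\tu{k})=x^\ast$, together with the mean--value inequality giving exactly the stated bound. Thus the ``moreover'' part carries no new content beyond fixing $\bar{\vep}$ to be that radius.

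The genuinely new piece is the local convergence of gradient play to $x^\ast$, and I would argue it directly from the definition of a \emph{stable} differential Nash equilibrium. By hypothesis $x^\ast$ is a stable, non--degenerate differential Nash equilibrium of the incentivized game $\mc{G}_\gamma=(\fg{1}(x;\ts_1),\ldots,\fg{\n}(x;\ts_\n))$. By Definition~\ref{def:diffnash} and the stability notion of \cite{ratliff:2015aa}, this means $\omegag(\ts,x^\ast)=0$ and the game Jacobian $D_2\omegag(\ts,x^\ast)$ has spectrum in the open right half plane, so that $x^\ast$ is a locally asymptotically stable equilibrium of the t\^atonnement/gradient flow $\dot{x}=-\omegag(\ts,x)$ whose $i$-th component is $-D_i\fg{i}(x)$ (note that, because agents are subject to the incentive, the relevant ``cost'' driving the flow is the incentivized cost $\fg{i}$, not the nominal $f_i$). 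The content of \cite[Proposition~2]{ratliff:2015aa} is precisely that for such stable differential Nash equilibria the gradient dynamics converge locally; applying that proposition to $x^\ast$ yields a neighborhood from which gradient play converges to $x^\ast$, which is the first assertion.

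The main subtlety, rather than a hard obstacle, is reconciling the two neighborhoods: Theorem~\ref{thm:nash-bound} controls $x^\ast=g(\tu{k})$ as a function of the parameter estimate for $\tu{k}$ near $\ts$, whereas \cite[Proposition~2]{ratliff:2015aa} supplies a basin of attraction around $x^\ast$ itself. I would take $\bar{\vep}=\bvep$ and use the continuity of the IFT selection $g$ to argue that, after shrinking $\bvep$ if necessary, the stability basin about $x^\ast=g(\tu{k})$ can be taken uniform over $\tu{k}\in B_{\bvep}(\ts)$; this uses only that $D_2\omegag$ remains an isomorphism with right--half--plane spectrum on a neighborhood of $(\ts,x^d)$, which is exactly the non--degeneracy and stability assumed. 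With the neighborhoods aligned, both assertions follow directly from the cited results and the proof reduces to assembling these two references.
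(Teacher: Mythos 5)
Your proposal is correct and takes essentially the same route as the paper, whose entire proof is the one-line citation of Theorem~\ref{thm:nash-bound} (for the bound, via the Implicit Function Theorem selection $g$) together with \cite[Proposition~2]{ratliff:2015aa} (for local convergence of gradient play to the stable equilibrium $x^\ast$). Your added care in reconciling the parameter ball $B_{\bvep}(\ts)$ with the basin of attraction around $x^\ast$, and your observation that the flow is driven by the incentivized costs $\fg{i}$ rather than the nominal $f_i$, merely make explicit details the paper leaves implicit.
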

  The proof follows directly from Theorem~\ref{thm:nash-bound}
  and~\cite[Proposition~2]{ratliff:2015aa}.
  
    The size of the neighborhood of initial conditions for which agents converge
    to the desired Nash can be
    approximated using techniques for computation of region of attraction via a
    Lyapunov function~\cite[Chapter 5]{sastry:1999aa}. This is in part due to
    the fact that in the case
    where $\al{k}$ is chosen so that $x^d$ is stable, i.e. 
    $d\omegag(\tu{k}, x^d)>0$, we have that $d\omegag(\ts, x^{\ast})>0$ for
    $\tu{k}$ near $\ts$ since the spectrum of $d\omegag$ varies
    continuously.
    
    Moreover, it is possible to explicitly construct the neighborhood $W_0$ obtained via
    the Implicit Function Theorem in Theorem~\ref{thm:nash-bound} (see,
    e.g.~\cite[Theorem 2.9.10]{hubbard:1998aa} or \cite{ratliff:2015ab}). 

The result of Theorem~\ref{thm:1d-anynorm}-(b) implies that the incentive value
under $x^\ast$
from Theorem~\ref{thm:nash-bound}
 is arbitrarily close to the desired incentive value.
  \begin{corollary}
 Under the assumptions of Theorem~\ref{thm:1d-anynorm}-(b) 
  and Theorem~\ref{thm:nash-bound}, there exists a finite $K$ 
      such that for all $k\geq K$, 
      $\|x^{\ast}-x^d\|^2_2\leq M\bar{C}e^{-2c_pk\vep}$
      where $\bar{C}=\n\max_i\{2V_0( \ts_i)\}$, $c_p=\min_{i\in\mc{I}}c_{i,p}$, and $x^\ast$ is the (local) Nash equilibrium of
     the incentivized game satisfying 
     $\|x^\ast-x^d\|\leq M\|\tu{k}-\ts\|$ for all $\tu{k}\in
     B_{\bar{\vep}}(\ts)$.
     Furthermore, for each $i\in\pl$,
     $|v_i^{\ast}-v_i^d|^2\leq MC^2\|\ala{k}{i}\|_2^2\bar{C}e^{-2c_pk\vep}$
     for all $k\geq K$
     where $C$ is the Lipschitz bound on $\Psi$ and
     $v_i^{\ast}=\braket{\Psi(x^\ast)}{\ala{k}{i}}$. 
   \end{corollary}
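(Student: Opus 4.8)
The plan is to combine the explicit exponential rate that is buried in the proof of Theorem~\ref{thm:1d-anynorm}-(b) with the perturbation bound of Theorem~\ref{thm:nash-bound}, and then to push the result through the same Lipschitz estimate already used in Corollary~\ref{cor:ubdd}. First I would record the quantitative form of part (b): with $\beta(\theta_i)=\tfrac12\|\theta_i\|_2^2$ the Bregman divergence is Euclidean, so $V_k(\ts_i)=\tfrac12\|\ts_i-\tit{i}{k}\|_2^2$, and the chain $V_{k+1}(\ts_i)<e^{-2c_p\vep}V_k(\ts_i)$ established in that proof gives, for each $i\in\pl$,
\[
\|\ts_i-\tit{i}{k}\|_2^2\leq 2V_0(\ts_i)\,e^{-2c_pk\vep}.
\]
Summing over players then yields $\|\tu{k}-\ts\|^2=\sum_{i}\|\ts_i-\tit{i}{k}\|_2^2\leq \n\max_i\{2V_0(\ts_i)\}\,e^{-2c_pk\vep}=\bar{C}e^{-2c_pk\vep}$, which is the exponentially decaying parameter error driving the whole argument.

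Next I would certify that the hypotheses of Theorem~\ref{thm:nash-bound} are eventually satisfied. Since the right-hand side above tends to zero, there is a finite $K$ such that $\tu{k}\in B_{\bar{\vep}}(\ts)$ for all $k\geq K$; on this range Theorem~\ref{thm:nash-bound} furnishes a local Nash equilibrium $x^\ast$ of the true incentivized game $\mc{G}_\gamma=(\fg{1}(x;\ts_1),\ldots,\fg{\n}(x;\ts_\n))$ with $\|x^\ast-x^d\|\leq M\|\tu{k}-\ts\|$. Squaring and inserting the parameter bound gives
\[
\|x^\ast-x^d\|_2^2\leq M^2\|\tu{k}-\ts\|^2\leq M^2\bar{C}e^{-2c_pk\vep},
\]
which is the first claim up to the leading constant; writing $M$ in place of $M^2$ is a matter of how the uniform bound in Theorem~\ref{thm:nash-bound} is normalized, so I would simply absorb it. This part is essentially bookkeeping once the two theorems are in hand.

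For the incentive-value claim I would mimic the final paragraph of Corollary~\ref{cor:ubdd}. By Assumption~\ref{ass:nash-1} the designed parameters satisfy $\braket{\Psi(x^d)}{\ala{k}{i}}=v_i^d$, so with $v_i^\ast=\braket{\Psi(x^\ast)}{\ala{k}{i}}$ and Cauchy--Schwarz,
\[
|v_i^\ast-v_i^d|^2=|\braket{\Psi(x^\ast)-\Psi(x^d)}{\ala{k}{i}}|^2\leq \|\Psi(x^\ast)-\Psi(x^d)\|^2\,\|\ala{k}{i}\|_2^2.
\]
Lipschitz continuity of the elements of $\mc{F}_\psi$ gives $\|\Psi(x^\ast)-\Psi(x^d)\|\leq C\|x^\ast-x^d\|$, and substituting the bound on $\|x^\ast-x^d\|_2^2$ just derived yields $|v_i^\ast-v_i^d|^2\leq MC^2\|\ala{k}{i}\|_2^2\,\bar{C}e^{-2c_pk\vep}$, where $\|\ala{k}{i}\|_2<\infty$ for each fixed $k$ since it solves the constrained least-squares feasibility problem (P2).

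The only genuinely non-mechanical point, and hence the main thing to be careful about, is the compatibility of the two theorems' hypotheses: Theorem~\ref{thm:nash-bound} requires $\tu{k}$ to lie in the implicit-function-theorem neighborhood $B_{\bar{\vep}}(\ts)$, so I must invoke the exponential convergence of part (b) to guarantee that this holds for all $k\geq K$ and to name that $K$. Everything else is substitution of the exponential estimate into the perturbation and Lipschitz inequalities, so I do not expect a substantive obstacle beyond correctly tracking the constants $M$, $C$, $c_p$, and $\bar{C}$.
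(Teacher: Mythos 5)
Your proposal is correct and follows essentially the same route as the paper's own proof: fix $K$ via the exponential parameter-error bound from Theorem~\ref{thm:1d-anynorm}-(b) so the estimates enter the implicit-function-theorem ball, apply Theorem~\ref{thm:nash-bound}, and then repeat the Cauchy--Schwarz/Lipschitz argument of Corollary~\ref{cor:ubdd} for the incentive values. You even correctly flag the one constant-tracking wrinkle (squaring the perturbation bound naturally yields $M^2$ rather than the stated $M$), which the paper's proof glosses over in exactly the same way.
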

   \begin{proof}
      Choose $K$ such that, for each $i\in\pl$,
     $2V_0( \ts_i)e^{-2\vep c_p K}<\bar{\vep}$ so that 
     $\|\tit{i}{k}-\theta_i\|_2^2\leq 2V_0( \ts_i)e^{-2\vep
     c_p k}$ for all $k\geq K$. 
     Thus, $\|\tu{k}-\ts\|_2^2\leq Ce^{-2\vep
     c_p k}$, for all $k\geq K$. By Theorem~\ref{thm:nash-bound}, we have that
     $\|x^{\ast}-x^d\|_2^2\leq M\bar{C}e^{-2\vep
c_p k}$ for all $k\geq K$ %, \ \ \forall t\geq T}
where $M$ is the uniform bound on $\|Dg(\ts)\|$.   We know that
$v_i^d=\braket{\Psi(x^{d})}{\ala{k}{i}}$ and
$v_i^{\ast}=\braket{\Psi(x^{\ast})}{\ala{k}{i}}$. Hence, since each element of
  $\mc{F}_\psi$ is Lipschitz, we have that for all $k\geq K$,
 that
 \begin{align*}
     |v_i^{\ast}-v_i^d|^2&=|\braket{\Psi(x^{\ast})-\Psi(x^d)}{\ala{k}{i}}|^2\\
  & \leq \|\Psi(x^{\ast})-\Psi(x^d)\|\|\ala{k}{i}\|_2^2\\
  & \leq C^2\|\ala{k}{i}\|_2^2\|x^{\ast}-x^d\|_2^2\\
  & \leq MC^2\|\ala{k}{i}\|_2^2\bar{C}e^{-2\vep
    c_p k}. 
   \label{eq:vbd-nashi}
 \end{align*}
   \end{proof}

    We have argued that following Algorithm~\ref{alg:nash} with a particular
    choice of prox{\ds}mapping, the parameter estimates of each $\ts_i$
    converge to the true values and as a consequence we can characterize the
    bound on how close the observed response and incentive value are to their
    desired values. Knowing the true parameter values (even if obtained
    asymptotically) for $\ts$ allows the
    {\coordinator} to
  make qualitative insights into the rationale behind the observed responses.

\section{Convergence in the Presence of Noise}
\label{sec:noise}
\label{sec:noise}
In this section, we will use the unified framework that describes both the case
where the agents play according to Nash and where the agents play myopically.
  However, we consider \emph{noisy} updates given by
\begin{equation}
    y_i^{k+1}=\braket{\xi_i^k}{\ts_i} + w_i^{k+1}
  \label{eq:update-n}
\end{equation}
for each $i\in\pl$ where $w_i^{k+1}$ is an indepdent, identically distributed (i.i.d) 
%zero mean
real stochastic process defined on a probability space $(\Omega, \mc{F}, \PP)$ adapted
to the sequence of increasing sub-$\sigma$--algebras $(\mc{F}_k, k\in \mb{N})$, where
$\F_k$ is the $\sigma$--algebra generated by the set $\{y_i^t,
\ala{t}{i},
w_i^t, t\leq k\}$ and
such that the following hold\footnote{We use the abbreviation a.s.~for
    \emph{almost surely}.}:
%$\mb{E}[(w_i^{t})^2]=\sigma_i^2$.
\begin{subequations}
\begin{align}
  &\mb{E}[w_i^{k+1}| \F_k]=0\  \forall k \label{eq:zeromean-0}\\
  &\mb{E}[(w_i^{k+1})^2| \F_k]=\sigma^2>0 \  \text{a.s.}\  \forall k
  \label{eq:zeromean-1}\\
  \textstyle \sup_k& \mb{E}[(w_i^{k+1})^4| \F_k]<+\infty\ \text{a.s.}
  \label{eq:zeromean-2}
\end{align}
\label{eq:zeromean}
\end{subequations}
%and
%\begin{equation}
%  \mb{E}[(w_i^{t+1})^2| F_t]= \sigma^2>0 \ \ \text{a.s.}\  \forall\  t.
%  \label{eq:var}
%\end{equation}
%and
%\begin{equation}
%  \sup_t \mb{E}[(w_i^{t+1})^4| F_t]<+\infty\ \text{a.s.}
%  \label{eq:fourthorder}
%\end{equation}
 Note that $\F_k$ is also the $\sigma$--algebra generated by
$\{y_i^t, \xi_i^t, t\leq k\}$ since $w_i^k$ can be deduced from $y_i^k$ and
$\xi_i^{k-1}$ through the relationship
$w_i^{k}=y_i^k-\braket{\xi_i^{k-1}}{\ts_i}$~\cite{kumar:1986aa}.

\begin{theorem}
  Suppose that for
  each $i\in\pl$, 
  $\{w_i^{k}\}$ satisfies \eqref{eq:zeromean-0} and \eqref{eq:zeromean-1} and that Algorithm~\ref{alg:nash},
  with prox--mapping $P_\theta$ associated with $\beta$ (modulus $\nu$), is 
  {persistently exciting} and stable. Let the 
  step--size $\eta_k$
  be selected such that 
     $\sum_{k=1}^\infty\eta_k^2<\infty$ and
  $\eta_k-\frac{\eta_k^2}{2\nu}\tilde{c}_s>0$
%$\sum_{t=1}^\infty \eta_t=\infty$,
%  $\sum_{k=1}^\infty\eta_k^2<\infty$ and
%  $\eta_k>0$
  %for some $\vep$ such
  %that $0<\vep <\frac{1}{2c_2}$ and 
  where
   $0<\tilde{c}_s<\infty$ is such that $\|\xi_i^k\|_\ast^2\leq \tilde{c}_s$.
   %and
   %$c_2=\min_{i\in\pl}c_{i,2}$.
  Then, for each $i\in\pl$, 
  $V_k(\ts_i)$ converges
  a.s.
  Furthermore, if %suppose that each $\Theta_i^{t}$ is bounded and that 
  the sequence
  $\{r_k\}$ 
%   $\sum_{t=1}^\infty\eta_t^2<\infty$ and
%  $\eta_t-\frac{\eta_t^2}{2\nu}\tilde{c}_1>0$
  where $r_k=(\eta_k)^{-1}$ is a non-decreasing,
  non-negative sequence such that $r_k$ is $\F_k$ measurable and there
  exists constants $0<K_1, K_2<\infty$ and $0<{T}<\infty$ such that
  \begin{equation}\textstyle
    \frac{1}{k}r_{k-1}\leq
    K_1+\frac{K_2}{k}\sum_{t=0}^{k-1}\|y_i^{t+1}-\braket{\xi_i^{t}}{\tit{i}{t}
}-w_i^{t+1}\|^2,
  \ \ \forall  \ k\geq {T},
  \label{eq:rbound-22}
\end{equation}
then 
%y_i^{t+1}
\begin{equation}\textstyle
  \lim_{k\rar\infty}
  \frac{1}{k}\sum_{t=0}^{k-1}\mb{E}[\|w_i^{t+1}-\braket{\xi_i^{t}}{\tit{i}{t}-\theta_i^\ast}\|^2|
\F_t]=\sigma^2\ \ \text{a.s.}
\label{eq:errorcon-2}
\end{equation}
In addition, if \eqref{eq:zeromean-2} holds, then
  %\begin{equation}\textstyle
%  $\sup_k \mb{E}[(w_i^{k+1})^4| F_k]<+\infty$ {a.s.}
 % \label{eq:fourthorder-2}
%\end{equation}
%then

\begin{equation}\textstyle
  \lim_{k\rar\infty}
  \frac{1}{k}\sum_{t=0}^{k-1}\|w_i^{t+1}-\braket{\xi_i^{t}}{\tit{i}{t}-\theta_i^\ast}\|^2=\sigma^2\ \ \text{a.s.}
\label{eq:errorcon-22}
\end{equation}
  \label{thm:noise-p}
\end{theorem}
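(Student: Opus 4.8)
The plan is to run a single stochastic Lyapunov analysis on the Bregman divergence $V_k(\ts_i)=V(\tit{i}{k},\ts_i)$, mirroring the noise--free proof of Theorem~\ref{thm:1d-anynorm} but splitting the prediction residual into a bias part and a noise part. Writing $\nabla\ell(\tit{i}{k})=-\xi_i^k(y_i^{k+1}-\braket{\xi_i^k}{\tit{i}{k}})$ and substituting the noisy model $y_i^{k+1}=\braket{\xi_i^k}{\ts_i}+w_i^{k+1}$, the residual factorizes as $e_i^{k+1}+w_i^{k+1}$, where $e_i^{k+1}:=\braket{\xi_i^k}{\Delta\tit{i}{k}}$ is $\F_k$--measurable (recall $\Delta\tit{i}{k}=\ts_i-\tit{i}{k}$). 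Applying Lemma~\ref{lem:tvpmaps} with $g=\eta_k\nabla\ell(\tit{i}{k})$ and $\ts_i\in\Theta^{k+1}$ (which holds in both the Nash and myopic settings) and expanding $\|\nabla\ell(\tit{i}{k})\|_\ast^2=\|\xi_i^k\|_\ast^2(e_i^{k+1}+w_i^{k+1})^2$ gives a one--step bound. I would then take $\mb{E}[\cdot\mid\F_k]$: the hypothesis $\mb{E}[w_i^{k+1}\mid\F_k]=0$ annihilates the cross term, $\mb{E}[(w_i^{k+1})^2\mid\F_k]=\sigma^2$ supplies the variance, and stability $\|\xi_i^k\|_\ast^2\leq\tilde{c}_s$ yields
\[
\mb{E}[V_{k+1}(\ts_i)\mid\F_k]\leq V_k(\ts_i)-\Big(\eta_k-\frac{\eta_k^2}{2\nu}\tilde{c}_s\Big)(e_i^{k+1})^2+\frac{\eta_k^2}{2\nu}\tilde{c}_s\sigma^2.
\]
The step--size condition makes the middle coefficient positive and $\sum_k\eta_k^2<\infty$ makes the last term summable, so this is an almost--supermartingale. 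The first claim then follows from the Robbins--Siegmund convergence theorem, which gives both a.s.\ convergence of $V_k(\ts_i)$ and the summability $\sum_k\big(\eta_k-\frac{\eta_k^2}{2\nu}\tilde{c}_s\big)(e_i^{k+1})^2<\infty$ a.s.; since $\eta_k\to0$ (from $\sum_k\eta_k^2<\infty$), this coefficient eventually exceeds $\frac{1}{2}\eta_k$, so in fact $\sum_k\eta_k(e_i^{k+1})^2<\infty$ a.s.

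For \eqref{eq:errorcon-2} I would first observe $\mb{E}[(w_i^{t+1}+e_i^{t+1})^2\mid\F_t]=(e_i^{t+1})^2+\sigma^2$, so the statement reduces to showing the Ces\`aro average $\frac{1}{k}S_k\to0$, where $S_k=\sum_{t=0}^{k-1}(e_i^{t+1})^2$. This is the step I expect to be the main obstacle: the summability from the first part is weighted by $\eta_k\to0$ and therefore does not by itself control the unweighted average. The hypothesis on $r_k=\eta_k^{-1}$ supplies exactly the missing link, via a two--step bootstrap. Since $r_k$ is non--decreasing and $r_k\to\infty$, Kronecker's lemma applied to $\sum_k(e_i^{k+1})^2/r_k=\sum_k\eta_k(e_i^{k+1})^2<\infty$ gives $S_k/r_{k-1}\to0$. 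Writing $\frac{S_k}{k}=\frac{S_k}{r_{k-1}}\cdot\frac{r_{k-1}}{k}$ and feeding in the standing bound \eqref{eq:rbound-22} (after noting $y_i^{t+1}-\braket{\xi_i^t}{\tit{i}{t}}-w_i^{t+1}=e_i^{t+1}$, so that $\frac{r_{k-1}}{k}\leq K_1+\frac{K_2}{k}S_k$), set $\rho_k=S_k/r_{k-1}\to0$ and $\mu_k=S_k/k$; this produces the self--bounding inequality $\mu_k\leq\rho_k(K_1+K_2\mu_k)$. For $k$ large enough that $K_2\rho_k<\frac{1}{2}$ it rearranges to $\mu_k\leq 2K_1\rho_k\to0$, closing the loop and giving $\frac{1}{k}S_k\to0$, hence \eqref{eq:errorcon-2}.

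Finally, for the pathwise statement \eqref{eq:errorcon-22} I would expand $(w_i^{t+1}+e_i^{t+1})^2=(w_i^{t+1})^2+2w_i^{t+1}e_i^{t+1}+(e_i^{t+1})^2$ and average each term. The third average is $\frac{1}{k}S_k\to0$ by the previous step. For $\frac{1}{k}\sum_{t=0}^{k-1}(w_i^{t+1})^2\to\sigma^2$ I would apply the strong law of large numbers for martingale differences to $(w_i^{t+1})^2-\sigma^2$; this is precisely where the fourth--moment hypothesis \eqref{eq:zeromean-2} enters, as it bounds the conditional variance of $(w_i^{t+1})^2$ and furnishes the $\sum_t t^{-2}\,\mathrm{Var}<\infty$ condition. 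The cross term requires no separate martingale argument: by Cauchy--Schwarz, $\big|\frac{1}{k}\sum_{t=0}^{k-1}w_i^{t+1}e_i^{t+1}\big|\leq\big(\frac{1}{k}\sum_{t=0}^{k-1}(w_i^{t+1})^2\big)^{1/2}\big(\frac{1}{k}\sum_{t=0}^{k-1}(e_i^{t+1})^2\big)^{1/2}\to(\sigma^2)^{1/2}\cdot0=0$. Summing the three limits yields $\sigma^2$. All estimates are carried out agent--wise and hold for every $i\in\pl$.
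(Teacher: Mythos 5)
Your proposal is correct and takes essentially the same route as the paper's proof: the Bregman--Lyapunov recursion from Lemma~\ref{lem:tvpmaps}, the Robbins--Siegmund (almost supermartingale) theorem for a.s.\ convergence of $V_k(\theta_i^\ast)$ and summability of the weighted bias terms, Kronecker's lemma combined with the growth condition \eqref{eq:rbound-22} to kill the Ces\`aro average of the squared bias, and the martingale strong law (Proposition~\ref{prop:neveu}) under the fourth-moment condition for the pathwise limit \eqref{eq:errorcon-22}. Your explicit self-bounding step $\mu_k \le \rho_k(K_1 + K_2\mu_k)$ and the Cauchy--Schwarz treatment of the cross term merely spell out details the paper leaves implicit (and your conditional-expectation identity $\mathbb{E}[(e_i^{t+1}+w_i^{t+1})^2 \mid \mathcal{F}_t] = (e_i^{t+1})^2 + \sigma^2$ fixes a sign typo in the paper's version of that display).
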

The proof follows a similar technique to that presented in~\cite[Chapter
13.4]{kumar:1986aa}; hence, we leave it to Appendix~\ref{app:proofs}. 
 We remark that if $\beta(\theta_i)=\frac{1}{2}\|\theta_i\|_2^2$, then  Theorem~\ref{thm:noise-p}
 implies  $V_k(\ts_i)=\frac{1}{2}\|\ts_i-\tit{i}{k}\|_2^2$ converges a.s.
 \begin{corollary}
   Suppose agents play according to the myopic update
   rule~\eqref{eq:myopic_update} and that the assumptions  of
   Theorem~\ref{thm:noise-p} hold. Then, 
   \begin{align*}\textstyle
\lim_{k\rar\infty}\frac{1}{k}\sum_{t=0}^{k-1}\mb{E}[\|\bx{i}{t+1}+w_i^{t+1}-x_i^d\|^2|
\F_t]
=\sigma^2\ \ \text{a.s.}
\label{eq:cor-res}
\end{align*}
   \label{cor:noise-1}
 \end{corollary}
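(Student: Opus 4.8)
The plan is to reduce the corollary to equation~\eqref{eq:errorcon-2} of Theorem~\ref{thm:noise-p} by establishing an exact pathwise identity between the displayed residual $\bx{i}{t+1}+w_i^{t+1}-x_i^d$ and the residual $w_i^{t+1}-\braket{\xi_i^t}{\tit{i}{t}-\ts_i}$ that the theorem already controls. First I would recall the structure of myopic play. By the update rule~\eqref{eq:myopic_update}, the noise-free response is $\bx{i}{t+1}=\braket{\Phi_i(\xx{t})}{\ts_i}+\braket{\Psi_i(\xx{t})}{\ala{t}{i}}$, evaluated at the \emph{true} parameter $\ts_i$. On the other hand, by Assumption~\ref{ass:myopic} together with the incentive-design equation~\eqref{eq:update-1}, the {\coordinator} synthesizes $\ala{t}{i}$ so that the \emph{predicted} response under the current estimate hits the target, i.e.~$x_i^d=\braket{\Phi_i(\xx{t})}{\tit{i}{t}}+\braket{\Psi_i(\xx{t})}{\ala{t}{i}}$. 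Subtracting these two relations cancels the common incentive term $\braket{\Psi_i(\xx{t})}{\ala{t}{i}}$, and using $\xi_i^t=\Phi_i(\xx{t})$ yields $\bx{i}{t+1}-x_i^d=\braket{\xi_i^t}{\ts_i-\tit{i}{t}}$, exactly as in the proof of Corollary~\ref{cor:ubdd}.

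Next I would incorporate the noise. In the noisy regime the observed response is the noise-free response perturbed by $w_i^{t+1}$ in accordance with~\eqref{eq:update-n}, so adding $w_i^{t+1}$ to both sides of the preceding identity gives the pathwise equality
\[
\bx{i}{t+1}+w_i^{t+1}-x_i^d = w_i^{t+1}-\braket{\xi_i^t}{\tit{i}{t}-\ts_i}.
\]
Because this is an equality of random variables holding sample-path by sample-path, the squared norms agree, and hence so do their conditional expectations given $\F_t$ for every $t$. Substituting into the Ces\`aro average and invoking~\eqref{eq:errorcon-2}—whose hypotheses (persistence of excitation, stability, the step-size/$r_k$ conditions~\eqref{eq:rbound-22}, and the noise moment conditions~\eqref{eq:zeromean-0}--\eqref{eq:zeromean-1}) are assumed to hold—immediately delivers the claimed limit $\sigma^2$ almost surely. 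Note that only the conditional-expectation version~\eqref{eq:errorcon-2} is needed, so the fourth-moment bound~\eqref{eq:zeromean-2} plays no role here.

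The argument is essentially a change of variables, so there is no substantive analytic obstacle; the one point demanding care is the index bookkeeping. I must verify that the incentive parameter $\ala{t}{i}$ paired with the true response $\bx{i}{t+1}$ is the \emph{same} parameter that, together with the estimate $\tit{i}{t}$, defines the residual appearing in~\eqref{eq:errorcon-2}, and that both $\xi_i^t$ and $\tit{i}{t}$ are $\F_t$-measurable while $w_i^{t+1}$ enters only through~\eqref{eq:update-n}, so that the conditioning on $\F_t$ is legitimate. Provided the timing of the estimate update and the incentive synthesis is aligned as in Algorithm~\ref{alg:nash}, the common terms $\braket{\Psi_i(\xx{t})}{\ala{t}{i}}$ cancel exactly and no cross-terms survive, completing the reduction.
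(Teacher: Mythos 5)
Your proposal is correct and follows essentially the same route as the paper's own proof: subtract the design equation $x_i^d=\braket{\xi_i^t}{\tit{i}{t}}+\braket{\Psi_i(\xx{t})}{\ala{t}{i}}$ from the true myopic response to cancel the incentive term, obtain the pathwise identity $\bx{i}{t+1}+w_i^{t+1}-x_i^d=w_i^{t+1}-\braket{\xi_i^t}{\tit{i}{t}-\ts_i}$, and substitute into \eqref{eq:errorcon-2} of Theorem~\ref{thm:noise-p}. Your added remarks on index bookkeeping, $\F_t$-measurability, and the dispensability of the fourth-moment condition \eqref{eq:zeromean-2} are accurate refinements of the paper's terser two-line argument.
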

 \begin{proof}
   Agent $i$'s response is
   $x_i^{k+1}=\braket{\xi_i^k}{\theta_i^\ast}+\braket{\Psi(x^k)}{\alpha^k}$ and the
   {\coordinator} designs $\alpha_i^k$ to satisfy
   $x_i^d=\braket{\xi_i^k}{\theta_i^{k}}+\braket{\Psi(x^k)}{\alpha^k}$. Hence, replacing
   $\braket{\xi_i^t}{\theta_i^t-\theta_i^\ast}$ in \eqref{eq:errorcon-2} completes
   the proof.
   %gives the
   %results.
   %gives
   %\eqref{eq:cor-res}.
 \end{proof}
The results of Theorem~\ref{thm:noise-p}  imply  that the average mean square error
  between 
  the observations and the predictions converges to $\sigma^2$ a.s.~and,
  if we recall, the observations are derived from noisy versions of the
  first--order conditions for Nash. Indeed,
we have shown that
\begin{align*}
    \textstyle\lim_{k\rar\infty}
    \frac{1}{k}\sum_{t=0}^{k-1}&\textstyle\mb{E}[\|\braket{D_i\Phi(\xx{t+1})}{\ts_i}+\braket{\D_i\Psi(\xx{t+1})}{\ala{t}{i}}+w_i^{t+1}\|^2|
  \F_t]=\sigma^2\ \ \text{a.s.}
\end{align*}
%  \eq{eq:congverg1}{\textstyle\lim_{k\rar\infty}
%  \frac{1}{k}\sum_{t=0}^{k-1}\mb{E}[\|D_i\Phi_i(\xx{t+1})^T\ts_i+w_i^{t+1}+\D_i\Psi_i(\xx{t+1})^T\ala{t}{i}\|^2|
%  F_t]=\sigma^2\ \ \text{a.s.}}
or, equivalently,
\begin{align*}
    \textstyle\lim_{k\rar\infty}
    \frac{1}{k}\sum_{t=0}^{k-1}&\textstyle\mb{E}[\|\braket{D_i\Phi(\xx{t+1})}{\ts_i-\tit{i}{t}}+w_i^{t+1}\|^2|
\F_t]=\sigma^2\ \ \text{a.s.}
\end{align*}

  On the other hand, in the Nash--play case, it is
  difficult to say much about the observed Nash equilibrium except in
  expectation. In particular, we can consider a modified version of
  Theorem~\ref{thm:nash-bound} where we consider the differential game form in
  expectation---i.e.~at iteration $k$,  the
  differential game form for the induced game is %given by
  \begin{align*}
      \textstyle
  \tdomegag(\theta,
  x)&=\textstyle\sum_{i=1}^\n\mb{E}\big[\braket{D_i\Phi(x)}{\theta_i}+\braket{D_i\Psi(x)}{\ala{k}{i}}+w_i^{k+1}|
  \F_{k-1}\big].
  \end{align*}
 % \eq{eq:modomega}{\textstyle
 % \tdomegag(\theta,
 % x)=\sum_{i=1}^\n\mb{E}\left[D_i\Phi_i(x)^T\theta_i+D_i\Psi_i(x)^T\ala{k}{i}+w_i^{k+1}|
 % F_{k-1}\right].
 % }
 % \textcolor{red}{need to add comments on local versus global nash}
\begin{proposition}
  Suppose that $D_2\tdomegag(\theta, x^d)$ is an isomorphism. Then there exists an
  $\epsilon>0$, such that for all $\tu{k}\in B_\epsilon(\ts)$,
   \begin{equation*}
  \textstyle   \|x^{\ast}-x^d\|\leq ( \sup_{0\leq \lambda\leq 1}\|Dg(
     (1-\lambda)\ts+\lambda\tu{k})\| )\|\tu{k}-\theta\|
      \label{eq:bound-vep-noise}
    \end{equation*}
    where
      $Dg(\ts)=-(D_2\tdomegag)^{-1}(\ts, x^d)\circ D_1\tdomegag(\ts, x^d)$.
      and $x^\ast$ is a (local) Nash equilibrium of the incentivized game
      $\mc{G}_\gamma=(\fg{1}(x;
    \ts_1), \ldots, \fg{\n}(x, \ts_\n))$ with
    $\gamma_i(x)=\braket{\Psi(x)}{\ala{k}{i}}$ for each $i\in\pl$. 
Furthermore, 
    if $\|Dg(\ts)\|$
 is uniformly bounded by $M>0$ on $B_{\epsilon}(\ts)$, then
    \begin{equation*}
     \|x^{\ast}-x^d\|\leq  M\|\tu{k}-\ts\|.
      \label{eq:bound-vep-n}
    \end{equation*}
    \label{prop:noisynashbdd}
\end{proposition}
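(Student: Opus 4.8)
The plan is to transcribe the argument of Theorem~\ref{thm:nash-bound} verbatim, with the deterministic differential game form $\omegag$ replaced by its conditional expectation $\tdomegag$. The only genuinely new ingredient is checking that passing to the conditional expectation neither destroys the smoothness that the Implicit Function Theorem requires nor leaves a contribution from the noise. First I would invoke the zero-mean assumption~\eqref{eq:zeromean-0}: by the tower property, $\mb{E}[w_i^{k+1}\mid\F_{k-1}]=\mb{E}[\mb{E}[w_i^{k+1}\mid\F_k]\mid\F_{k-1}]=0$, so the stochastic term drops out and
\[
\tdomegag(\theta,x)=\textstyle\sum_{i=1}^\n\big(\braket{D_i\Phi(x)}{\theta_i}+\braket{D_i\Psi(x)}{\ala{k}{i}}\big),
\]
which, because every element of $\mc{F}_\phi$ and $\mc{F}_\psi$ is $C^2$ and the $\ala{k}{i}$ are fixed given $\F_{k-1}$, is jointly $C^2$ in $(\theta,x)$. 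Thus $\tdomegag$ inherits exactly the regularity of $\omegag$, and the hypothesis that $D_2\tdomegag(\theta,x^d)$ is an isomorphism is the precise analogue of the non-degeneracy hypothesis used in Theorem~\ref{thm:nash-bound}.

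With $\tdomegag$ in hand the remainder is a direct copy of the earlier proof. I would apply the Implicit Function Theorem~\cite[Theorem~2.5.7]{abraham:1988aa} at the point where $\tdomegag$ vanishes to produce a neighborhood $B_\epsilon(\ts)$ of $\ts$ and a $C^1$ solution map $g$ with $\tdomegag(\theta,g(\theta))=0$, satisfying $g(\tu{k})=x^d$ and $g(\ts)=x^\ast$, whose derivative is
\[
Dg(\theta)=-(D_2\tdomegag)^{-1}(\theta,x^d)\circ D_1\tdomegag(\theta,x^d),
\]
so that $g$ interpolates between the designed response $x^d$ and the induced equilibrium $x^\ast$ of the true incentivized game $\mc{G}_\gamma$. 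Since $B_\epsilon(\ts)$ is convex, the mean value inequality~\cite[Proposition~2.4.7]{abraham:1988aa} gives
\[
\|x^\ast-x^d\|=\|g(\ts)-g(\tu{k})\|\leq\Big(\sup_{0\leq\lambda\leq1}\|Dg((1-\lambda)\ts+\lambda\tu{k})\|\Big)\|\tu{k}-\ts\|,
\]
and when $\|Dg\|$ is uniformly bounded by $M$ on $B_\epsilon(\ts)$ this collapses to $\|x^\ast-x^d\|\leq M\|\tu{k}-\ts\|$, as claimed.

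I expect the main obstacle to be interpretive rather than computational. One must be confident that $x^d$ is genuinely a zero of $\tdomegag(\tu{k},\cdot)$ (by the incentive design at iteration $k$) and $x^\ast$ a zero of $\tdomegag(\ts,\cdot)$, and, crucially, that the incentive parameters $\ala{k}{i}$ entering $\tdomegag$ are $\F_{k-1}$-measurable (so they factor out of the conditional expectation) and depend on $(\theta,x)$ in a $C^2$ fashion, so that $D_1\tdomegag$ and $D_2\tdomegag$ are well defined — exactly the regularity remark made before Theorem~\ref{thm:nash-bound} together with the continuous dependence results~\cite{lotstedt:1983aa,bonnans:2000aa}. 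Once the smoothness and the vanishing of the noise term are secured, invertibility of $D_2\tdomegag(\theta,x^d)$ does all the remaining work; no separate probabilistic estimate is needed, since the conclusion is a deterministic statement about the \emph{expected} game.
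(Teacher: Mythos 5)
Your proposal is correct and takes exactly the route the paper intends: the paper states Proposition~\ref{prop:noisynashbdd} without proof, as the in-expectation analogue of Theorem~\ref{thm:nash-bound}, and your transcription of that Implicit Function Theorem plus mean-value argument---prefaced by the tower-property check $\mb{E}[w_i^{k+1}\mid \F_{k-1}]=\mb{E}[\mb{E}[w_i^{k+1}\mid\F_k]\mid\F_{k-1}]=0$, which makes $\tdomegag$ a deterministic $C^2$ map with $g(\tu{k})=x^d$ and $g(\ts)=x^\ast$---is precisely the omitted proof. The measurability caveat you flag for $\ala{k}{i}$ under conditioning on $\F_{k-1}$ is an implicit assumption already baked into the paper's definition of $\tdomegag$, not a gap in your argument.
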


To apply Proposition~\ref{prop:noisynashbdd}, we need a result ensuring that the parameter estimate
$\theta^k$ converges to the true parameter value $\theta^\ast$. One of the
consequences of Theorem~\ref{thm:noise-p}
is that $V_k(\theta_i^\ast)$ converges a.s.~and when
$\beta(\theta_i)=\frac{1}{2}\|\theta_i\|_2^2$,
 $\|\theta^\ast_i -\theta_i^k\|_2^2$ converges a.s. If it is the case that it
 converges a.s.~to a value less than $\vep$, then
 Proposition~\ref{prop:noisynashbdd} would guarantee that a local Nash
 equilibrium of the incentivized game is near the desired non–degenerate
 differential Nash equilibrium in expectation. We leave further exploration of
 the convergence of the parameter estimate $\theta_i^k$ as future work.

\section{Numerical Examples}
\label{sec:examples}
%\textcolor{red}{todo}

In this section we present several examples to illustrate the theoretical
results of the previous sections\footnote{Code for examples can be found at {\tt github.com/fiezt/Online-Utility-Learning-Incentive-Design}.}.

\subsection{Two-Player Coupled Oscillator Game}
\label{sec:oscillator}
  \begin{figure*}[t]
   \begin{center}
   %  \begin{subfigure}{0.4\textwidth}
      \subfloat[nominal\label{fig:nash1}]{
          \includegraphics[width=0.265\textwidth]{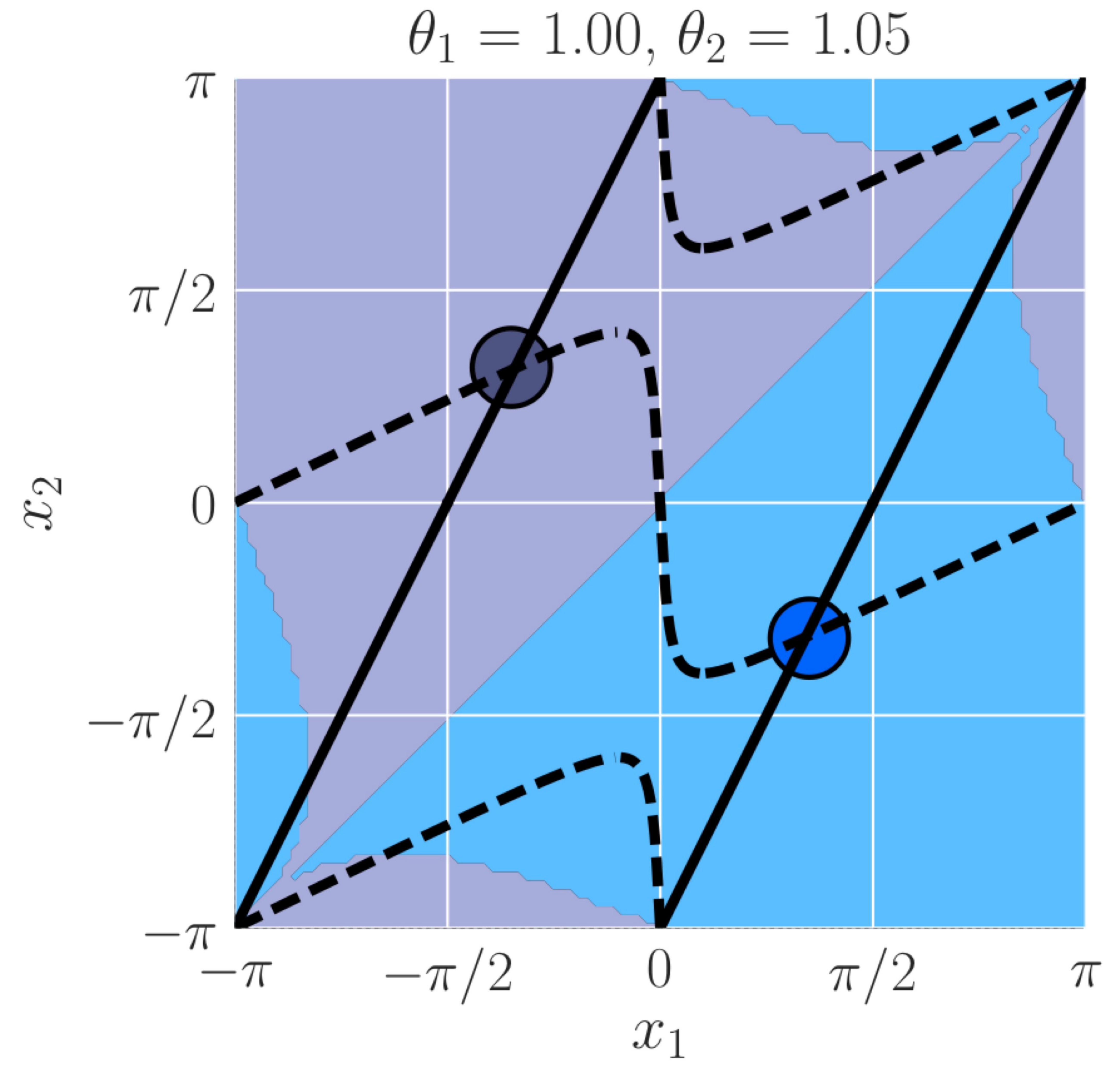}} %nli_s9003_N10000-v2-orig1-n_4.pdf}}% nl_s9003_N49729_4.pdf}}
      \hfill\subfloat[incentivized, $\lambda=0.0$ \label{fig:nash2}]{
          \includegraphics[width=0.265\textwidth]{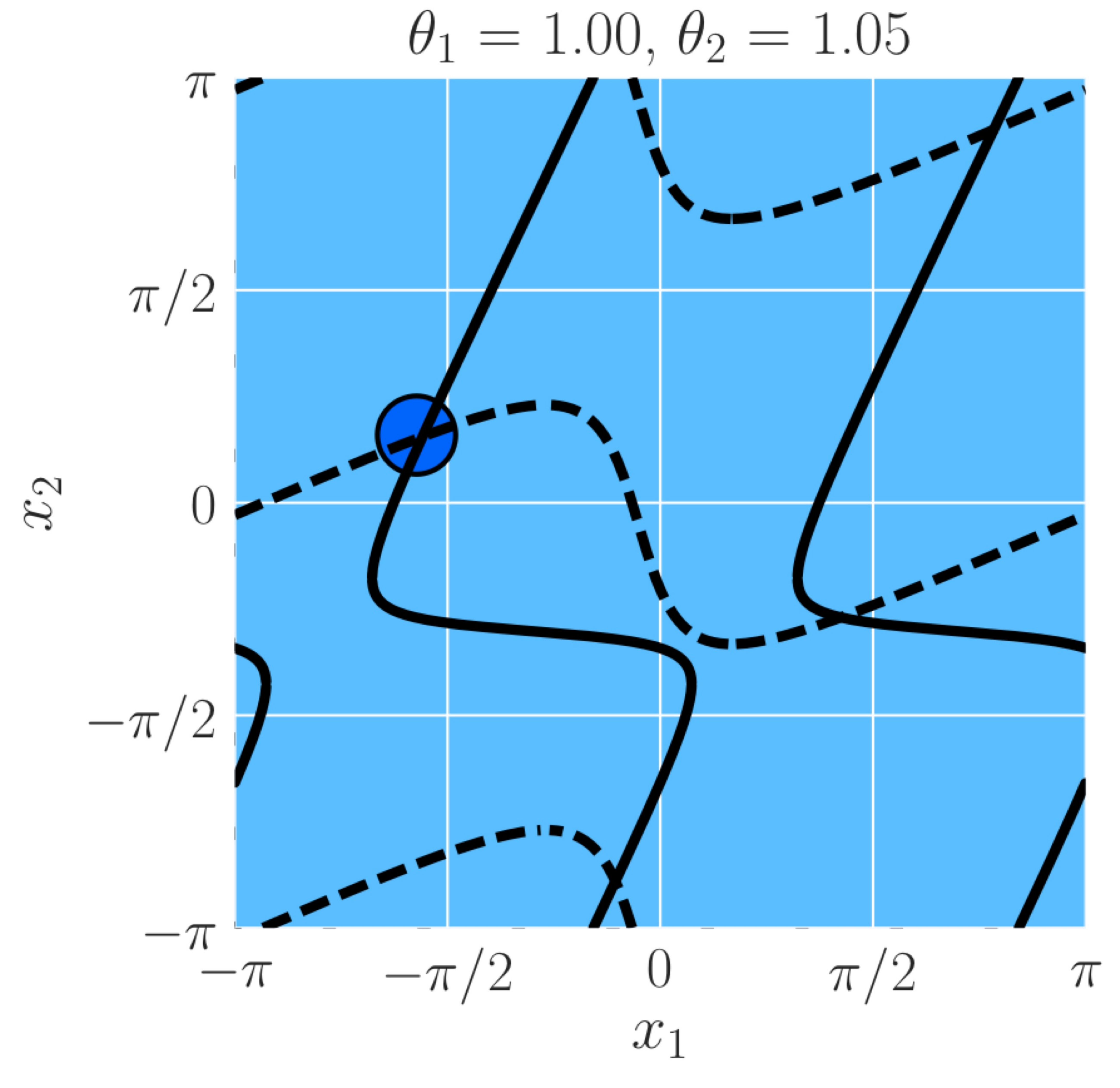}} %nli_s9003_N100-v2-lam0-n_4.pdf}}
      %nli_s9003_N196_4.pdf}}
      \hfill\subfloat[incentivized, $\lambda=0.1$\label{fig:nash3}]{
          \includegraphics[width=0.385\textwidth]{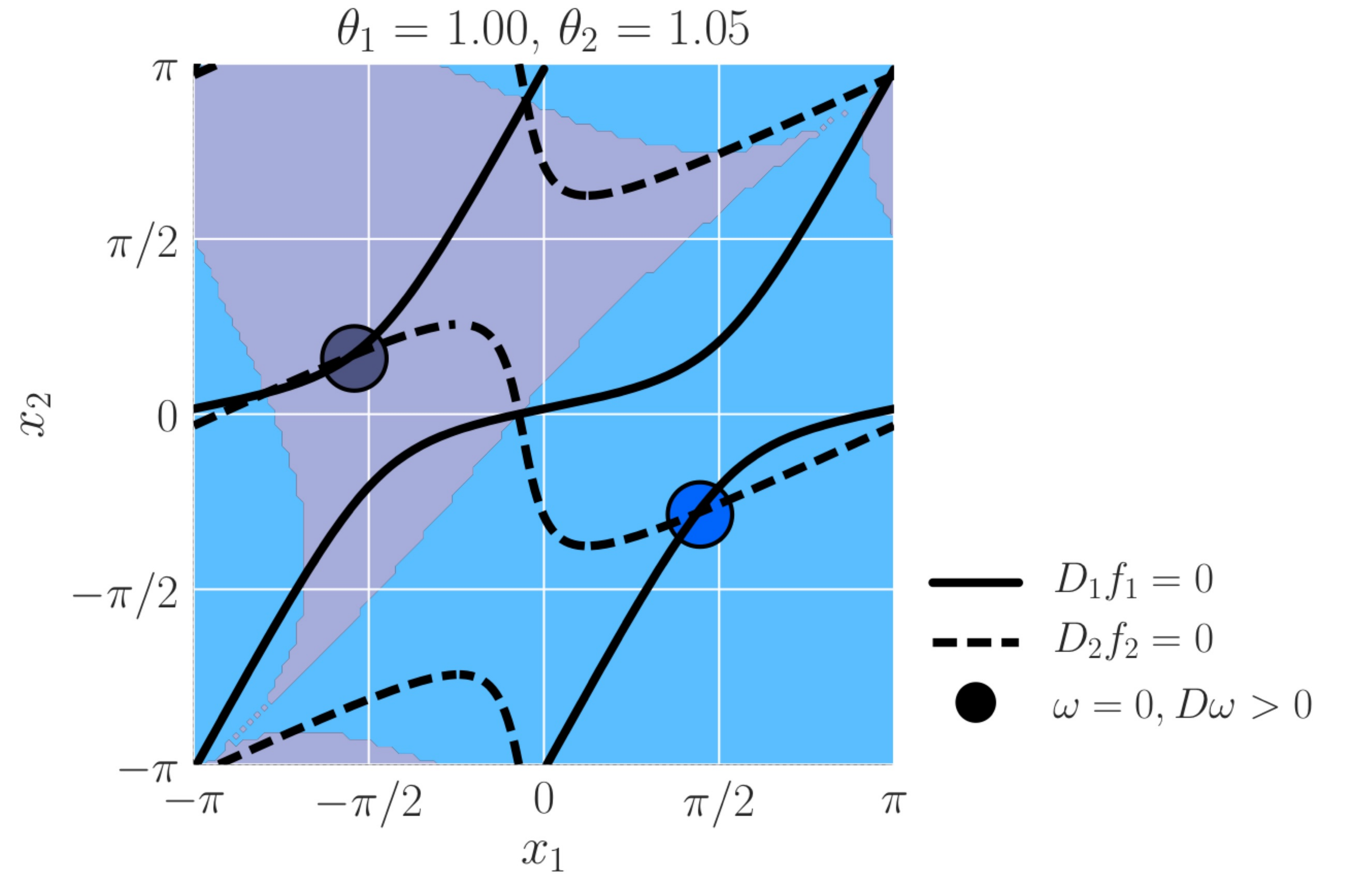}} %/nli_s9003_N10000-v2-lam1-n_4.pdf}}
   \end{center}
   \caption{\footnotesize Visualization of Nash equilibria for the (a) nominal,
   (b) incentivized (without regularization), and (c) incentivized (with
   regularization $\lambda=0.1$) games. Curves indicate the set of points where $D_if_i\equiv 0$
   for each $i\in\{1,2\}$; we have $\omega=(D_1f_1,D_2f_2)=0$ wherever the
   curves intersect. The stable, differential Nash
   equilibria are indicated by a dark circle of which there are two for the
   nominal game. Given an initial condition, Nash is calculated 
   %Most initial conditions converge to one of these 
   via a steepest
   descent algorithm~\cite{ratliff:2015aa}. The empirical basin of attraction
   for each Nash equilibrium is illustrated by the filled region containing the
   point.
   %there are two stable, differential Nash
   %equilibria, indicated by a dark circle.
   %whereas in the incentivized game, there is
   %only one stable, differential Nash equilibrium. In the latter game, 
   In the incentivized game, the coordinator aims to 
   drive the agents to $(x_1^d,x_2^d)=(-1.8, 0.5)$. For $\lambda=0.0$,
   the
   unique Nash equilibrium of the incentivized game with parameters 
   $(\alpha_1,\alpha_2)=(0.16,-0.94,0.29,-0.11)$ is $(x_1^d,x_2^d)$; however,
   for the regularized incentivized game using with parameters 
   $(\alpha_1,\alpha_2)=(0.13,0,0.15,0)$, there are two stable Nash
   equilibria at $(1.4,-0.9)$ and $(-1.8,0.5)$. 
   %As the norm of the incentive
   %parameters increases, we see a bifurcation occur; the original two stable
   %differential Nash equilibria collapse into one stable equilibrium.
   %Empirically, as the
   %regularization parameters increases
   %There is a tradeoff between
   %minimizing incentives 
   %the agents
   %arrive at $(x_1, x_2)=(-1.8,1.1)$ given incentive parameters $(\alpha_1,
   %\alpha_2)=(-1.94,-1.41)$.
   }
   \label{fig:nashplots}
 \end{figure*}

\begin{figure*}
    \begin{center}
    \subfloat[][]{\includegraphics[width=0.2375\textwidth]{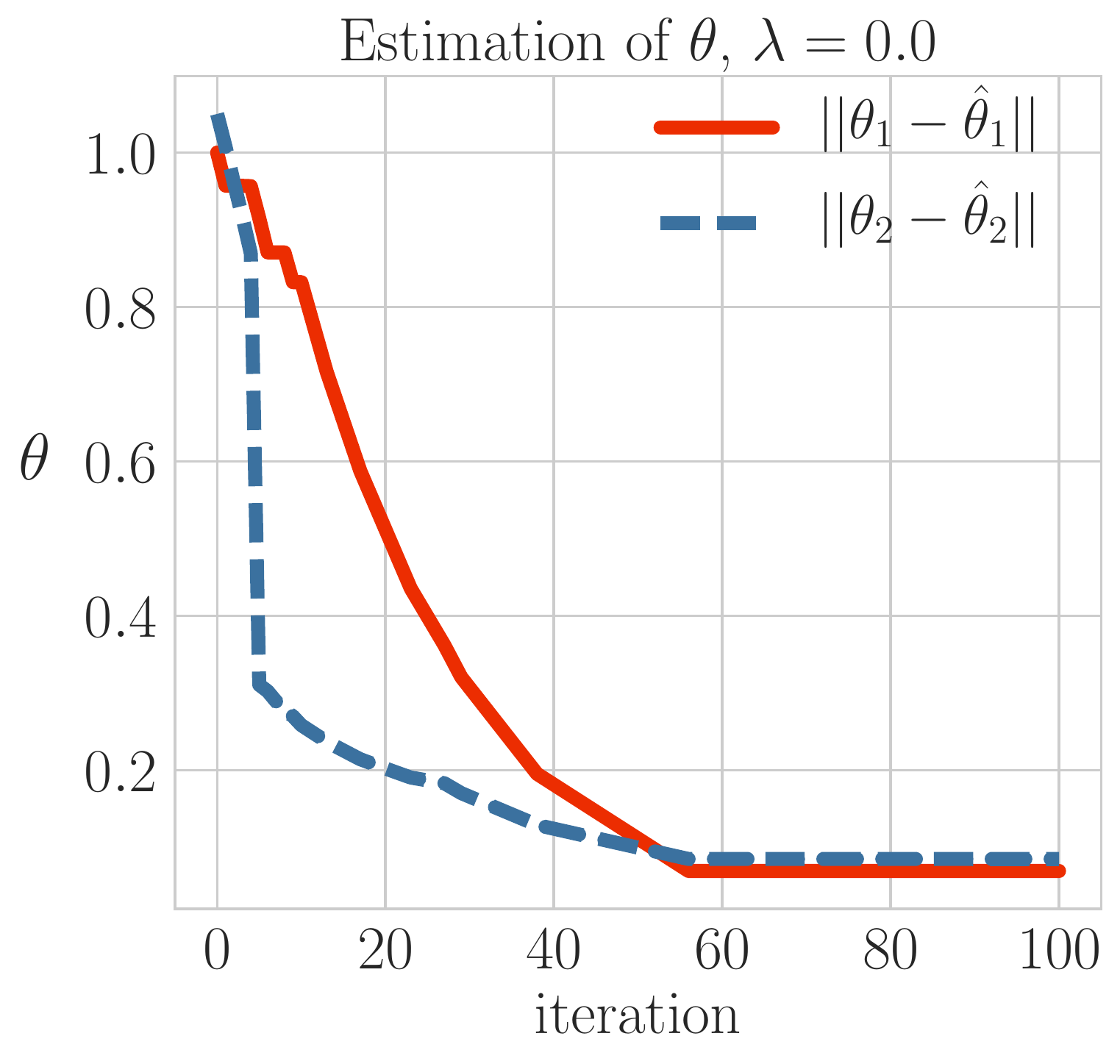}}\hfill
    \subfloat[][]{\includegraphics[width=0.254\textwidth]{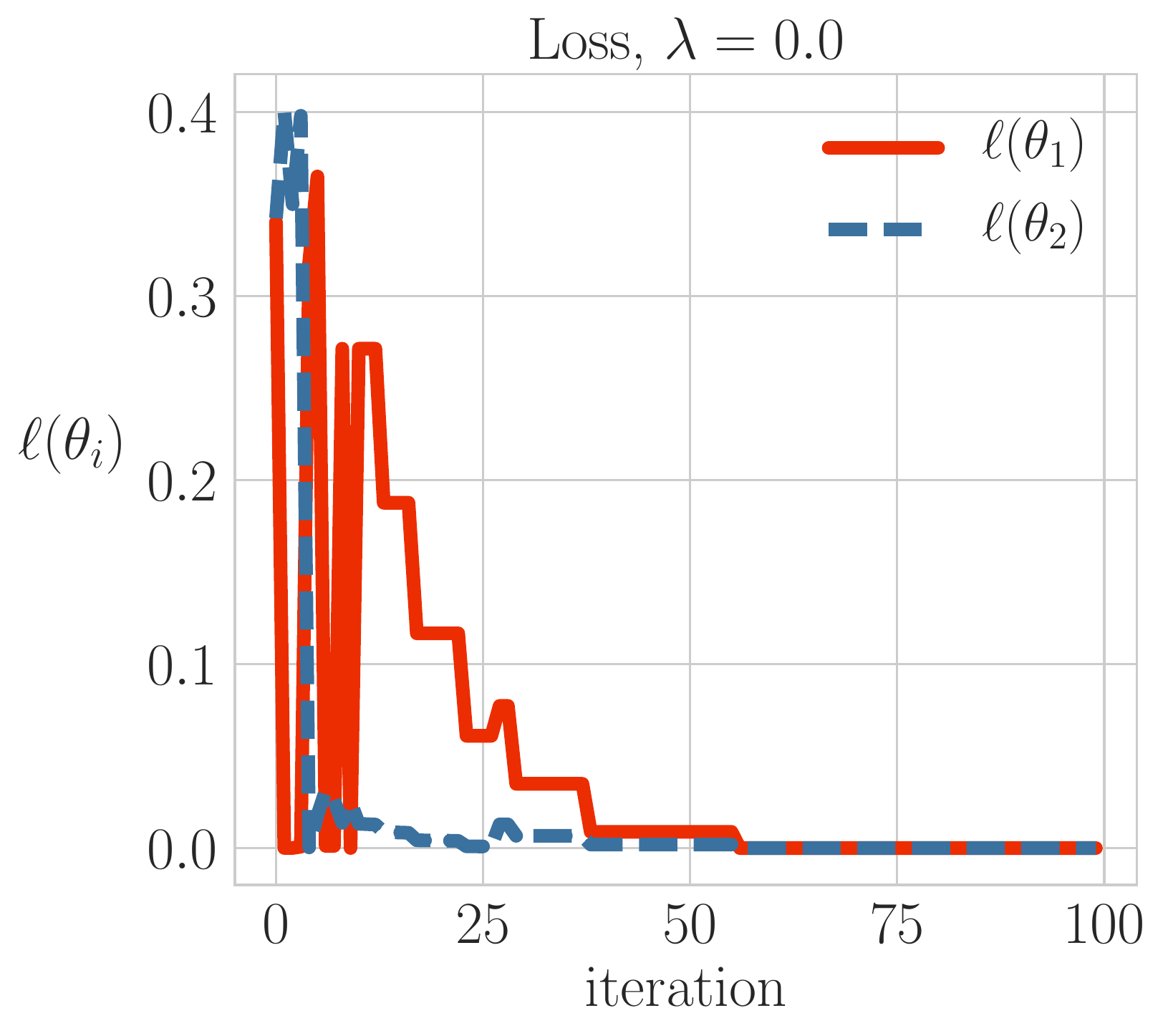}}
\hfill\subfloat[][]{\label{fig:penoreg}\includegraphics[width=0.26\textwidth]{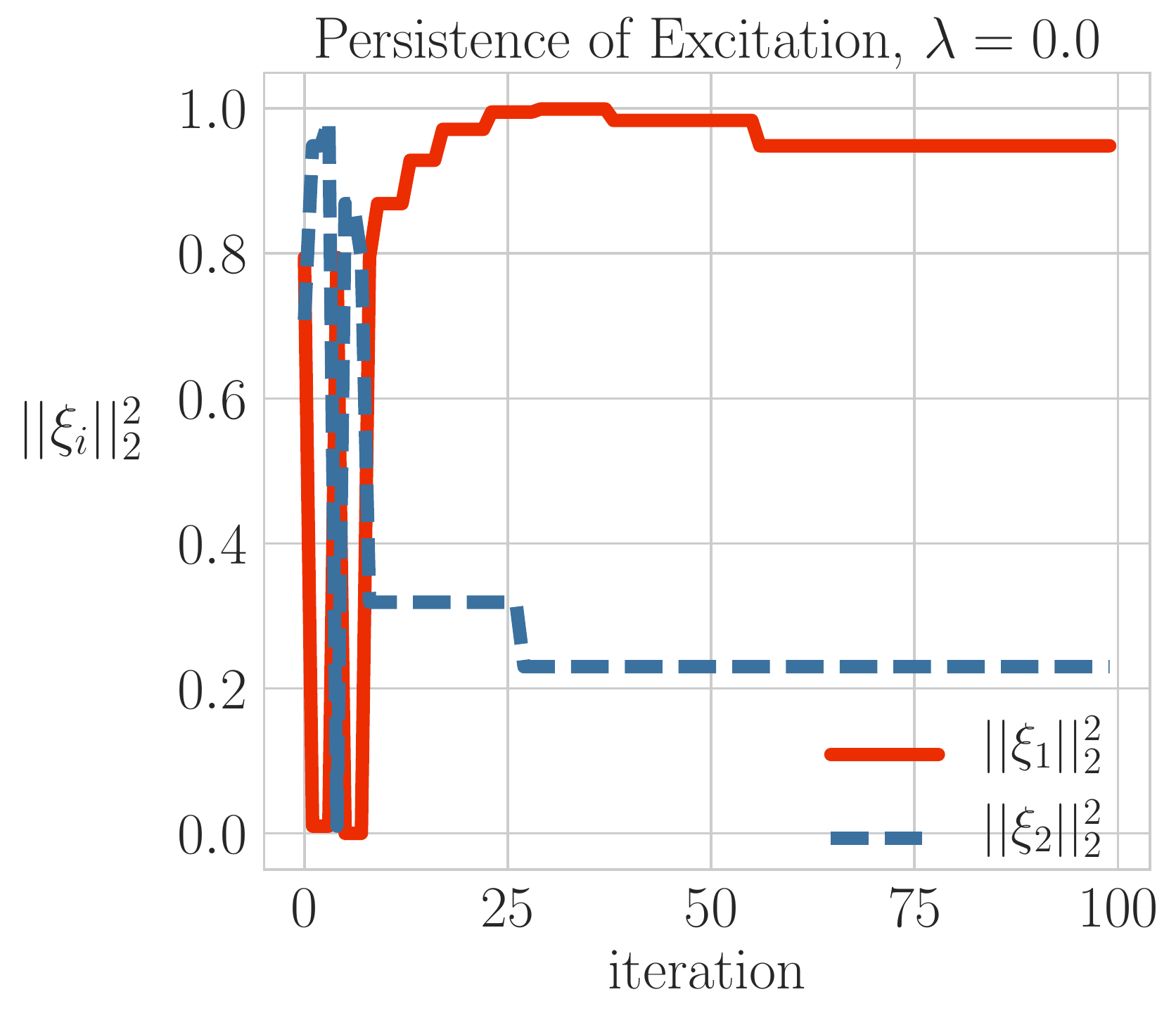}}
       
    \subfloat[][]{\includegraphics[width=0.235\textwidth]{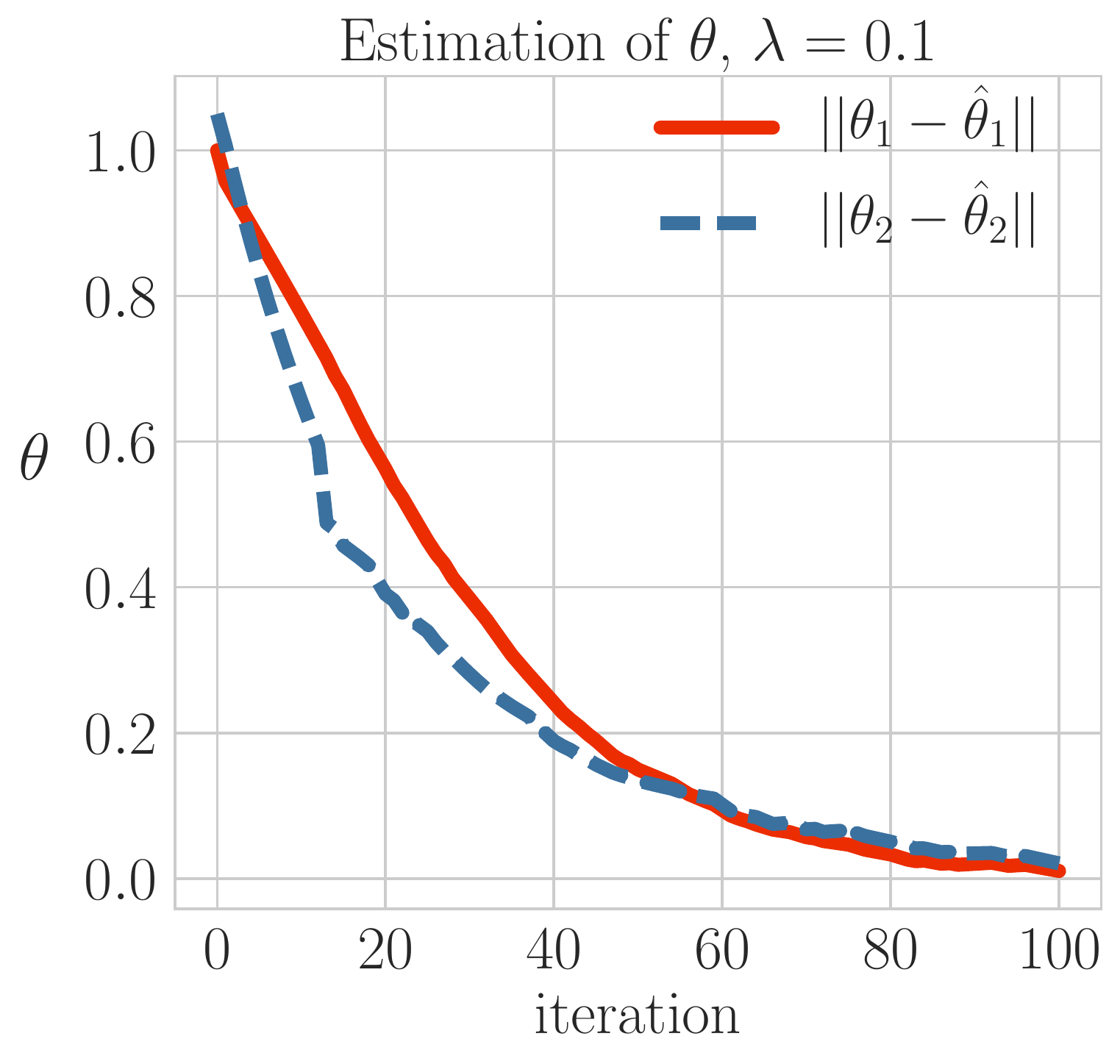}}
                \hfill\subfloat[][]{\includegraphics[width=0.254\textwidth]{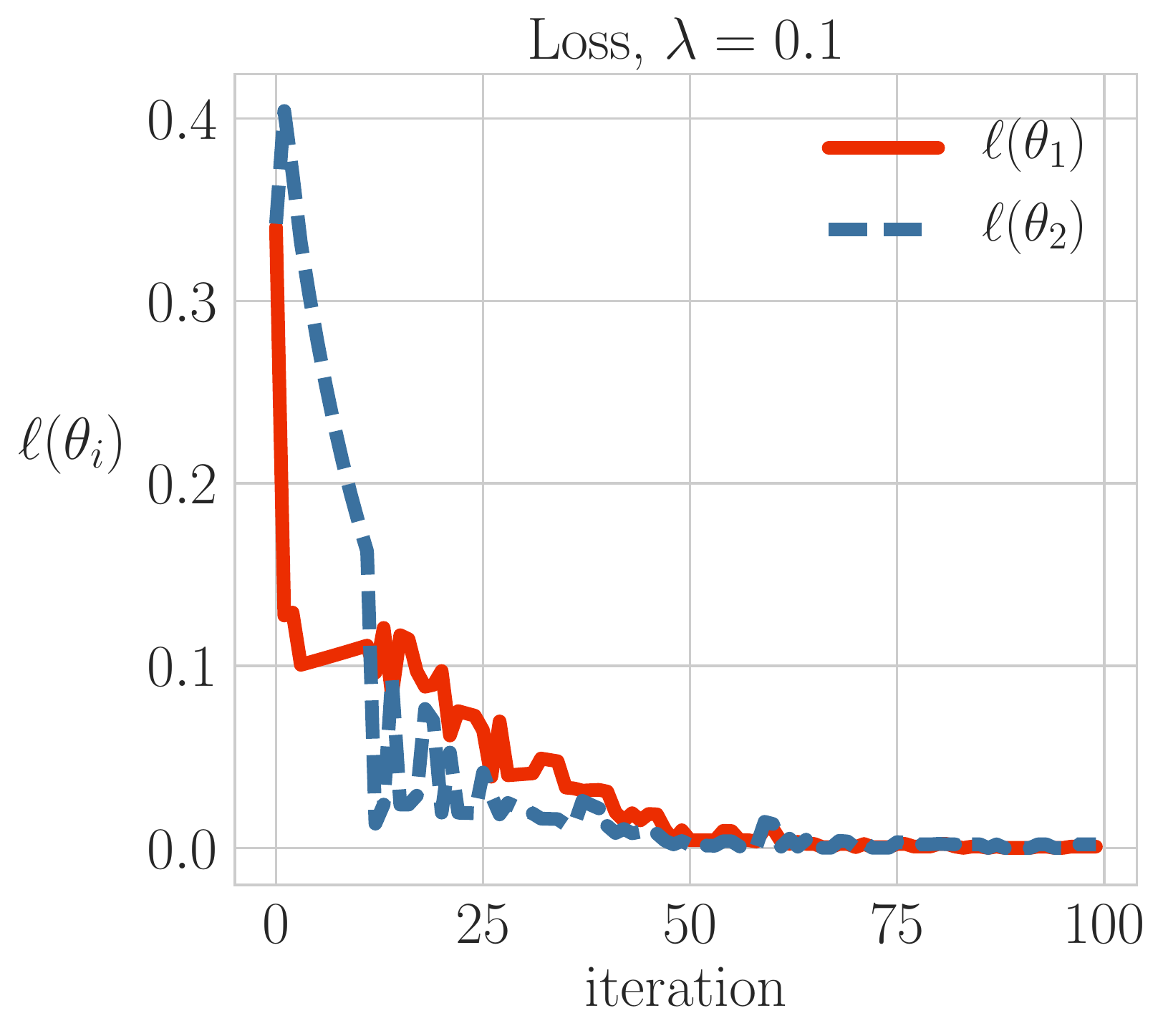}}
\hfill\subfloat[][]{\label{fig:pereg}\includegraphics[width=0.26\textwidth]{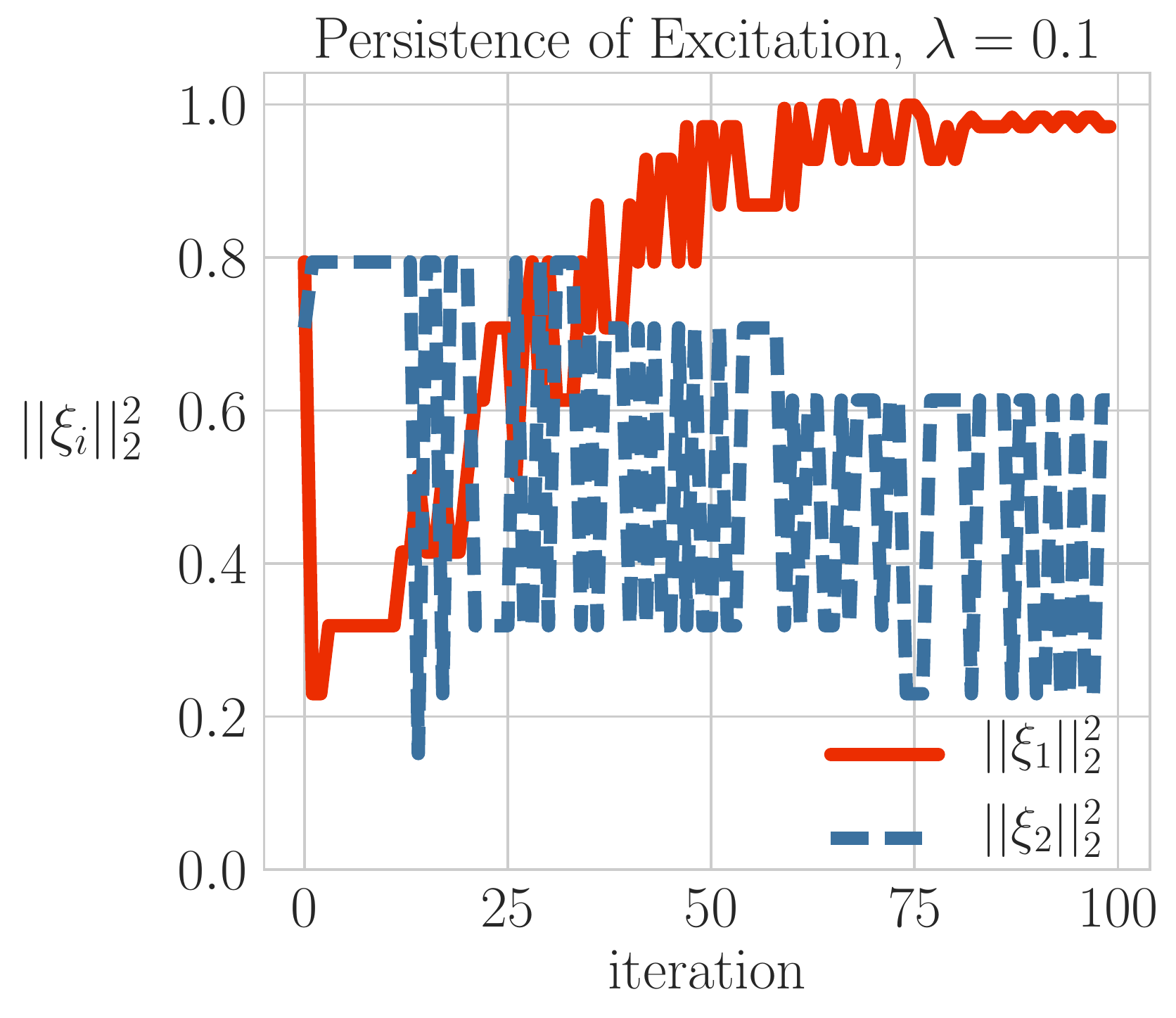}}

 %      \subfloat[][]{\label{fig:penoreg}\includegraphics[width=0.265\textwidth]{fig/0sqr-pe-20171220.pdf}$\quad
 %  $}
    \end{center}
\caption{    \footnotesize  Estimation error for (a) $\lambda=0.0$ and (d)
    $\lambda=0.1$. Loss for (b) $\lambda=0.0$ and (e)
    $\lambda=0.1$. Persistence
    of excitation (as measured by $\|\xi\|^2$) for the coupled oscillator with
    regularization (c) $\lambda=0.0$ and (f) $\lambda=0.1$ on the incentive
parameters.}
    %loss, and (c) persistence
    %of excitation (as measured by $\|\xi\|^2$) for the coupled oscillator without regularization.
            \label{fig:losses}
\end{figure*}

The first example we consider is a game between two players trying to control
oscillators that are coupled. 
%This is a particularly interesting example because
%the joint strategy space of the agents is a smooth manifold with non-trivial
%topology. 
Coupled oscillator models are used widely for applications including
%modeling attitude control and coordination~\cite{wen:1991aa, zou:2012aa}, 
power~\cite{dorfler:2012aa}, traffic~\cite{
coogan:2015aa}, and biological~\cite{wang:2015aa}
networks, 
%as well as robotics~\cite{moioli:2010aa} 
and in coordinated motion
control~\cite{paley:2007aa} among many others. 
%sepulchre:2005aa,
Furthermore, it is often the case that coupled oscillators 
 are viewed in a game--theoretic context in order to gain further insight into the system
 properties~\cite{yin:2012aa, goto:2010aa}.
While the example is simple, it demonstrates that even in complex games on
non--convex strategy spaces (in particular, smooth manifolds),
Algorithm~\ref{alg:nash} still achieves the desired result.

Consider a game of coupled oscillators in which we have two players each one
aiming to minimize their nominal cost
\begin{equation}
  f_i(x_i, x_{-i})=-\theta_i^\ast\cos(x_i)+\cos(x_i-x_{-i}).
  \label{eq:ex_cost}
\end{equation}
Essentially, the game takes the form of a location game in which each player wants to be near the origin while
also being as far away as possible from the other player's phase. The coordinator
selects $\alpha_i\in \mb{R}^{2}$, $i\in\{1,2\}$ such that $(x_1^d, x_2^d)$ is
the Nash equilibrium of the game with incentivized costs
\begin{equation}
  f_i^\gamma(x_i,
  x_{-i})=f_i(x_i,x_{-i})+\alpha_{i,1}\sin (x_i-x_i^d)+\alpha_{i,2}\cos(x_i-x_i^d). %\psi_{i,j}(x_i)
  \label{eq:inc_cost}
\end{equation}
%where $\psi_{i,
Using Algorithm~\ref{alg:nash}, we estimate $\theta_1$ and $\theta_2$ while
designing incentives that induce the players to play $(x_1^d, x_2^d)$. In this
example, we set  $\theta_1=1.0$ and $\theta_2=1.05$ and set the
desired Nash at $(x_1^d,x_2^d)=(1.8,-0.5)$. 

In Fig.~\ref{fig:nash1}, we show a visualization of the Nash equilibria for the
game where the square represents the unfolded torus with player 1's choice
along the x-axis and player 2's choice along the y-axis. The solid and dashed
black lines depict the zero sets for the derivatives of player 1 and 2
respectively. The intersections of these lines are all the candidate Nash
equilibrium (all the Nash equilibria of the game must be in the set of
intersection points of these lines). There are four intersection points:
$\{x|\omega(x)=0\}=\{(0,\pi), (\pi, \pi), (1.1, -1.0), (-1.1, 1.0)\}$. Of the
four, the first two are not Nash equilibria which can be verified by checking
the necessary conditions for Nash. However, $(0,\pi)$ is a saddle point and
$(\pi,\pi)$ is an unstable equilibrium for the dynamics $\dot{x}=-\omega(x)$.
The last two points are stable differential Nash equilibria. We indicate each of
them with a circle in Fig.~\ref{fig:nash1} and the colored regions show the
empirical basin of attraction for each of the points.

In Fig.~\ref{fig:nash2} and~\ref{fig:nash3}, we show similar visualizations of
the Nash equilibria but for the incentivized game. In the incentive design step
of the algorithm, we can add a regularization term to the constrained least
squares for selecting the incentive parameters. The regularization changes the
problem and may be added if it is desirable for the incentives to be small or
sparse for instance. Fig.~\ref{fig:nash2} and~\ref{fig:nash3} are for the case
without and with  regularization, respectively, where we regularize with $\lambda\|\alpha\|_2^2$. 

As we adjust the regularization parameter $\lambda$, it has the effect of
reducing the norm of the incentive parameters and as a consequence, what happens
is that the number of stable Nash equilibria changes. In particular, a larger
value of $\lambda$ results in small incentive values and, in turn,
non-uniqueness in the number of stable Nash equilibria in the incentivized game. 
There is a tradeoff between uniqueness of the desired Nash equilibrium and how
much the coordinator is willing to expend to incentivize the agents. 

Moreover, in Fig.~\ref{fig:losses}, we see that in
the unregularized case ($\lambda=0$) the estimation error for $\theta_2$ appears
not to converge to zero
while in the regularized case ($\lambda=0.1$) it does. This
is likely 
due to the fact that Algorithm~\ref{alg:nash} is not persistently exciting for
the unregularized problem (see Fig.~\ref{fig:penoreg} where
$\|\xi_1\|_2$ drops to zero) while it is in the regularized case (see
Fig.~\ref{fig:pereg} where $\|\xi_i\|_2$, $i=1,2$ are always non-zero).

We remark that this example highlights that the technical conditions of the theoretical results in the
preceding sections---in particular, persistence of excitation---matter in practice. Indeed, regularization allows for the problem to
be persistently exciting. There are other methods for inducing persistence of
excitation such as adding noise or ensuring the input is \emph{sufficiently
rich}~\cite{boyd:1986aa}.

\subsection{Bertrand Nash Competition}
\label{sec:bertrand}
% bertrand example
% \begin{figure}[t]
%     \begin{center}
%     \subfloat[][]{\label{fig:br_cost_comps}}{\includegraphics[width=0.5\textwidth]{}}\hfill
%     \end{center}
%     \caption{\footnotesize Each component of the function that forms the estimated response the regulator maintains for each firm when the agent uses the best response update method.}
% \end{figure}

In this section, we explore classical Bertrand
competition. Following~\cite{bertsimas:2015aa}, we adopt
a stylized model inspired by~\cite{berry:1994aa}. We consider a two firm
competition. Firms are assumed to have full information and they choose their
prices $(x_1,x_2)$ to maximize their revenue
$R_i(x_i,x_{-i},\tau)=x_iF_i(x_i,x_{-i},\tau)$ where $F_i$ is the demand and
$\tau$
is a normally distributed i.i.d.~random variable that is common knowledge and
represents economic indicators such as gross domestic product.

When firms compete with their nominal revenue functions $R_i$, the random
variable $\tau$ changes over time causing \emph{demand shocks} which, in turn,
cause prices to shift. In our framework, we introduce a \emph{regulator} that
wants to push the prices chosen by firms towards some desired pricing level by
introducing \emph{incentives}. It is assumed that the regulator also has
knowledge of $\tau$.

We explore several variants of this problem to highlight the theoretical results as well as desirable empirical results we observe under relaxed assumptions on the knowledge the planner has of agent behavior. 

As noted in the previous example, the assumption of persistence of excitation, which also appears in
the adaptive control literature, is hard to verify \emph{a priori}; however,
selecting basis functions from certain classes such as radial basis functions
which are known to be persistently
exciting~\cite{Kurdila:1995ab,Gorinevsky:1995aa} can help to overcome this
issue. We note that this is not a perfect solution; it
%One of the techniques we are exploring is the use of
%radial basis functions which, if used for both incentive and objective
%basis functions, 
may make the algorithm persistently exciting, yet result in
a solution which is not asymptotically incentive compatible, i.e.~$x_i^k \nrightarrow x_i^d$.
Since we seek classes of problems that are persistently exciting and asymptotically
incentive compatible, this exposes an interesting avenue for future research.

%this issue
%can be overcome. Our numerical examples demonstrate this. 
\begin{figure}
    \begin{center}
 %  $\ $
    \hfill\subfloat[][]{\label{fig:true_gp_responses}\includegraphics[width=0.28\textwidth]{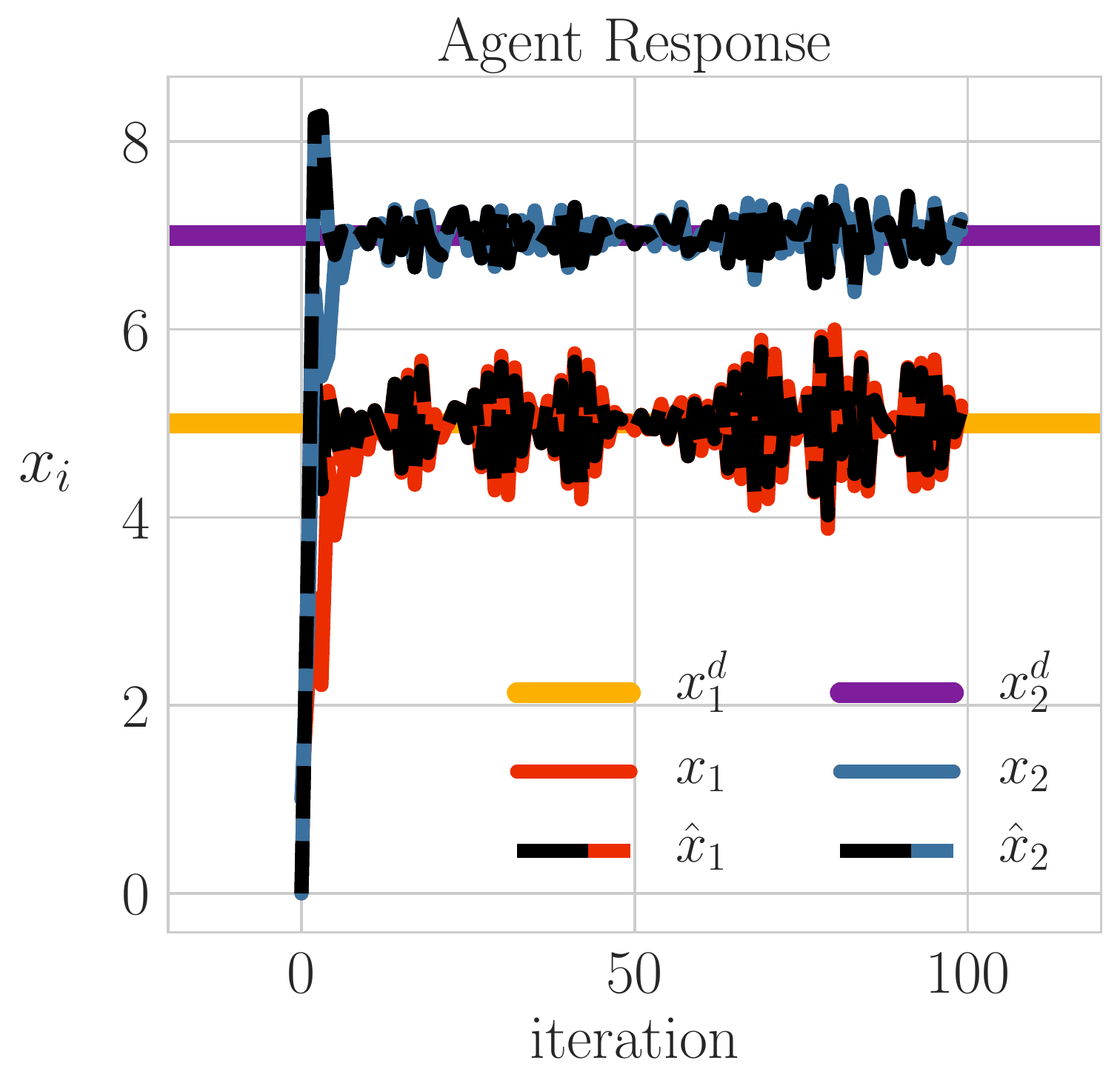}}\hfill\subfloat[][]{\label{fig:true_gp_error}\includegraphics[width=0.29\textwidth]{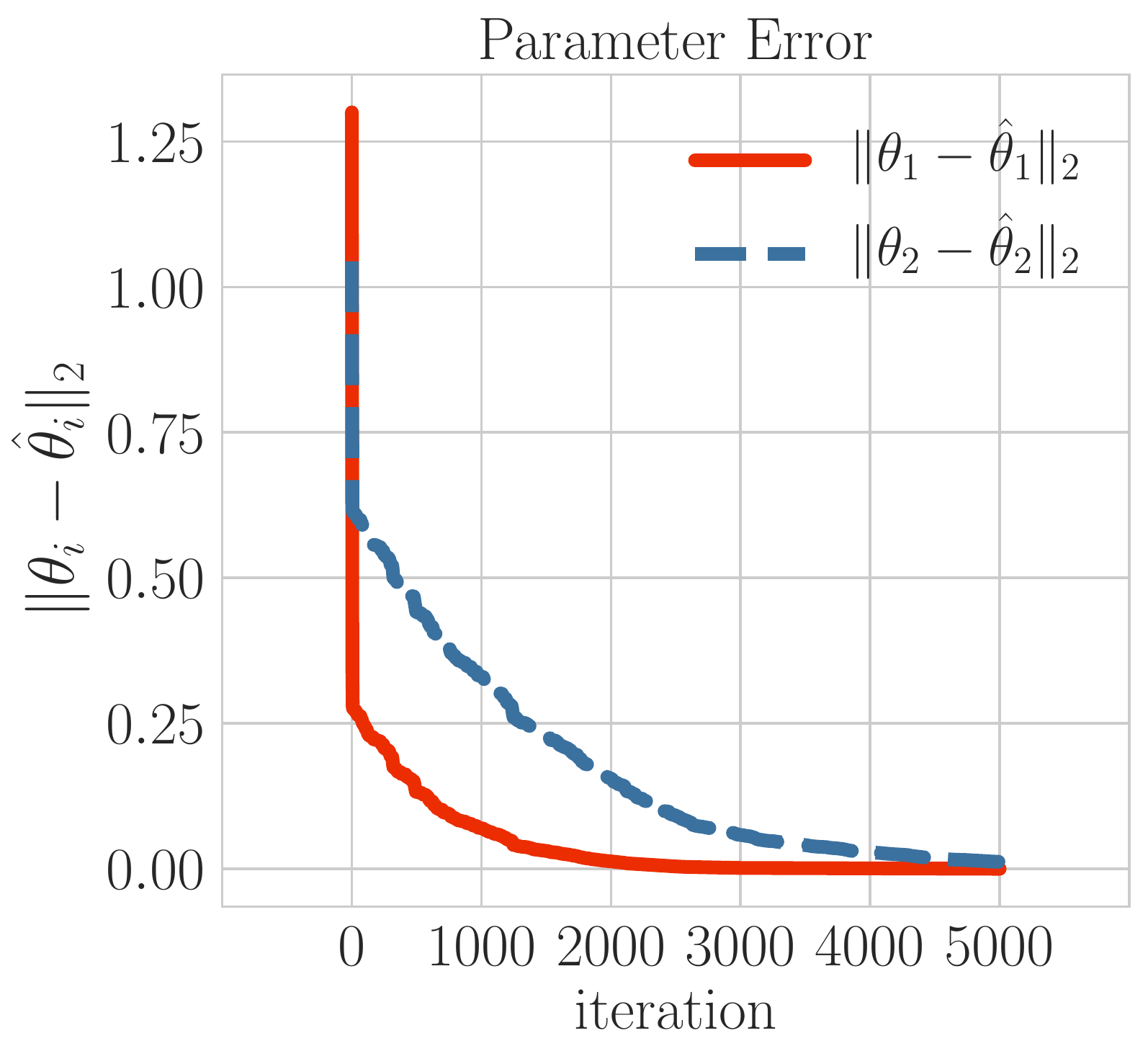}}\hfill$\
    $
%    \subfloat[][]{\label{fig:true_gp_excitation}\includegraphics[width=0.302\textwidth]{}}\hfill
    \end{center}
    \caption{\footnotesize (a) Agent response and (b) parameter estimation
        error.
    %and (c) persistence of excitation (as measured by $\|\xi\|^2$). 
        While the
agents' responses converge relatively quickly to the desired responses (within a
range dictated by the statistics of the noise term $\tau$), the parameter estimation takes quite a
bit longer to converge within $1$e-$2$. This is in part due to the fact that the response
converges quickly and the \emph{information} that can be attained from the
responses is limited; hence, it takes a while to improve the estimate. }
\end{figure}

%Specifically, we consider firms that use the myopic update rule \emph{gradient play} with linear marginal revenue functions of which the regulator has knowledge of the true nominal basis functions to demonstrate the theoretical results. We then consider firms that the use the myopic update rule (i) \emph{gradient play} with both linear and nonlinear marginal revenue functions, (ii) \emph{best response} with linear marginal revenue functions, and (iii) \emph{fictitious play} with linear marginal revenue functions where we consider the nominal basis functions to contain radial basis functions that do not include the true basis functions. These examples show that even if the agents are not playing according to the assumed basis functions, it is still possible to drive the agents to a desired state for several classes of myopic update rules.
\subsubsection{True Model Known to Planner}
 We consider each firm $i$ to have a linear marginal revenue function given by
\begin{equation}\label{eq:linmargrev}
    M_i(x^k,\tau^k)= \theta_{i,1}^\ast x_1^k+\theta_{i,2}^\ast
    x_2^k+\theta_{i,i}^\ast x_i^k+ \tau^k,
\end{equation}
and the firm price to evolve according to a gradient play update
\begin{equation}\label{eq:grad_play}
x_i^{k+1} = x_i^k + \zeta_i^k (M_i(x_i^k, x_{-i}^k, \tau^k) + \langle \Psi_i(x_i^k, x_{-i}^k), \alpha_i^{k} \rangle)
\end{equation}
where $\zeta_i^k>0$ is the learning rate which we choose to ensure stability of the
process. We let the incentive basis functions be the following set of Gaussian
radial basis functions:
\begin{equation}\label{eq:inc_basis}
    \left\{\begin{array}{lcl}
    %\begin{split}
\Psi_{i,1}(x_i^k, x_{-i}^k) &=& \exp(-\kappa (x_i^k - x_i^d)^2) \\
\Psi_{i,2}(x_i^k, x_{-i}^k) &=& \exp(-\kappa (x_i^k + x_i^d)^2) \\
\Psi_{i,3}(x_i^k, x_{-i}^k) &=& \exp(-\kappa (x_i^k)^2)\end{array}\right.
%\end{split}
\end{equation}
with $\kappa = 0.01$.
In order to test the theoretical results, in this example we assume the
regulator has knowledge of the true nominal basis functions and assumes a
gradient play model for the firms,
%\begin{equation*}
%\begin{split}
    $\Phi(x_1^k,x_2^k)=\{x_1^k,x_2^k\}$,
        %\left\{  \right\}<++>
%\left\{\Phi_{1,1}(x_1^k, x_2^k) &= 2x_1^k \\
%\Phi_{1,2}(x_1^k, x_2^k) &= x_2^k\\
%\Phi_{2,1}(x_2^k, x_1^k) &= x_1^k \\
%\Phi_{2,2}(x_2^k, x_1^k) &= 2x_2^k
%\end{split}
%\end{equation*}
and updates its estimate of the firm prices according to 
\begin{equation*}
    x_i^{k+1} = x_i^k + \zeta_i^k (\braket{\Phi_i(x_i^k,
        x_{-i}^k)}{\hat{\theta}_i^k} + \tau^k + \langle \Psi_i(x_i^k, x_{-i}^k), \alpha_i^{k} \rangle).
\end{equation*}
For this simulation we use $\tau\sim\mc{N}(5,0.25)$, 
$\theta_1^\ast=(-1.2,-0.5)$, and $\theta_2^\ast=(0.3,-1)$, and for all Bertrand examples we set the desired prices at $(x_1^d, x_2^d) = (5, 7)$. 

In Fig.~\ref{fig:true_gp_responses}, we show the firm prices and the regulator's
estimates of the firm prices over the first $100$ iterations of the simulation.
The firm prices $x_i$ almost immediately converge to the desired
prices $x_i^d$. Likewise, the regulator's estimates of the firm prices
$\hat{x}_i$ rapidly converge to the true firm prices. We remark that in our
simulations we observe that even with fewer incentive basis functions, the firm
prices can still be pushed to the desired prices. Correspondingly, we find that
increasing the number of incentive basis functions can increase the speed of
convergence to the desired prices as well as mitigate the impact of noise due to
more flexibility in the choice of incentives.
%greater control power.    

In Fig.~\ref{fig:true_gp_error}, we show that the regulator's estimates of the
parameters of the marginal revenue function for each firm converge to the true
parameters. While in this example the speed of estimation convergence is
considerably slower than the convergence of the firm prices to the desired
prices, we observe multiple factors that can significantly increase the speed of
estimation convergence. Namely, as the variance of the economic indicator term
$\tau$ increases, consequently providing more information at each iteration, the
convergence speed also increases. Moreover, the stability of the process,
dictated by the parameters of the marginal revenue functions, is positively
correlated with the speed of convergence. We also note that the problem is
persistently exciting and the expected loss decreases rapidly to zero.
%although
%we did not include plots.
%, as demonstrated in Fig.~\ref{fig:true_gp_excitation}, that this example exhibits persistence of excitation.
%\begin{figure*}
%     \subfloat[][]{\label{fig:br_return}}{\includegraphics[width=0.28\textwidth]{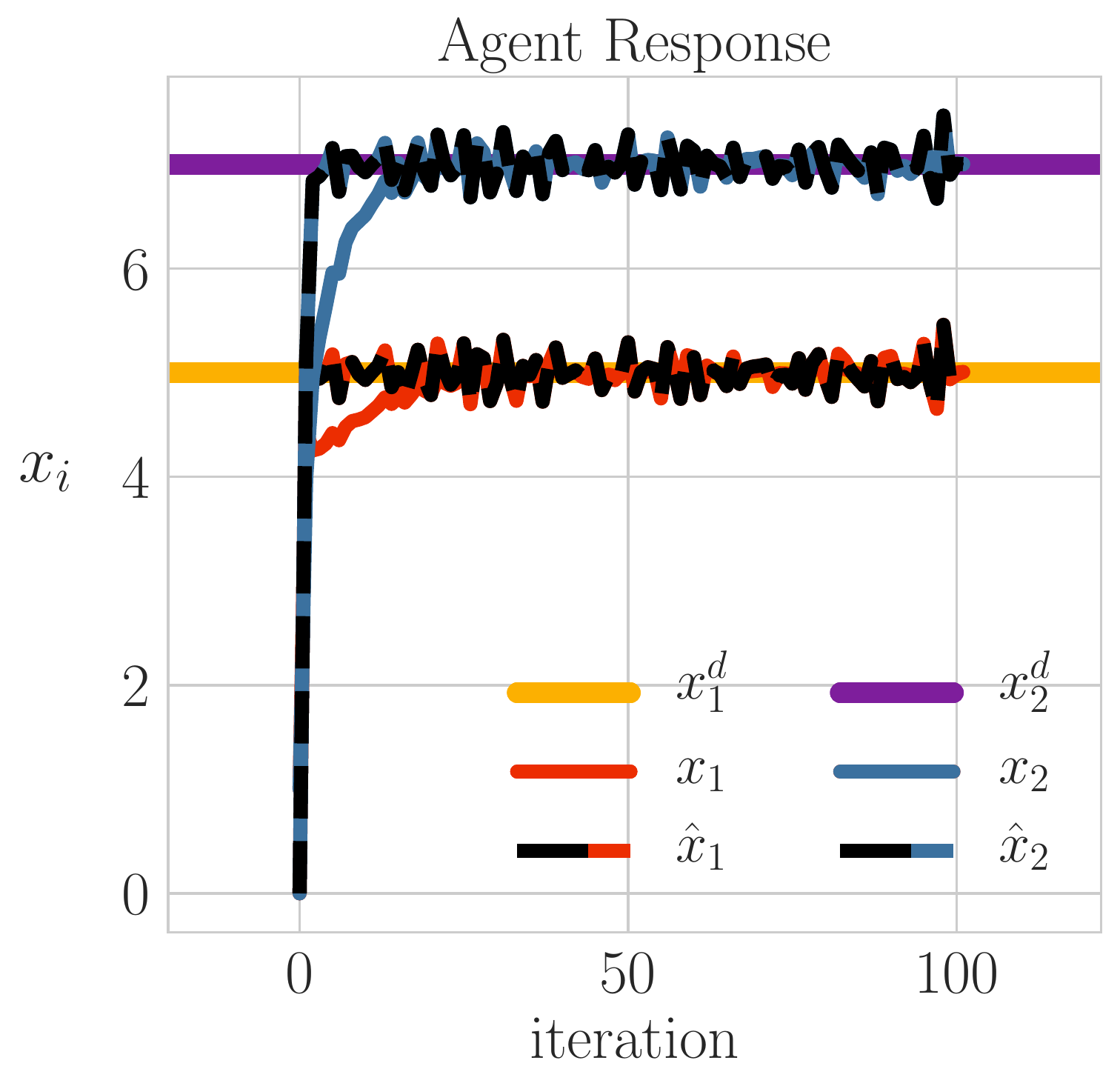}}
%\end{figure*}
\begin{figure}[t]
    \centering
    \includegraphics[width=0.28\textwidth]{}
    \caption{Prices for best response (gradient and
    fictitious play are similar).}
  \label{fig:br_return}
  \end{figure}
 \begin{figure}[t]
    \centering
 \hfill\subfloat[][]{\label{fig:lin_estimation}\includegraphics[width=0.2875\textwidth]{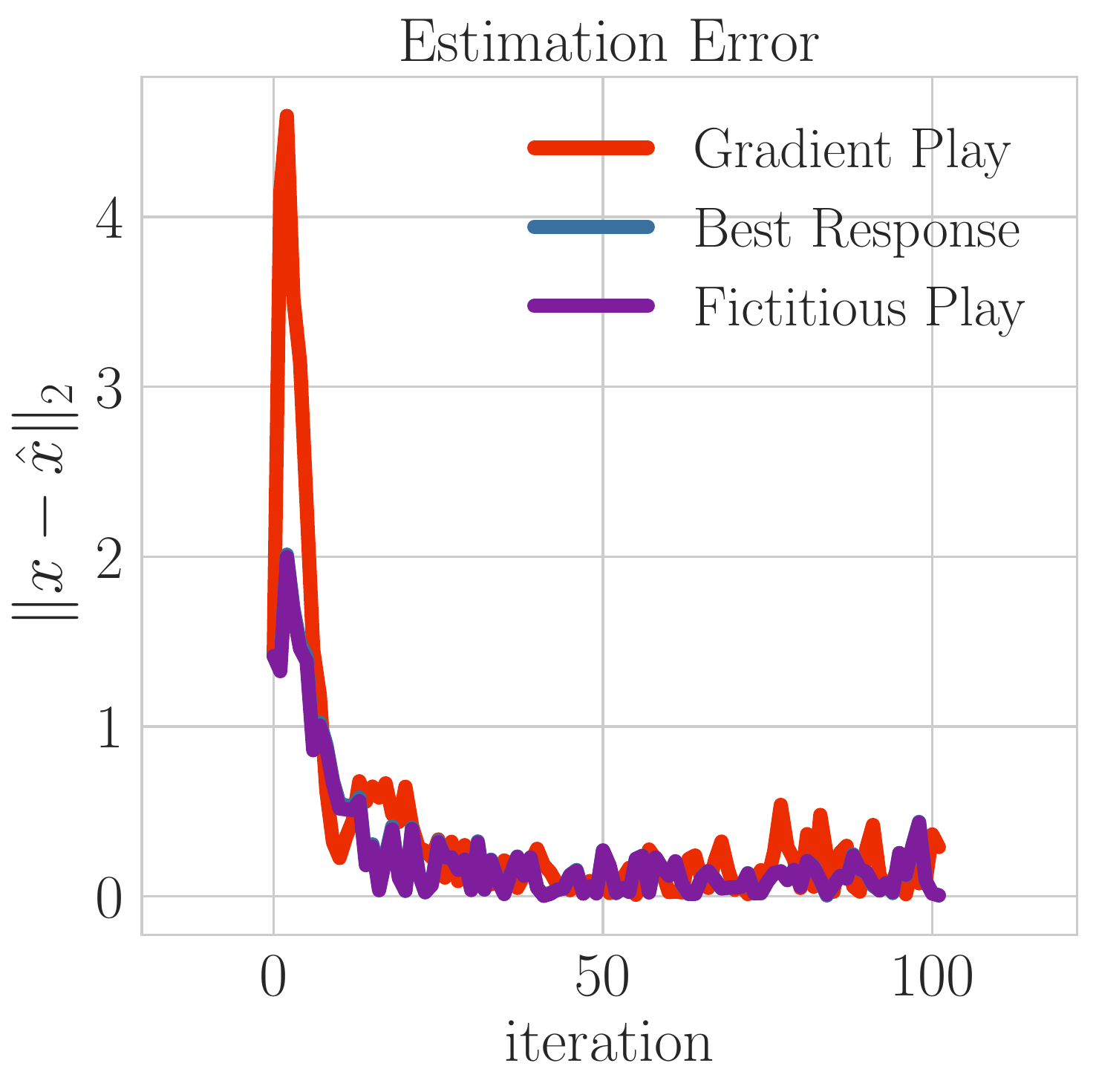}}\hfill
    \subfloat[][]{\label{fig:lin_incentive}\includegraphics[width=0.29\textwidth]{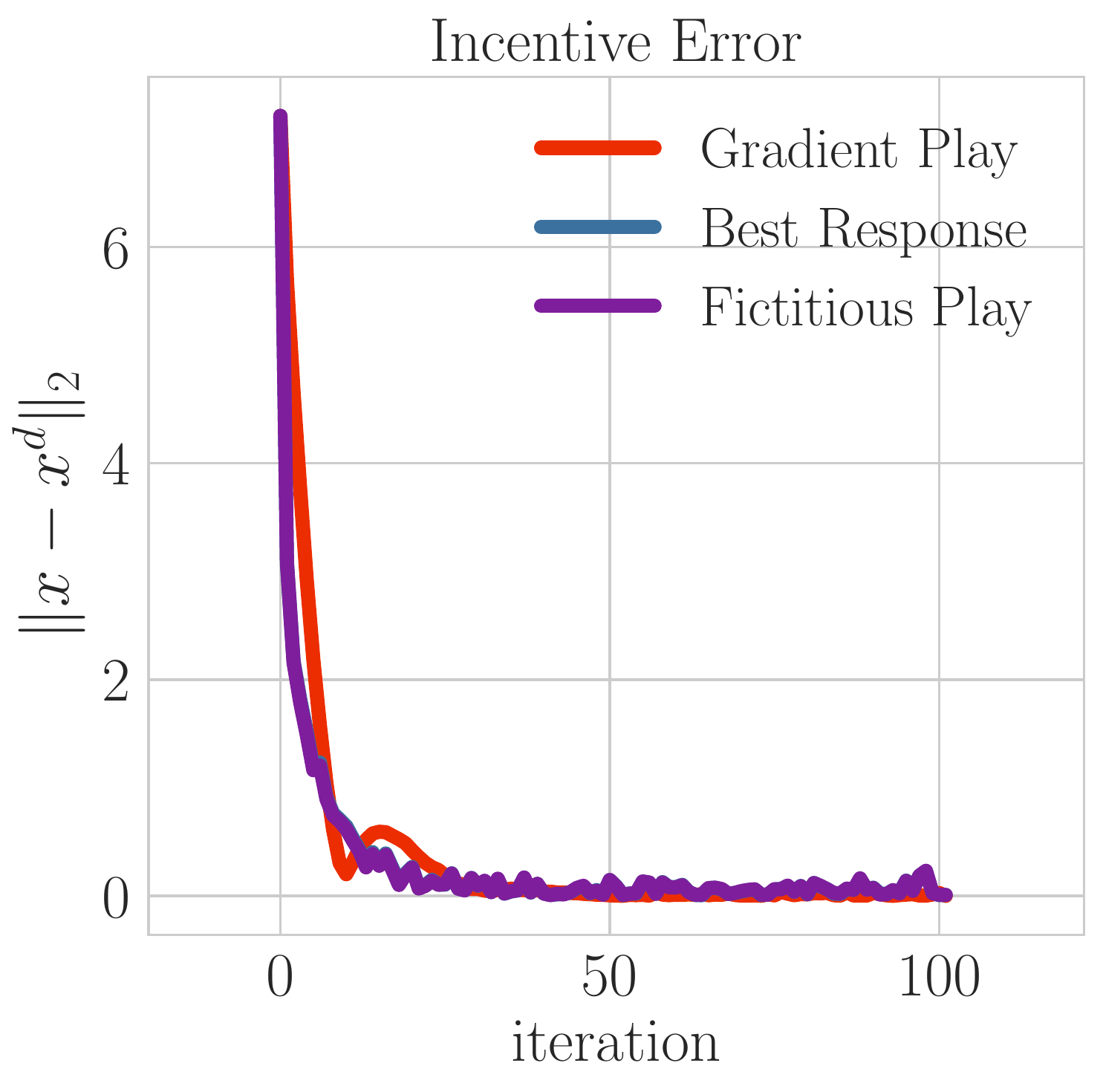}}\hfill$\
    $
    \caption{(a) Response estimation error for different update methods and (b) difference
between desired and true response for different update methods. Note that in (a)
and (b) the best response and fictitious play methods are nearly identical; this
is due to the fact that fictitious play is best response to the historical
average. We remark that it takes longer for fictitious play to converge given
the incentive and noise at each iteration.}
  \label{fig:br_return}
  \end{figure}

\subsubsection{Planner Agnostic to Myopic Update Rule}
We now transition to examples which demonstrate that our framework is suitable
for problem instances that go beyond our theoretical results. In particular, we
find even if the planner does not have access to the true nominal basis
functions that govern the dynamics of agent play, agent responses can still be
driven to a desired response for several forms of myopic updates. In these
examples, we again consider firms to have the linear marginal revenue function
from \eqref{eq:linmargrev}. We then allow the firm price to evolve according to
either the gradient play update from \eqref{eq:grad_play}, a best response update given by
\begin{equation*}
    x_i^{k+1} = \arg\max_{x_i} \ R_i(x_i,x^k_{-i}, \tau^k) + \gamma_i^k(x_i,
    {x}_{-i}^{k}),
    %+ \langle
    %\Psi_i(x_i,x^k_{-i}), \alpha_i^k
%\rangle,
\end{equation*}
or a fictitious play update given by
\begin{align*}
    x_i^{k+1}& =
    %\text{BR}\big(R_i(x_i, \bar{x}_{-i}^{(l,k)},\bar{x}_{-i}^{(l,k)}
    %\nu^k)+\gamma_i^k(x_i
    \arg\max_{x_i} \ R_i(x_i, \bar{x}_{-i}^{(l,k)}, \tau^k)+ \gamma_i^k(x_i,
    \bar{x}_{-i}^{(l,k)})
    %\langle \Psi_i(x_i, \bar{x}_{-i}^{(l,k)}), \alpha_i^k \rangle
%(x_i^k, \bar{x}_{-i}^{(l,k)})
\end{align*}
where $\bar{x}_{-i}^{(l,k)}$ is the average over the history from $l$ to $k$
with $0\leq l\leq k$. For all simulations we choose to use the complete history,
i.e.~$l = 0$.

We choose the set of incentive basis functions to be same as in
\eqref{eq:inc_basis}, but now consider the nominal basis functions to not
include the true basis functions and instead be the following set of Gaussian
radial basis functions:
\begin{equation}\label{eq:nominal}
%\begin{split}
    \left\{\begin{array}{lcl}\Phi_{i,1}(x_i^k, x_{-i}^k) &=& \exp(-\kappa (x_i^k)^2) \\
\Phi_{i,2}(x_i^k, x_{-i}^k) &=& \exp(-\kappa (x_i^k - x_{-i}^k)^2) \\
\Phi_{i,3}(x_i^k, x_{-i}^k) &=& \exp(-\kappa (x_{-i}^k)^2) \end{array}\right.
%\end{split}
\end{equation}
with $\kappa = 0.01$ as before. In this example, the regulator uses an
 update to
its estimate of the prices, given by 
\begin{equation}\label{eq:est_cost}
x_i^{k+1} = \langle \Phi_i(x_i^k, x_{-i}^k), \hat{\theta}_i^k \rangle + \tau^k +
\langle \Psi_i(x_i^k, x_{-i}^k), \alpha_i^{k} \rangle, 
\end{equation}
which is agnostic to the firm update method.
 We use $\tau\sim\mc{N}(5,0.1)$ and, as before, let the parameters of the marginal revenue functions be $\theta_1^\ast=(-1.2,-0.5)$, and $\theta_2^\ast=(0.3,-1)$. 

In Fig.~\ref{fig:br_return}, we show the firm prices and the regulator's
estimates of the firm prices when the firms use best response (each of the other methods have similar plots).
As was the case when the regulator had knowledge of the true nominal basis
functions, the regulator is able to quickly
estimate the true firm prices accurately and push the firm prices to the desired
prices. 
%Due to space constraints, we only show the firm prices for when the
%firms are playing according to the best response update, but we note that the
%prices are comparable when the firms use the gradient play update and the
%fictitious play update. 
%This can also be seen 
Figs.~\ref{fig:lin_estimation}
and \ref{fig:lin_incentive} show that the
regulator's estimate of the firm prices converges to the true firm
prices---i.e.~$\|\hat{x}^k- x\|_2\rar 0$---and the true firm responses converge
to the desired responses---i.e.~$\|x^k-x^d\|_2\rar 0$---at nearly the same rates. 

%\begin{figure}
%    \begin{center}
%        \includegraphics[width=0.85\columnwidth]{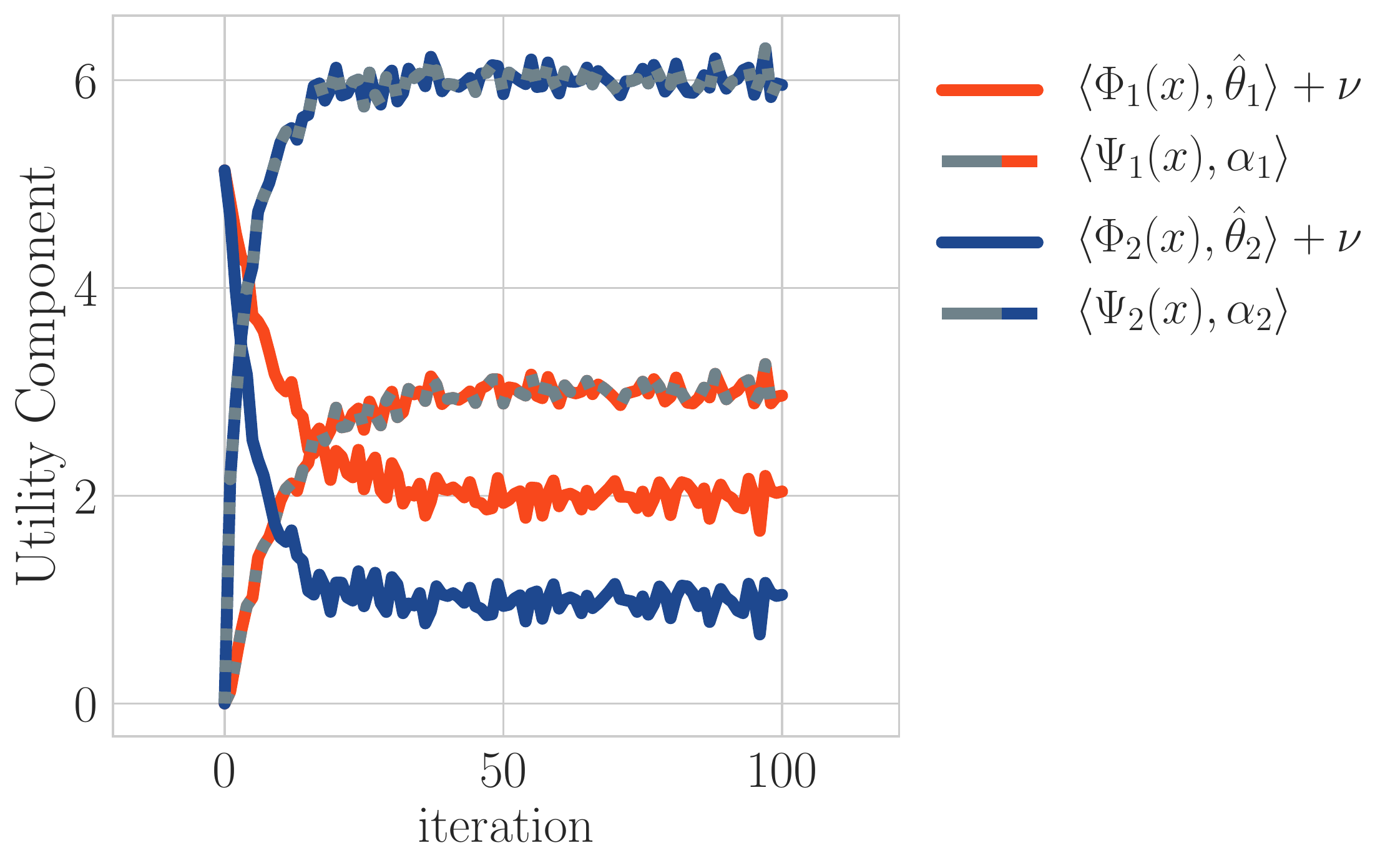}
%        \label{fig:bestresponse_cost}
%    \end{center}
%    \caption{\footnotesize Each cost component of the function that forms the
%    estimated response (price) the regulator maintains for each firm with the
%firms use the best response method. Note the regulator is agnostic to the firms
%update method.}
%\end{figure}

%\begin{figure}[t]
%    \begin{center}
%    \subfloat[][]{\label{fig:br_cost_comps}}{\includegraphics[width=0.5\textwidth]{fig/Bertrand/Linear/BR/cost.pdf}}\hfill
%    \end{center}
%    \caption{\footnotesize Each cost component of the function that forms the estimated price the regulator maintains for each firm when the firms use the best response update method.}
%\end{figure}

In Fig.~\ref{fig:br_cost_comps}, we show the nominal and incentive cost components that form the
regulator's estimate of the firm prices when the firms are use best response. Analogous to the firm prices, we note that
these components are similar for each of the firms' update methods. We highlight this to point out that the nominal cost and incentive cost components do not simply cancel one another out, indicating that a portion of the true dynamics are maintained. 

\subsubsection{Nonlinear Marginal Revenue}
To conclude our examples, we explore a Bertrand competition in which each firm has a nonlinear marginal revenue function given by
\begin{equation*}\label{eq:nonlinmargrev}
    M_i(x^k,\tau^k)=\log(x_i^k)+ \theta_{i,1}^\ast x_1^k+\theta_{i,2}^\ast
    x_2^k+\theta_{i,i}^\ast x_i^k+\theta_{i,3}+ \tau^k+1.
\end{equation*}
Here, we let the firm price evolve according to the gradient play update from
\eqref{eq:grad_play} with the nonlinear marginal revenue function and let the
regulator use the agnostic method to update its estimate of the firms' prices
from \eqref{eq:est_cost} with the nominal basis functions from
\eqref{eq:nominal} and the incentive basis functions from \eqref{eq:inc_basis}.
We use the same noise parameters as in the preceding examples and
%For these simulations, we once again use $\nu\sim\mc{N}(5,0.1)$ 
and let the parameters of the marginal revenue functions be $\theta_1^\ast=(-1.2,-0.5,7.5)$, and $\theta_2^\ast=(0.3,-1, 1.5)$. 
\begin{figure*}
    \begin{center}
        \subfloat[][]{\label{fig:br_cost_comps}\includegraphics[width=0.37\textwidth]{fig/Bertrand/Linear/BR/cost.pdf}}
        \hfill
%  $\qquad\qquad\qquad\qquad$
       \subfloat[][]{\label{fig:nonlin_est}\includegraphics[width=0.37\textwidth]{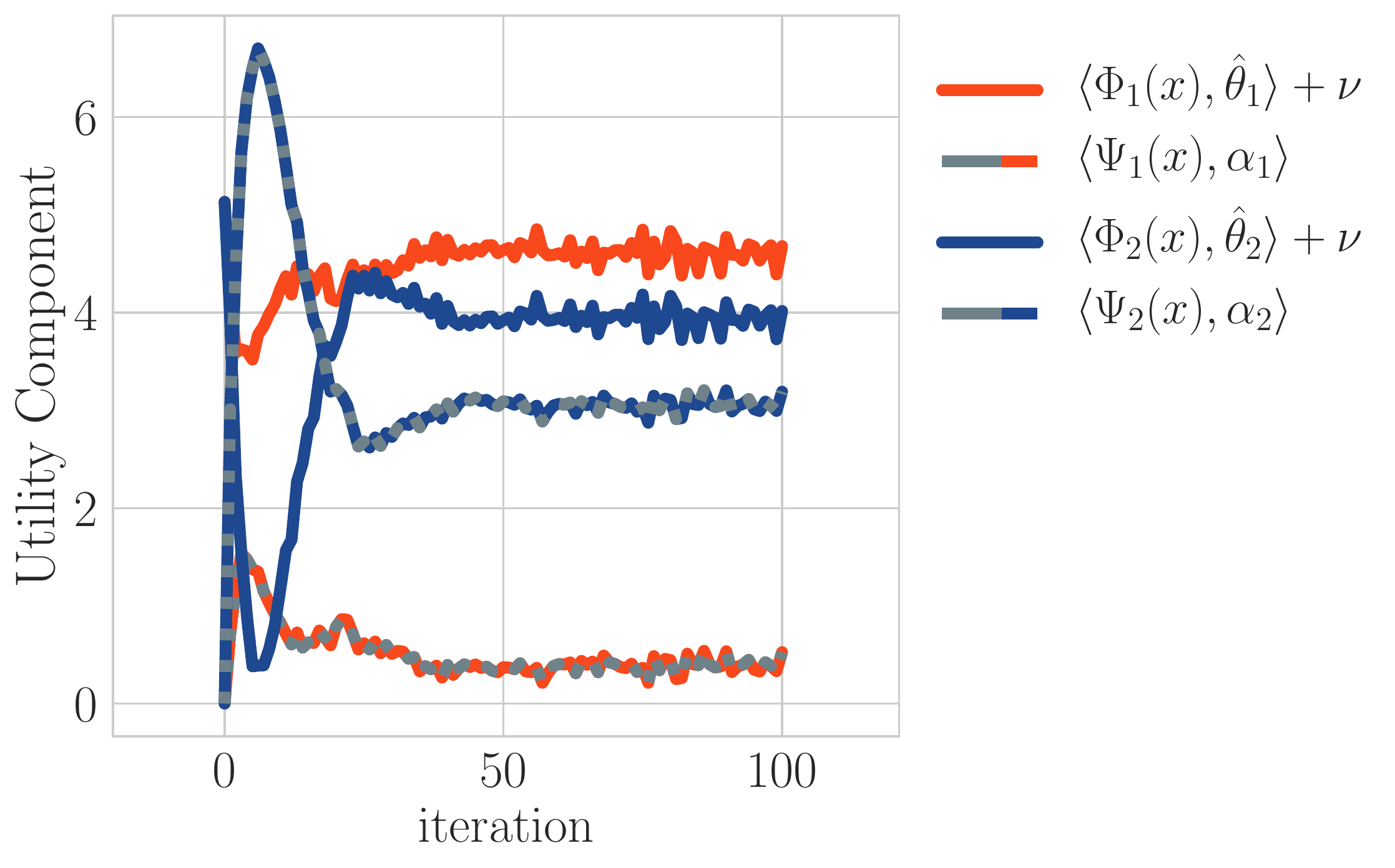}}\hfill
       \subfloat[][]{\label{fig:nonlin_returns}\includegraphics[width=0.25\textwidth]{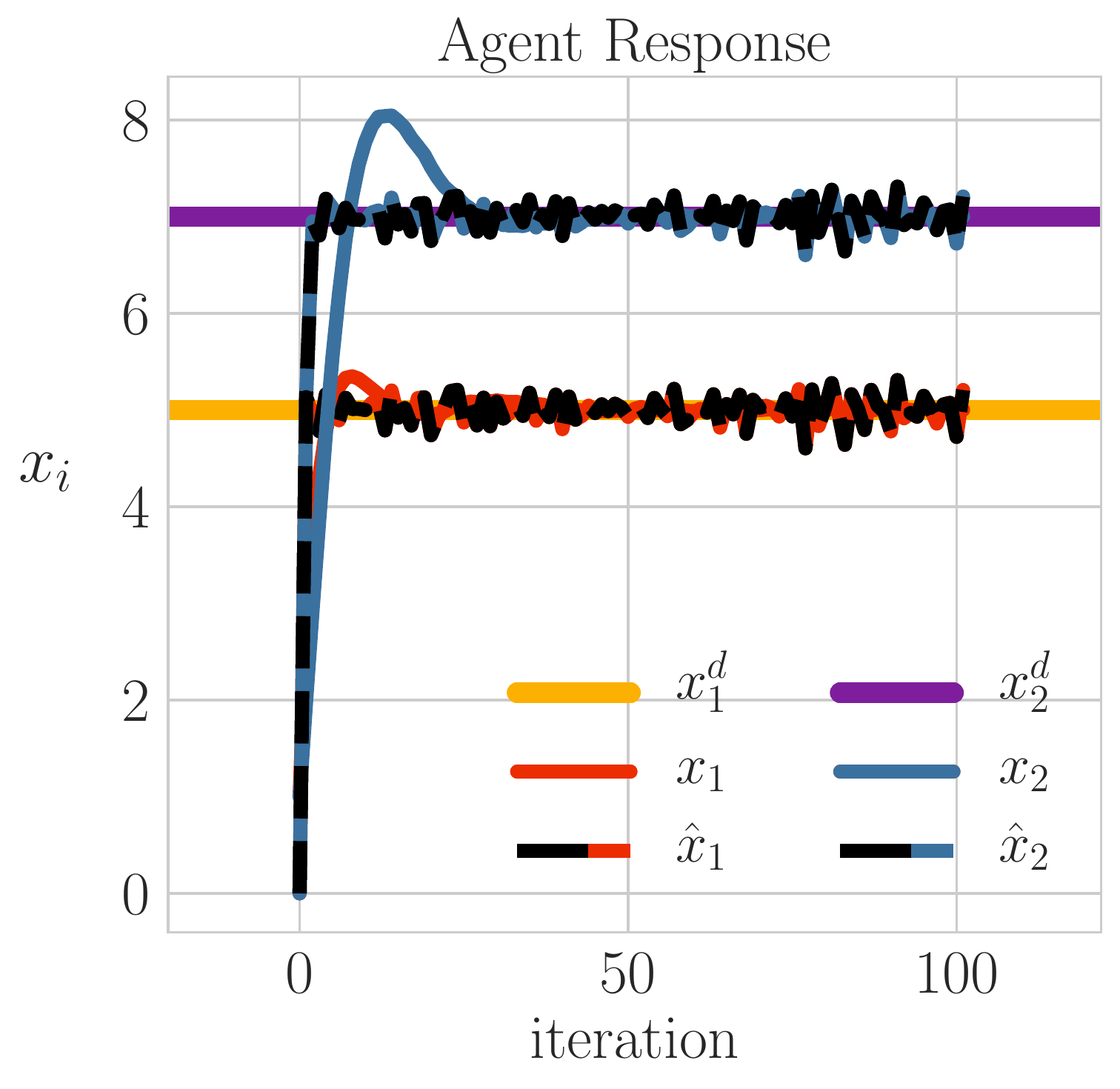}}
  %  \subfloat[][]{\label{fig:nonlin_inc}\includegraphics[width=0.286\textwidth]{}}\hfill
    \end{center}
    \caption{\footnotesize  Estimated cost components for each firm when
        the firms use the best response update method: (a) linear marginal
        revenue and (b) non-linear marginal revenue.
        (c) Agent responses for gradient play with a
    non-linear marginal revenue functions.}
        %, (b) estimated parameters of the
        %nominal cost for each agent, and (c) incentives issued at each
        %iteration. }
\end{figure*}
%\begin{figure*}
%    \begin{center}
%        \subfloat[][]{\label{fig:nonlin_returns}\includegraphics[width=0.2825\textwidth]{fig/Bertrand/Nonlinear/returns.pdf}}
%    \hfill
%    \subfloat[][]{\label{fig:nonlin_est}\includegraphics[width=0.308\textwidth]{}}\hfill
%    \subfloat[][]{\label{fig:nonlin_inc}\includegraphics[width=0.286\textwidth]{fig/Bertrand/Nonlinear/incentive_params_together.pdf}}\hfill
%    \end{center}
%    \caption{\footnotesize (a) Agent responses for gradient play with a
%        non-linear marginal revenue functions, (b) estimated paramters of the
%        nominal cost for each agent, and (c) incentives issued at each
%        iteration. }
%\end{figure*}

In Fig.~\ref{fig:nonlin_est}, we show the estimated revenue and incentive components for the
two firms.
In Fig.~\ref{fig:nonlin_returns}, we observe that the response of the firms to the regulator is marginally different from what we observed in previous examples when the firms had linear marginal revenue functions, but nonetheless the regulator's estimates of the firm prices converge to the true firm prices and the true firm prices converge to the desired prices.

%In Fig.~\ref{fig:nonlin_est} we show the estimated parameters for the nominal cost of each firm corresponding with the nominal basis functions. Although, we did not show the estimated parameters in the case of the linear marginal revenue functions, we point out that they were of comparable scale and behavior for each of the firm update methods. In Fig.~\ref{fig:nonlin_inc} we show the incentives issued at each iteration for each player. It is worth noting that in each example we have presented the incentives are always positive. 

\section{Conclusion}
\label{sec:discussion}
We present a new method for adaptive incentive design when a planner faces
competing agents of unknown type. Specifically, we provide an algorithm for
learning the agents' decision-making process and updating incentives. We provide
convergence guarantees on the algorithm. We show that under reasonable
assumptions, the agents' true response is driven to the desired response and,
under slightly more restrictive assumptions, the true preferences can be learned
asymptotically. We provide several numerical examples that both verify the
theory as well as demonstrate the performance when we relax the theoretical assumptions.

\appendix
%\section{Proofs}
\section{Proof of Lemma~\ref{lem:tvpmaps}}
\label{app:proofs}
The following proof uses Young's inequality: 
\eq{eq:young}{\textstyle
v_1^Tv_2\leq \|v_1\|_\ast\|v_2\|\leq \frac{1}{2}\left(
\frac{\|v_1\|_\ast^2}{\nu}+\nu \|v_2\|^2 \right)}
for any $v_1,v_2\in \mb{R}^m$.

\begin{proof}[Proof of Lemma~\ref{lem:tvpmaps}]
  The proof is the essentially the same as the proof for \cite[Lemma 2.1]{nemirovski:2009aa} with a few modifications. 
  
  Let $\tak{k}\in (\Theta^{k})^\circ$ and $\bth=P_{\tak{k}}^{k+1}(g)$. Note that
  \eq{eq:pup}{
  \bth\in \argmin_{\theta'\in
      \Theta^{k+1}}\left\{\braket{g}{\theta'-\tak{k}}+V(\tak{k},\theta')\right\}}
or equivalently, 
 \eq{eq:pup2}{
  \bth\in \argmin_{\theta'\in
      \Theta^{k+1}}\left\{\beta(\theta')-\braket{\nabla\beta(\tak{k})-g}{\theta'}\right\}}
  where the latter form tells us that $\beta$ is differentiable at $\bth$ and
  $\bth\in(\Theta^{k+1})^\circ$. Since $\nabla_2V(\tak{k},
  \bth)=\nabla\beta(\bth)-\nabla\beta(\tak{k})$, the optimality conditions
  for~\eqref{eq:pup2} imply that 
%  \eq{eq:optcon}{
  \begin{equation}      \braket{\nabla\beta(\bth)-\nabla\beta(\tak{k})+g}{\bth-\theta}\leq 0, \ \ \forall
  \theta\in \Theta^{k+1}
  \label{eq:optcon}
  \end{equation}
  Note that this is where the proof of \cite[Lemma 2.1]{nemirovski:2009aa}
  and the current proof are different. The above inequality holds here for all
  $\theta\in \Theta^{k+1}$ whereas in the proof of
  \cite[Lemma 2.1]{nemirovski:2009aa}---using the notation of the current Lemma---the
  inequality would have held for all $\theta\in \Theta^{k}$. In particular, we
  need the inequality to hold for $\theta^\ast$ and it does since by assumption
   $\theta^\ast\in \Theta^{k+1}$ for each
  $k$.

  First,
$\braket{\nabla \beta(\theta^k)}{\theta^{k+1}-\theta^k}\leq
  \beta(\theta^{k+1})-\beta(\theta^k)$.
  %%% LILY: do not erase the above equation or this comment. It took me forever to remember that
  %%% the reason is due to convexity\ldots
  Hence, for $\theta^\ast\in\Theta^{k+1}$, we have that
  \begin{align*}
    &V(\bth, \ts)-V(\tak{k}, \ts)\\
    &=\beta(\ts)-\braket{\nabla
    \beta(\bth)}{\ts-\bth)}-\beta(\bth)\notag\\
    &\qquad-(\beta(\ts)-\braket{\nabla\beta(\tak{k})}{\ts-\tak{k}}-\beta(\tak{k}))\\
    &=
    \braket{\nabla\beta(\bth)-\nabla\beta(\tak{k})+g}{\bth-\ts}+\braket{g}{\ts-\bth)}\\
    &\leq \braket{g}{\ts-\bth}-V(\tak{k}, \bth)
    \label{eq:lem-ineqs}
  \end{align*}
  where the last inequality holds due to~\eqref{eq:optcon}. By~\eqref{eq:young},
  we have that 
  \eq{eq:yres}{
      \textstyle  \braket{g}{\tak{k}-\bth}\leq
  \frac{1}{2\nu}\|g\|_\ast^2+\frac{\nu}{2}\|\tak{k}-\bth\|^2.}
  Further, $\frac{\nu}{2}\|\tak{k}-\bth\|^2\leq V(\tak{k}, \bth)$ since
  $V(\tak{k}, \cdot)$ is strongly convex. Thus, 
  \begin{align*}
    &V(\bth, \ts)-V(\tak{k}, \ts)\\
    &\qquad\leq \braket{g}{\ts-\bth}-V(\tak{k}, \bth)\\
   &\qquad=\braket{g}{\ts-\tak{k}}+\braket{g}{\tak{k}-\bth}-V(\tak{k}, \bth)\\
   &\qquad\leq\textstyle \braket{g}{\ts-\tak{k}}+\frac{1}{2\nu}\|g\|_\ast^2
  \end{align*}
  so that 
  \[\textstyle V(P_{\theta^{(k)}}^{k+1}(g), \ts)\leq V(\theta^{(k)},
  \ts)+\braket{g}{\ts-\theta^{(k)}}+\frac{1}{2\nu}\|g\|_\ast^2.\]
%  \eq{eq:reslem}{\textstyle V(P_{\theta^{(k)}}^{k+1}(g), \ts)\leq V(\theta^{(k)},
%    \ts)+g^T(\ts-\theta^{(k)})+\frac{1}{2\nu}\|g\|_\ast^2.
%    }
  %We can apply Lemma~\ref{lem:nemi} to get that
  %\[ V(P_{\theta_i^{(k)}}^{k}(g), \theta_i)\leq V(\theta_i^{(k)},
  %  \theta_i)+g^T(\theta_i-\theta_i^{(k)})+\frac{1}{2\nu}\|g\|_\ast^2\]
  %  and since the sets are nested, it is clear that
  %  \[V(P_{\theta_i^{(k)}}^{k+1}(g), \theta_i)\leq V(P_{\theta_i^{(k)}}^{k}(g),
  %  \theta_i).\]
  %  Hence, we have the result.
\end{proof}

\section{Proof of Theorem~\ref{thm:noise-p}}
%We need the following proposition.
\begin{proposition}[\cite{neveu:1975aa}] Let $\{x_t\}$ be a zero conditional
  mean sequence of random variables adapted to $\{\F_t\}$. If $\sum_{t=0}^\infty
  \frac{1}{t^2}\mb{E}[x_t^2|\F_{t-1}]<\infty$ a.s, then
  %\eq{eq:zme}{\textstyle\sum_{t=0}^\infty \frac{1}{t^2}\mb{E}[x_t^2|F_{t-1}]<\infty \ \
  %\text{a.s.}}
  %then
  $\lim_{N\rar\infty}\frac{1}{N}\sum_{t=1}^Nx_t=0$ a.s.
%  \eq{eq:res}{\textstyle\lim_{N\rar\infty}\frac{1}{N}\sum_{t=1}^Nx_t=0 \ \ \text{a.s.}}
  \label{prop:neveu}
\end{proposition}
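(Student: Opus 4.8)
The plan is to derive this as a standard consequence of the a.s.~convergence theorem for square-integrable martingales together with Kronecker's lemma. First I would form the weighted partial-sum process
\begin{equation*}
M_N=\sum_{t=1}^N \frac{x_t}{t}.
\end{equation*}
Since $\mb{E}[x_t|\F_{t-1}]=0$, each increment $\Delta M_t=x_t/t$ satisfies $\mb{E}[\Delta M_t|\F_{t-1}]=0$, so $\{M_N\}$ is a martingale adapted to $\{\F_N\}$ (with $M_0=0$). Its predictable quadratic variation is
\begin{equation*}
\langle M\rangle_N=\sum_{t=1}^N \mb{E}[(\Delta M_t)^2|\F_{t-1}]=\sum_{t=1}^N \frac{1}{t^2}\,\mb{E}[x_t^2|\F_{t-1}],
\end{equation*}
which by the hypothesis converges a.s.~to a finite limit $\langle M\rangle_\infty$.

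Next I would invoke the fact that a martingale converges a.s.~on the event $\{\langle M\rangle_\infty<\infty\}$. If one wants a self-contained argument rather than a direct citation, the clean route is a localization: define the stopping times $\tau_K=\inf\{N:\langle M\rangle_{N+1}>K\}$ (each is a stopping time because $\langle M\rangle$ is predictable). Then the stopped martingale $M_{N\wedge\tau_K}$ has $\mb{E}[M_{N\wedge\tau_K}^2]=\mb{E}[\langle M\rangle_{N\wedge\tau_K}]$ bounded by a constant governed by $K$, so it is bounded in $L^2$ and hence converges a.s.~by Doob's $L^2$-bounded martingale convergence theorem. Because $\{\langle M\rangle_\infty<\infty\}=\bigcup_K\{\tau_K=\infty\}$ has probability one under our assumption, a.s.~convergence of $M_N$ itself to a finite random variable follows.

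Finally, having established that $\sum_{t\geq 1}x_t/t$ converges a.s., I would apply Kronecker's lemma with the weights $b_t=t$, which increase to $\infty$: convergence of $\sum_t x_t/b_t$ forces $\frac{1}{b_N}\sum_{t=1}^N b_t\cdot(x_t/b_t)=\frac{1}{N}\sum_{t=1}^N x_t\rar 0$ a.s., which is exactly the claim. The main obstacle is the middle step, namely passing from a merely pathwise (random, a.s.) finiteness of $\langle M\rangle_\infty$ to genuine a.s.~convergence of $M_N$, since we do not have a uniform $L^2$ bound on the martingale; the stopping-time localization is precisely what bridges this gap, and everything else is routine bookkeeping.
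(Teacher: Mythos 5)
Your proof is correct, and it is the canonical argument behind the result, which the paper itself does not prove at all---Proposition~\ref{prop:neveu} is invoked purely by citation to Neveu (1975), so there is no in-paper proof to diverge from. Your route (the weighted martingale $M_N=\sum_{t=1}^N x_t/t$, a.s.~convergence of a martingale on the event $\{\langle M\rangle_\infty<\infty\}$ via localization, then Kronecker's lemma with weights $b_t=t$) is exactly the standard derivation of this martingale strong law, and your localization is sound: $\{\tau_K\leq N\}=\{\langle M\rangle_{N+1}>K\}\in\F_N$ by predictability of $\langle M\rangle$, the stopped quadratic variation is bounded by $K$, and Doob's $L^2$ convergence theorem applies to each stopped process, with $\{\langle M\rangle_\infty<\infty\}=\bigcup_K\{\tau_K=\infty\}$ holding up to a null set.

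One fine point deserves explicit mention. The hypothesis only gives $\mb{E}[x_t^2|\F_{t-1}]<\infty$ almost surely; it does not give $\mb{E}[x_t^2]<\infty$, nor even $\mb{E}[|x_t|]<\infty$, so your opening claim that $\{M_N\}$ ``is a martingale'' (and the identity $\mb{E}[M_{N\wedge\tau_K}^2]=\mb{E}[\langle M\rangle_{N\wedge\tau_K}]$ for the unstopped objects it presupposes) is not literally justified as stated. Fortunately your own localization repairs this: since $\{t\leq\tau_K\}\in\F_{t-1}$, one has
\begin{equation*}
\mb{E}\bigl[(\Delta M_t)^2\mathbf{1}\{t\leq\tau_K\}\bigr]=\mb{E}\bigl[\mathbf{1}\{t\leq\tau_K\}\,\mb{E}[(\Delta M_t)^2|\F_{t-1}]\bigr]\leq K,
\end{equation*}
so the \emph{stopped} increments are genuinely square-integrable and $M_{N\wedge\tau_K}$ is an honest $L^2$-bounded martingale even when $M_N$ itself is not integrable. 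Rephrasing the argument so that the martingale property is asserted only for the stopped processes makes the proof airtight; everything else, including the Kronecker step, is routine and correct.
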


%\textcolor{red}{swap this proof for noise-p proof}
\begin{proof}[Proof of Theorem~\ref{thm:noise-p}]
%  The proof follows a similar technique to that presented in~\cite[Chapter
%  13.4]{kumar:1986aa}.
 % Since $\ell(\theta)$ is convex, we have
 % \[\ell(\tit{i}{t})-\ell(\theta_i)\leq \nabla
 % \ell(\theta_i)(\ta{i}-\tit{i}{t})\]
 Starting from Lemma~\ref{lem:tvpmaps}, we have
 \begin{align*}
   \mb{E}[V_{t+1}(\ts_i)|\F_t]&\leq
   V_t(\ts_i)-\eta_t\braket{\xi_i^t}{\Delta\tit{i}{t}}(\mb{E}[y_i^{t+1}-\braket{\xi_i^t}{\tit{i}{t}}
   \textstyle|\F_t])\\
   &\qquad+\frac{\eta_t^2}{2\nu}\|\xi_i^t\|_\ast^2\left((\|\mb{E}[y_i^{t+1}-(\xi_i^t)^T\tit{i}{t}|\F_t]\|
    )^2+\sigma^2\right)\notag\\
%    &\leq V_t(\ts_i)-\eta_t\left(
%    (\Delta\tit{i}{t})^T\xi_i^t-\frac{\eta_t}{2\nu}\|\xi_i^t\|_\ast^2\mb{E}[y_i^{t+1}-(\xi_i^t)^T\tit{i}{t}|F_t]
%    \right)\notag\\
%    &\qquad
%    \cdot\mb{E}[y_i^{t+1}-(\xi_i^t)^T\tit{i}{t}|F_t]+\frac{\eta_t^2}{2\nu}\|\xi_i^t\|_\ast^2\sigma^2\\
    &\leq\textstyle V_t(\ts_i)-\big(
    \eta_t-\frac{\eta_t^2\|\xi_i^t\|_\ast^2}{2\nu} \big)\left(
    \|\mb{E}[y_i^{t+1}-(\xi_i^t)^T\tit{i}{t}|\F_t]\|
    \right)^2\textstyle+\frac{\eta_t^2}{2\nu}\|\xi_i^t\|_\ast^2\sigma^2\\
    &\leq  \textstyle V_t(\ts_i)-\big(
    \eta_t-\frac{\eta_t^2}{2\nu}\tilde{c}_1 \big)\big(
    \|\mb{E}[y_i^{t+1}-(\xi_i^t)^T\tit{i}{t}|\F_t]\|
    \big)^2\textstyle+\frac{\eta_t^2}{2\nu}\tilde{c}_1\sigma^2.
   \label{eq:noiseeq-1}
 \end{align*}
 By the assumptions that $\eta_t-\frac{\eta_t^2}{2\nu}\tilde{c}_1>0$ and
 $\sum_{t=1}^\infty\eta_t^2<\infty$, we can use the fact that
 $y_i^{t+1}=\braket{\xi_i^t}{\theta_i^\ast}+w_i^{t+1}$ and apply the almost supermartingale
 convergence theorem~\cite{robbins:1985aa} to get that  
 \begin{equation*}\textstyle
 \sum_{t=1}^\infty \big(
    \eta_t-\frac{\eta_t^2}{2\nu}\tilde{c}_1 \big)\left(
    \|\mb{E}[w_i^{t+1}-\braket{\xi_i^t}{\tit{i}{t}-\theta_i^\ast}|\F_t]\|
    \right)^2<\infty
    \label{eq:sumit}
  \end{equation*}
 a.s.
    and that $V_t(\ts_i)$
    converges a.s.
    %which is to say that $\|\theta_i^{(t)}-\theta_i\|^2$
    %converges almost surely.

    Now, we argue~\eqref{eq:errorcon-2} holds; the argument follows that which is
    presented in~\cite[Chapter 8]{goodwin:1984aa}. To do this, we first show that 
    \begin{equation}\textstyle
      %\lim_{T\rar\infty} \frac{1}{T}\sum_{t=0}^{T-1}
%    \|y_i^{t+1}-(\xi_i^{t})^T\tit{i}{t}-w_i^{t+1}\|^2=0 \ \ \text{a.s.}
%
      \lim_{T\rar\infty} \frac{1}{T}\sum_{t=0}^{T-1}
      \|\braket{\xi_i^{t}}{\tit{i}{t}-\theta_i^\ast}\|^2=0 \ \ \text{a.s.}
    \label{eq:8576}
  \end{equation}
   Note that
    \eqref{eq:sumit} implies that
    \begin{equation}\textstyle
    \lim_{T\rar\infty}\sum_{t=0}^{T-1}
    \frac{1}{r_t}\| \braket{\xi_i^t}{\tit{i}{t}-\theta_i^\ast}   \|^2<\infty\ \ \text{a.s.}
      %   \lim_{T\rar\infty}\sum_{t=0}^{T-1}
   %   \frac{1}{r_t}\|y_i^{t+1}-(\xi_i^{t})^T\tit{i}{t}-w_i^{t+1}   \|^2<\infty\ \ \text{a.s.}
      \label{eq:sumit-v2}
    \end{equation}
  Where $r_t=(\eta_t-\frac{\eta_t^2}{2\nu}\tilde{c}_1)^{-1}$. %and
    Suppose that $r_t$ is bounded---i.e.~there exists $K_3$ such that
    $r_t<K_3<\infty$. In this case, it is
    immediate from \eqref{eq:sumit-v2} that
\begin{equation}\textstyle
\lim_{T\rar\infty}\frac{1}{K_3}\sum_{t=0}^{T-1}
{\|\braket{\xi_i^t}{\tit{i}{t}-\theta_i^\ast}   \|^2}<\infty\ \ \text{a.s.}
   %  \lim_{T\rar\infty}\frac{1}{K_3}\sum_{t=0}^{T-1} {(y_i^{t+1}-(\xi_i^{t})^T\tit{i}{t}-w_i^{t+1}   )^2}<\infty\ \ \text{a.s.}
      \label{eq:sumit-v3}
    \end{equation}
    so that \eqref{eq:8576} follows trivially.
On the other hand, suppose $r_t$ is unbounded. Then we can apply Kronecker's
Lemma~\cite{kumar:1986aa} to conclude that
\eq{eq:kronekit}{\textstyle
%\lim_{T\rar\infty}\frac{1}{r_T}\sum_{t=0}^{T-1}{(y_i^{t+1}-(\xi_i^{t})^T\tit{i}{t}-w_i^{t+1}
%)^2}=0\ \ \text{a.s.}
    \lim_{T\rar\infty}\frac{1}{r_T}\sum_{t=0}^{T-1}{\|\braket{\xi_i^{t}}{\tit{i}{t}-\theta_i^\ast}
\|^2}=0\ \ \text{a.s.}}
Hence, from \eqref{eq:rbound-22}, we have 
\begin{equation}\textstyle
    \lim_{T\rar\infty}\frac{\frac{1}{T}\sum_{t=0}^{T-1}\|\braket{\xi_i^t}{\tit{i}{t}-\theta_i^\ast}
\|^2}{K_1+\frac{K_2}{T}\sum_{t=0}^{T-1}\|\braket{\xi_i^t}{\tit{i}{t}-\theta_i^\ast}   \|^2}=0\ \ \text{a.s.}
%
%  \lim_{T\rar\infty}\frac{\frac{1}{T}\sum_{t=0}^{T-1}(y_i^{t+1}-(\xi_i^{t})^T\tit{i}{t}-w_i^{t+1}
%  )^2}{K_1+\frac{K_2}{T}\sum_{t=0}^{T-1}(y_i^{t+1}-(\xi_i^{t})^T\tit{i}{t}-w_i^{t+1}
%  )^2}=0\ \ \text{a.s.}
  \label{eq:rbound-2}
\end{equation}
so that \eqref{eq:8576} follows immediately. Note that
\begin{align*}
  \mb{E}[(y_i^{t+1}-(\xi_i^t)^T\tit{i}{t})^2|\F_t]\textstyle&=\mb{E}[(y_i^{t+1}-\braket{\xi_i^{t}}{\tit{i}{t}}-w_i^{t+1}+w_i^{t+1})^2|\F_t]\\
 &=\mb{E}[(y_i^{t+1}-\bbraket{\xi_i^{t}}{\theta_i^{(t)}}-w_i^{t+1})^2+(w_i^{t+1})^2\\
 &\qquad+2\bbraket{y_i^{t+1}-\bbraket{\xi_i^{t}}{\theta_i^{(t)}}-w_i^{t+1}}{w_i^{t+1}}|\F_t]\notag
%    \mb{E}[((\xi_i^t)^T(\theta_i^\ast-\tit{i}{t}))^2|F_t]&=\mb{E}[((\xi_i^{t})^T(\theta_i^\ast-\tit{i}{t})-w_i^{t+1}+w_i^{t+1})^2|F_t]\\
%  &=\mb{E}[((\xi_i^{t})^T(\theta_i^\ast-\theta_i^{(t)})-w_i^{t+1})^2+(w_i^{t+1})^2\\
%  &\qquad+2((\xi_i^{t})^T(\theta_i^\ast-\theta_i^{(t)})-w_i^{t+1})^Tw_i^{t+1}|F_t]\notag
%  \label{eq:equalities}
\end{align*}
Since $y_i^{t+1}-w_i^{t+1}$ and $\braket{\xi_i^t}{\tit{i}{t}}$ are $\F_t$--measurable 
%Since $(\xi_i^t)^T(\theta_i^\ast-\tit{i}{t})$ is $F_t$
 and $\mb{E}[w_i^{t+1}|\F_t]=0$ a.s., we have 
\begin{align*}
%  \mb{E}[((\xi_i^t)^T(\theta_i^\ast-\tit{i}{t}))^2|F_t]=((\xi_i^{t})^T(\theta_i^\ast-\tit{i}{t})-w_i^{t+1})^2-\mb{E}[(w_i^{t+1})^2|F_t].
    \mb{E}[(y_i^{t+1}-\braket{\xi_i^t}{\tit{i}{t}})^2|\F_t]&=(y_i^{t+1}-\braket{\xi_i^{t}}{\tit{i}{t}}-w_i^{t+1})^2-\mb{E}[(w_i^{t+1})^2|\F_t].
  \label{eq:expect-w}
\end{align*}
Replacing $y_i^{t+1}=\braket{\xi_i^t}{\theta_i^\ast}+w_i^{t+1}$ and using
\eqref{eq:8576}, we see that \eqref{eq:errorcon-2} holds since
$\mb{E}[(w_i^{t+1})^2|\F_t]=\sigma^2$
a.s.

Finally, if   $\sup_t \mb{E}[(w_i^{t+1})^4| \F_t]<+\infty$ almost surely, then
by
Proposition~\ref{prop:neveu}, we have
\begin{equation*}\textstyle
  \lim_{T\rar\infty}
  \frac{1}{T}\sum_{t=0}^{T-1}\|w_i^{t+1}-\braket{\xi_i^{t}}{\tit{i}{t}-\theta_i^\ast}\|^2=\sigma^2\ \ \text{a.s.}
\label{eq:errorcon-1}
\end{equation*}
which concludes the proof.
\end{proof}

%max_{|x|=1} x^TB^TBx = max_{x} x^TB^TBx/|x|^2 < c2 = c2 x^Tx/|x|^2
%=> max_{x} x^TB^TBx < c2 x^Tx
%=>  for all x x^TB^TBx < x^TX c2

%okay give me a sec
%max_{|x|=1} x^TB^TBx  < c2
%=>     max_{x} (x^TB^TBx)/|x|^2  < c2(x^Tx)/|x|^2
%(here we're just enforcing the condition that |x|=1 on the left hand side and multiplying by 1 on the right hand side)
%
%=>   max_{x} x^TB^TBx < c2 x^Tx
%if it's true for the maximum over x it's true for all x

%so maybe you can rewrite your condition as:
%c1<max_{|x|=1}(x^TBB^Tx)<c2
%=>    max(sigma(BB^T)) < c2
%max(sigma(BB^T)) = max(sigma(B^TB))
%=>    max(sigma(B^TB)) < c2
%=>    max_{|x|=1} x^TB^TBx  < c2    which is equivalent to B^TB < c2I

\bibliographystyle{IEEEtran}
\bibliography{2017TACLTrefs}

\end{document}